\pgfplotsset{compat=1.7}
\def\eg{{\em e.g.,}\xspace}
	\definecolor{sthlmLightBlue}{RGB}{214,237,252} 
	\definecolor{sthlmBlue}{RGB}{0,110,191} 
	\definecolor{sthlmLightGreen}{RGB}{213,247,244} 
	\definecolor{sthlmGreen}{RGB}{0,134,127} 
	\definecolor{sthlmLightGrey}{RGB}{213,217,225} 
	\definecolor{sthlmGrey}{RGB}{245,243,238} 
	\definecolor{sthlmDarkGrey}{RGB}{51,51,51} 
	\definecolor{sthlmLightOrange}{RGB}{255,215,210} 
	\definecolor{sthlmOrange}{RGB}{221,74,44} 
	\definecolor{sthlmLightPurple}{RGB}{241,230,252} 
	\definecolor{sthlmPurple}{RGB}{93,35,125} 
	\definecolor{sthlmLightRed}{RGB}{254,222,237} 
	\definecolor{sthlmRed}{RGB}{196,0,100} 
	\definecolor{sthlmYellow}{RGB}{252,191,10} 
  \newcommand{\myrowcolour}{\rowcolor[gray]{0.925}}
\newtheorem{theorem}{Theorem}[section]
\newtheorem{proposition}[theorem]{Proposition}
\newtheorem{lemma}[theorem]{Lemma}
\newtheorem{corollary}[theorem]{Corollary}
\newtheorem{definition}{Definition}
\newtheorem{example}{Example}
\newcommand{\colorL}{{\pl1{L}}}
\newcommand{\colorF}{{\pl2{F}}}
\newcommand{\colorI}{i}
\DeclareMathOperator*{\argmax}{\arg\!\max}
\tikzset{
  riverflow/.style={
    -, >=stealth,
    line cap=round,
    line join=round,
    line width=10pt,  
    opacity=0.3       
  }
}
\newcommand\Choose{\mathtt{Choose}}
\newcommand\unique{\mathtt{Unique}}
\newcommand\isInteresting[1]{\mathtt{isInteresting}[#1]}
\newcommand\isKeptInPruning[1]{\mathtt{isKept}[#1]}
\title{On Dynamic Programming Theory for Leader–Follower Stochastic Games}
\author{
    Jilles Steeve Dibangoye\textsuperscript{\rm 1},
    Thibaut Le Marre\textsuperscript{\rm 2},
    Ocan Sankur\textsuperscript{\rm 3},
    François Schwarzentruber\textsuperscript{\rm 4}}
\begin{document}

\maketitle

\begin{abstract}
Leader–follower general-sum stochastic games (LF-GSSGs) model sequential decision-making under asymmetric commitment, where a leader commits to a policy and a follower best responds, yielding a strong Stackelberg equilibrium (SSE) with leader-favourable tie-breaking. This paper introduces a dynamic programming (DP) framework that applies Bellman recursion over credible sets—state abstractions formally representing all rational follower best responses under partial leader commitments—to compute SSEs. We first prove that any LF-GSSG admits a lossless reduction to a Markov decision process (MDP) over credible sets. We further establish that synthesising an optimal memoryless deterministic leader policy is NP-hard, motivating the development of \(\varepsilon\)-optimal DP algorithms with provable guarantees on leader exploitability. Experiments on standard mixed-motive benchmarks—including security games, resource allocation, and adversarial planning—demonstrate empirical gains in leader value and runtime scalability over state-of-the-art methods.

\end{abstract}

\section{Introduction}
\label{sec:intro}

Leader–follower general-sum stochastic games (LF-GSSGs) model sequential planning under asymmetric commitment, where a leader commits to a policy and a follower best responds. The solution concept of interest is the strong Stackelberg equilibrium (SSE), which assumes that the follower resolves ties in favour of the leader. LF-GSSGs arise in numerous applications requiring robust decision-making under strategic uncertainty, including adversarial patrolling, network security, infrastructure protection, and cyber-physical planning~\cite{Yin_Jiang_Tambe_Kiekintveld_Leyton-Brown_Sandholm_Sullivan_2012,AnKKSSTV12,1503282,BasilicoNG16,BasilicoCG16,chung2011search}.

Dynamic programming (DP) has proven foundational in planning under uncertainty for Markov decision processes (MDPs)~\cite{bellman,Puterman1994} and zero-sum stochastic games (zs-SGs)~\cite{shapley_1953_stochastic,HorBos-aaai19,ZHENG2022110231,HorakBKK23}, where recursive decompositions and Markovian policies enable tractable value computation. However, in LF-GSSGs, the asymmetry of commitment and general-sum structure fundamentally alter the nature of planning: the follower's response may depend on the entire interaction history, and the leader must anticipate all such best responses. Subgame decomposability fails, standard state-based recursions no longer hold, and Markov policies are not sufficient for optimality~\cite{10.5555/2900929.2900938,lopez2022stationary}.

These limitations manifest even in simple deterministic environments. In the centipede game of Figure~\ref{fig:centipede:game}, backward induction predicts that rational agents will terminate the game immediately, despite the existence of more rewarding cooperative trajectories~\cite{rosenthal1981games,binmore1987modeling,mckelvey1992experimental,megiddo1986remarks,reny1988rationality,kreps2020course}. This outcome illustrates the failure of Bellman-style reasoning in the presence of social dilemmas,\footnote{Situations where mutual cooperation yields the highest joint payoff, yet each agent has incentives to defect unilaterally.} a limitation that only intensifies under stochastic transitions and asymmetric commitment.

\begin{figure}[H]
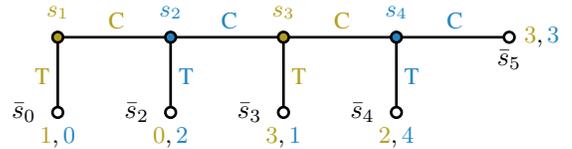

\centering
\begin{gametree}[index=action, xscale=1.5, node autolabel=$s_{\enum}$]
    \branch[root=w, player=1, offset={0.5,0}, sticky] from (0,0) to[v={1,0},h=1] {T(1,0); C};
    \branch[player=2] from (wC) to {T(0, 2); C};
    \branch[player=1] from (wCC) to {T(3, 1); C};
    \branch[player=2] from (wCCC) to[h=1] {T(2, 4); C[0](3, 3)};
    \node at (-0.3, -1) {$\bar{s}_0$};
    \node at (0.7, -1) {$\bar{s}_2$};
    \node at (1.7, -1) {$\bar{s}_3$};
    \node at (2.7, -1) {$\bar{s}_4$};
    \node at (4, -0.3) {$\bar{s}_5$};
\end{gametree}
\caption{Centipede game. The game begins in state \(s_1\). The leader (\pl1{}) acts in \(s_1\) and \(s_3\); the follower (\pl2{}) acts in \(s_2\) and \(s_4\). Terminal states \(\bar{s}_1, \dots, \bar{s}_5\) yield payoffs shown as (\pl1{leader}, \pl2{follower}).}
\label{fig:centipede:game}
\end{figure}

The computational landscape reflects these structural difficulties. While SSEs can be computed in polynomial time for normal-form games~\cite{conitzer2006computing}, the problem becomes NP-hard with succinct action representations~\cite{DBLP:conf/aaai/KorzhykCP10}, PSPACE-complete in STRIPS-like planning domains~\cite{DBLP:conf/icaps/BehnkeS24}, and even NEXPTIME-complete in multi-objective arenas~\cite{DBLP:journals/tocl/BruyereFRT24}. Existing methods either encode the problem as large mixed-integer programs~\cite{10.5555/2900929.2900938,letchford2010computing,3038794.3038831} or rely on extensive linear programs~\cite{conitzer2006computing}, both of which scale poorly with horizon or state space. Simplifying assumptions—such as myopic or omniscient followers~\cite{Denardo1967,Whitt1980}, or stationary Markov policies~\cite{lopez2022stationary}—limit practical applicability and fail to preserve generality.

This work introduces a new value-based dynamic programming framework for LF-GSSGs, grounded in a structural reduction to what we call a \emph{credible Markov decision process} (credible MDP). In this reformulation, states correspond to \emph{credible sets}—finite collections of occupancy states induced by a fixed leader policy and all rational follower responses. Each occupancy state is a distribution over joint histories of environment states and actions, induced by a follower response to a partial leader policy. Transitions between credible sets are deterministic and governed by the leader’s decision rules, while accounting for all follower responses that are \emph{admissible} under SSE semantics—that is, responses that could be extended into an optimal policy with leader-favourable tie-breaking, under an extension of the leader policy consistent with the current prefix.

\paragraph{Contributions.} The reduction is shown to be lossless: optimal leader value is preserved without requiring explicit follower enumeration. To characterise computational hardness, it is further proven that synthesising an optimal memoryless deterministic leader policy in an LF-GSSG is \emph{NP}-hard. Nonetheless, the value function over credible sets exhibits \emph{uniform continuity}, enabling Bellman-style recursion and point-based approximation with provable guarantees on leader exploitability. The resulting dynamic programming algorithms achieve \(\varepsilon\)-optimality without full policy enumeration and scale to large-horizon settings. Empirical evaluations on mixed-motive benchmarks—including security games, resource allocation, and adversarial planning—demonstrate improvements in leader value and runtime over MILP-based and DP-inspired baselines.

\section{Background}
\label{sec:background}

This section formalises leader--follower general-sum stochastic games (LF-GSSGs), outlines the interaction model, and recalls key concepts from dynamic programming relevant to our framework. A summary of the notation used throughout the paper is provided in Appendix~\ref{appendix:notations}.

\begin{definition}
A \emph{leader--follower general-sum stochastic game} (LF-GSSG) is a tuple
\(
M \doteq (I, S, A_{\colorL}, A_{\colorF}, p, r_{\colorL}, r_{\colorF}, s_0, \gamma),
\)
where \( I = \{\colorL, \colorF\} \) is the set of players, comprising the leader \( \colorL \) and the follower \( \colorF \); \( S \) is a finite set of environment states; \( A_{\colorL} \) and \( A_{\colorF} \) are the finite action spaces of the leader and the follower, respectively. The transition function \( p \colon S \times A_{\colorL} \times A_{\colorF} \times S \to [0,1] \) defines the probability of moving to state \( s' \) from \( s \) under joint action \( (a_{\colorL}, a_{\colorF}) \). The one-step reward functions \( r_{\colorL}, r_{\colorF} \colon S \times A_{\colorL} \times A_{\colorF} \times S \to \mathbb{R} \) determine the immediate rewards received by each player. The game begins in a known initial state \( s_0 \in S \) and proceeds over an infinite number of stages, with rewards geometrically discounted by a factor \( \gamma \in [0,1) \).
\end{definition}

Since exact infinite-horizon planning is computationally prohibitive, we approximate it using a finite planning horizon \( \ell < \infty \). This is theoretically justified: for any \( \varepsilon > 0 \), the total expected return under a policy truncated at stage \( \ell \) differs from its infinite-horizon value by at most \( \varepsilon \), provided that
\[
\textstyle
\ell \geq \left\lceil \log_\gamma \left( \frac{(1 - \gamma)\varepsilon}{m} \right) \right\rceil,
\]
where \( m > 0 \) is a uniform upper bound on one-step reward magnitudes. This follows from geometric tail decay and enables tractable approximation to arbitrary precision.

\paragraph{Information structure and example.}
The interaction unfolds over discrete time steps \( t \in \{0, 1, \dots, \ell - 1\} \). At each stage \( t \), both players observe the current state \( s_t \), but neither observes the other’s action unless explicitly encoded in the state. Each player is informed only of the current state and their own past actions. We use the centipede
game as a running example throughout the paper

\begin{example}
The centipede game, shown in Figure~\ref{fig:centipede:game}~\cite{rosenthal1981games}, involves two players who alternately choose whether to continue or terminate the game by taking a larger share of an increasing pot. If a player chooses to take, the game ends immediately with that player receiving the larger share. Despite its simplicity, the centipede game illustrates a breakdown of backward induction due to strategic interdependence: classical dynamic programming terminates too early, failing to exploit deeper cooperative potential. This failure motivates our development of a more robust planning framework for LF-GSSGs.
\end{example}

\paragraph{Histories and decision rules.}
Each player makes decisions based on their \emph{private history}. For player \( \colorI \in \{\colorL, \colorF\} \), the history at time \( t \) is
\(
h_{\colorI,t} \doteq (s_0, a_{\colorI,0}, s_1, \dots, a_{\colorI,t-1}, s_t)
\),
capturing the sequence of observed states and the player’s own actions. Let \( H_{\colorI,t} \) denote the set of such histories. A \emph{decision rule} is a mapping \( \delta_{\colorI,t} \colon H_{\colorI,t} \to \Delta(A_{\colorI}) \); the leader may use stochastic rules, while the follower is restricted to deterministic ones~\cite{conitzer2006computing}. Let \( \Delta_{\colorI,t} \) be the set of all such rules. A \emph{policy} is a sequence \( \pi_{\colorI} = (\delta_{\colorI,0}, \dots, \delta_{\colorI,\ell-1}) \), and the joint policy is \( \pi = (\pi_{\colorL}, \pi_{\colorF}) \).

\paragraph{Value functions under a joint policy.}
Fix a joint policy \( \pi = (\pi_{\colorL}, \pi_{\colorF}) \). For player \( \colorI \), the action-value function at stage \( t \) is defined for environment state \( s \), joint history \( h = (h_{\colorL}, h_{\colorF}) \), and joint action \( a = (a_{\colorL}, a_{\colorF}) \) as
\[
\textstyle
q_{\colorI,t}^{\pi}(s,h,a) \doteq \mathbb{E}_{\pi} \left[ \sum_{k=t}^{\ell - 1} \gamma^{k - t} r_{\colorI}(s_k, a_{\colorL,k}, a_{\colorF,k}, s_{k+1}) \right],
\]
where the expectation is taken over future trajectories consistent with policy \( \pi \), given that history \( (h_{\colorL}, h_{\colorF}) \) is observed at stage \( t \) and joint action \( (a_{\colorL}, a_{\colorF}) \) is executed. The environment state \( s \) is included explicitly (though determined by the histories) to simplify belief updates in later constructions.

The value function at stage \( t \) is then
\[
\textstyle
v_{\colorI,t}^{\pi}(s, h_{\colorL}, h_{\colorF}) \doteq \mathbb{E}\left[ q_{\colorI,t}^{\pi}(s, h_{\colorL}, h_{\colorF}, a_{\colorL}, a_{\colorF}) \right],
\]
with the expectation taken over \( a_{\colorL} \sim \delta_{\colorL,t}(h_{\colorL}) \), \( a_{\colorF} = \delta_{\colorF,t}(h_{\colorF}) \), and \( s_{t+1} \sim p(s_t, a_{\colorL}, a_{\colorF}, \cdot) \). The terminal value is \( v_{\colorI,\ell}^{\pi} \equiv 0 \)~\cite{Puterman1994}.

\paragraph{Follower best-response set.}
Given a fixed leader policy \( \pi_{\colorL} \), the set of follower best responses is
\[
\textstyle
\mathtt{BR}(\pi_{\colorL}) \doteq \argmax_{\pi_{\colorF} \in \Pi_{\colorF}} \; v_{\colorF,0}^{(\pi_{\colorL}, \pi_{\colorF})}(s_0, s_0, s_0).
\]
To evaluate leader performance against optimal responses, define the leader-evaluated action-value function as
\(
q^{\pi_{\colorL}}_{\colorI,t}(s,h,a) \doteq \max_{\pi_{\colorF} \in \mathtt{BR}(\pi_{\colorL})} \; q^{(\pi_{\colorL}, \pi_{\colorF})}_{\colorI,t}(s,h,a),
\)
and the corresponding value function:
\[
v^{\pi_{\colorL}}_{\colorI,t}(s,h_{\colorL},h_{\colorF}) \doteq \max_{\pi_{\colorF} \in \mathtt{BR}(\pi_{\colorL})} \; \mathbb{E}[q^{(\pi_{\colorL}, \pi_{\colorF})}_{\colorI,t}(s,h_{\colorL},h_{\colorF},a_{\colorL},a_{\colorF})].
\]

\paragraph{Strong Stackelberg equilibrium.}
A leader policy \( \pi_{\colorL}^\star \) induces a \emph{strong Stackelberg equilibrium} (SSE) if
\[
\textstyle
\pi_{\colorL}^{\star} \in \argmax_{\pi_{\colorL} \in \Pi_{\colorL}} \max_{\pi_{\colorF} \in \mathtt{BR}(\pi_{\colorL})} \; v_{\colorL,0}^{(\pi_{\colorL}, \pi_{\colorF})}(s_0, s_0, s_0).
\]

\paragraph{Limitations of state-of-the-art approaches.}
State-of-the-art methods for computing SSEs in LF-GSSGs fall into two main categories: solving large-scale MILPs~\cite{10.5555/2900929.2900938,letchford2010computing,3038794.3038831} and extensive linear programs~\cite{conitzer2006computing}. These approaches rely on normal-form representations and scale poorly, while offering limited structural insight. Dynamic programming, highly effective in zero-sum and common-payoff settings, fails to extend to general-sum games. A natural but flawed idea is to compute, at each state and stage, a leader action that maximises expected payoff given a best-responding follower. The following proposition shows that replacing a single-stage leader decision rule using backward induction—even under the full follower value function—can invalidate equilibrium guarantees.

\begin{proposition}[Proof in Appendix~\ref{appendix:prop:wrong:pi}]
\label{prop:wrong:pi}
Let \( \pi_{\colorL} \) be a Markov leader policy. Define a new policy \( \pi'_{\colorL} \) that differs from \( \pi_{\colorL} \) only at stage \( t < \ell \), where \( \pi'_{\colorL} \) selects a decision rule \( \delta'_{\colorL,t}(s) \) solving:
\[
\begin{aligned}
&\max_{\delta_{\colorL}(s) \in \Delta(A_{\colorL})} \max_{a_{\colorF} \in A_{\colorF}} \;
\mathbb{E}_{a_{\colorL} \sim \delta_{\colorL}(s)} \left[ q^{\pi_{\colorL}}_{\colorL,t}(s,s,s,a_{\colorL},a_{\colorF}) \right] \\
&\text{s.t.} \quad \mathbb{E}_{a_{\colorL} \sim \delta_{\colorL}(s)} \left[ q^{\pi_{\colorL}}_{\colorF,t}(s,s,s,a_{\colorL},a_{\colorF}) \right] = v^{\pi_{\colorL}}_{\colorF,t}(s,s,s).
\end{aligned}
\]
Then \( \pi'_{\colorL} \) may fail to induce a strong Stackelberg equilibrium, even if \( \delta'_{\colorL,t}(s) = \delta_{\colorL,t}(s) \) for all \( s \) and all other stages.
\end{proposition}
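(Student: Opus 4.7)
The plan is to exhibit an explicit counterexample in which the policy $\pi'_L$ produced by the local Stackelberg program coincides with $\pi_L$ and yet strictly underperforms the SSE leader value. The centipede game of Figure~\ref{fig:centipede:game} is ideal here: backward induction famously collapses to the root-defection outcome even though strictly better cooperative trajectories exist under commitment, which is exactly the phenomenon the proposition targets.

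Concretely, I would take $\pi_L$ to be the backward-induction Markov policy---play T at both $s_1$ and $s_3$---and apply the local program at the stage $t$ where the leader acts at $s_1$. The central calculation is to expand the quantities entering the program under the fixed continuation $\pi_L$: the follower's best response at $s_2$ is T (Continuing meets a Take at $s_3$ and yields follower payoff $1<2$), so I read off $q^{\pi_L}_{F,t}(s_1,s_1,s_1,C,\cdot)=2$, $q^{\pi_L}_{F,t}(s_1,s_1,s_1,T,\cdot)=0$, and $v^{\pi_L}_{F,t}(s_1,s_1,s_1)=0$ since $\pi_L$ itself plays T. The follower-optimality equality then forces $\delta'_{L,t}(s_1)$ to place all mass on T, so the local program simply reproduces $\pi_L$, \ie $\pi'_L=\pi_L$.

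Next I would exhibit an alternative leader policy with strictly larger value, thereby certifying that $\pi_L$ (and hence $\pi'_L$) is not an SSE. The standard Stackelberg-commitment construction selects C at $s_1$ and mixes at $s_3$ with $\Pr[T]=2/3$: this makes the follower indifferent at $s_2$ between T (value $2$) and C (expected value $\tfrac{2}{3}\cdot 1+\tfrac{1}{3}\cdot 4=2$); SSE tie-breaking then picks C and yields leader value $\tfrac{2}{3}\cdot 3+\tfrac{1}{3}\cdot 2 = 8/3 > 1 = v^{\pi_L}_{L,0}(s_1,s_1,s_1)$, closing the argument.

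The only delicate point---and, such as it is, the main obstacle---is the tie-breaking step: the argument relies essentially on the \emph{strong}-commitment convention to select C over T at $s_2$ when the follower is indifferent. Without this convention the $8/3$ value is unattainable by any deterministic follower best response, and the proposition would have to be restated under a different equilibrium refinement. Apart from that semantic point, the argument reduces to a handful of finite arithmetic checks.
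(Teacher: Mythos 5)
Your proposal is correct and follows essentially the same route as the paper's proof: both use the centipede counterexample, take the base policy to be the backward-induction take/take Markov policy, verify that the local constrained program is a fixed point there (the follower-value equality forces all mass onto take), and then exhibit a strictly better leader commitment to contradict SSE optimality. The only difference is the witness for suboptimality — the paper uses the deterministic continue/continue leader policy, which yields leader value $2$ and makes the follower \emph{strictly} prefer to continue at $s_2$ (payoff $4$ vs.\ $2$), so no tie-breaking is needed, whereas your mixed commitment at $s_3$ achieves the larger value $8/3$ but leans on the strong tie-breaking convention you rightly flag; either witness suffices.
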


This result motivates the \emph{myopic follower} approximation, which replaces the full follower value function with immediate reward \cite{fujiwara2006stackelberg,lopez2022stationary,JMLR:v24:22-0203}. Proposition~\ref{prop:wrong:pi} shows that even when using the correct long-term value function, backward induction may yield policies that violate SSE optimality. The myopic simplification is thus even less principled and less reliable in general.
Such failure arises in domains like \textsc{Centipede}, where locally rational leader deviations may trigger early termination and suboptimal outcomes (e.g., \( (1,0) \) instead of \( (2,4) \)). This breakdown stems from two structural barriers: (i) the leader faces a \emph{multi-objective} optimisation problem, in which actions optimal against one best response may be dominated under another; and (ii) the problem is \emph{non-Markovian} from the follower's perspective, making correct interpretation of leader actions dependent on history.

\begin{center}
\begin{tikzpicture}
\node[draw=black, thick, rounded corners, inner sep=6pt] {
  \begin{minipage}{0.95\linewidth}
  \emph{How can a leader-centric perspective support a reliable and scalable dynamic programming framework for computing strong Stackelberg equilibria in LF-GSSGs?}
  \end{minipage}
};
\end{tikzpicture}
\end{center}

\section{Credible Markov Decision Processes}
\label{sec:credible:mdp}

We present a lossless reduction of any leader--follower general-sum stochastic game (LF-GSSG) into a Markov decision process (MDP) from the leader’s perspective. The core idea is to model the leader's planning task as a sequential decision process in which each state summarises the game's evolution under all rational follower responses. We call the resulting model the \emph{credible MDP}.

The credible MDP is constructed step by step. We first define \emph{joint decision-rule histories} and the \emph{occupancy states} they induce. These occupancy states are grouped into \emph{credible sets}, each indexed by a partial leader policy. We then define the deterministic transitions and cumulative rewards associated with these sets. To reduce redundancy, we introduce a filtering mechanism based on value-dominance within belief classes. This enables forward planning while preserving all information relevant to equilibrium computation. A formal proof that the credible MDP constitutes a \emph{lossless reduction} is provided in Section~\ref{sec:theoretical:results}.

\subsection{Occupancy States and Decision-Rule Histories}

At each stage~$t$, the decision process is indexed by joint decision-rule histories $\theta_t = (\theta_{\colorL,t}, \theta_{\colorF,t})$,
where $\theta_{i,t} = (\delta_{i,0}, \dots, \delta_{i,t-1})$ 
denotes the sequence of local decision rules
applied by agent $i \in I$ up to time step \(t-1\).
Each joint history \(\theta_t\) induces an \emph{occupancy state} \(o_{\theta_t}\), defined as the distribution over joint local environment histories given $\theta_t$:
\[
  o_{\theta_t} \colon (s_t, h_{\colorL,t}, h_{\colorF,t}) \mapsto \Pr(s_t, h_{\colorL,t}, h_{\colorF,t} \mid \theta_t).
\]
Although \( s_t \) is is already given in \( h_{\colorL,t} \) and \( h_{\colorF,t} \), we explicitly mention it in \( o_{\theta_t} \) to support belief-based filtering over states. This distribution is induced by the initial state, the system dynamics, and the policy history~$\theta_t$.
Given any occupancy state $o_\theta$ and a joint decision rule $\delta = (\delta_{\colorL}, \delta_{\colorF})$, we define a deterministic transition operator:
\begin{align*}
\tau &\colon (o_\theta, \delta) \mapsto o_{(\theta, \delta)},
\end{align*}
where $o_{(\theta, \delta)}$ is the next occupancy state resulting from applying 
$\delta$ to $o_\theta$ and propagating through the dynamics.
%
Formally, for any state \(s'\), joint history \((h_{\colorL},h_{\colorF})\), and joint action \((a_{\colorL},a_{\colorF})\) : \(o_{(\theta, \delta)}(s',(h_{\colorL}, a_{\colorL}, s'),(h_{\colorF}, a_{\colorF}, s')) = \sum_s
o_\theta(s,h_{\colorL},h_{\colorF})
\delta_{\colorL}(a_{\colorL}|h_{\colorL}) 
\delta_{\colorF}(a_{\colorF}|h_{\colorF})
p(s, a_{\colorL}, a_{\colorF}, s')\).
%
In addition, we define the \emph{value-so-far function} $\rho_i \colon o_t \mapsto \mathbb{R}$ for each agent $i \in I$, 
mapping each occupancy state $o_t$
to the expected cumulative discounted reward accrued up to stage~$t$:
\[
\rho_i(o_t) \doteq \textstyle \mathbb{E}_{o_t} \left[ \sum_{t'=0}^{t-1} \gamma^{t'} \, r_i(s_{t'}, a_{\colorL,t'}, a_{\colorF,t'}, s_{t'+1}) \right]
\]
where the expectation is taken with respect to the joint distribution over histories encoded in $o_t$. Note that $o_t$ encodes sufficient information to evaluate this cumulative reward.

\subsection{Credible Sets under Epistemic Pessimism}

We now define the credible state space for the leader's planning problem. At stage~$t$, the leader's local decision-rule history $\theta_{\colorL,t}$ is fully specified, while the follower's decision-rule history $\theta_{\colorF,t}$ is unknown and must be reasoned over. To account for all possible rational follower behaviours consistent with the game structure, we adopt a pessimistic closure over follower responses.
Specifically, we define the \emph{credible set} at stage~$t$ as the set of occupancy states reachable from the known leader decision-rule history and all follower decision-rule histories:
\[
  c_{\theta_{\colorL,t}} \doteq \left\{ o_{(\theta_{\colorL,t}, \theta_{\colorF,t})} \;\middle|\; \theta_{\colorF,t} \in \Theta_{\colorF,t} \right\},
\]
where $\Theta_{\colorF,t}$ is the set of all follower decision-rule histories of length~$t$. 
%
When the leader selects a new decision rule $\delta_{\colorL,t}$, her decision-rule history updates to $\theta_{\colorL,t+1} = \theta_{\colorL,t} \delta_{\colorL,t}$.
The corresponding successor credible set is defined by applying the transition operator $\tau$ to each element of $c_{\theta_{\colorL,t}}$ under every follower decision rule:
\[
c_{\theta_{\colorL,t+1}} \doteq \left\{ \tau(o_t, (\delta_{\colorL,t}, \delta_{\colorF,t})) \mid o_t \in c_{\theta_{\colorL,t}}, \delta_{\colorF,t} \in \Delta_{\colorF,t} \right\},
\]
where $c_{\theta_{\colorL,t+1}} \doteq T(c_{\theta_{\colorL,t}}, \delta_{\colorL,t})$ denotes the resulting set. This transition is deterministic from the leader's perspective and compactly encodes all possible follower responses compatible with epistemic pessimism.

\subsection{Marginal Beliefs and Their Role in Filtering}

Forward search over credible sets leads to a rapid growth in the number of reachable occupancy states. However, many of these states may be strategically redundant \emph{under Markov policies}: they differ in the distribution over local histories but induce the same belief over the current environment state~$s_t$.
In order to capture this redundancy, we define the \emph{marginal belief} induced by a joint decision-rule history  $\theta$ as
\[
  b_{o_{\theta}}(s_t) \doteq \textstyle 
  \sum_{(h_{\colorL,t}, h_{\colorF,t})} o_\theta(s_t, h_{\colorL,t}, h_{\colorF,t}).
\]
Since the game is fully observable and Markovian, the marginal belief~$b_{o_{\theta}}$ summarises all information relevant for predicting future transitions and rewards.
However, it is important to emphasise that \emph{belief equivalence does not imply full equivalence of occupancy states}. Two occupancy states may induce the same marginal belief but differ in their distributions over local histories, and thus in the cumulative reward they have already accrued. As a result, belief equivalence is insufficient for reducing the occupancy-state space directly.
Nevertheless, when comparing future continuations from belief-equivalent states, the belief is sufficient to reason about the \emph{value-to-go}. We exploit this fact to define a pruning procedure over credible sets that removes dominated occupancy states sharing the same marginal belief.

\subsection{Belief-Based Filtering over Credible Sets}

To mitigate the combinatorial blow-up during planning, we introduce a \emph{value-dominance filtering} operation under Markov policies. It retains only those occupancy states that are not strictly Pareto-dominated and not exact duplicates within the same marginal belief class. Let $\rho_{\colorL}(o)$ and $\rho_{\colorF}(o)$ denote the cumulative value-so-far for the leader and follower, respectively, under occupancy state $o$. 
Given a credible set $c$, 
we define
$B_c \doteq \{ b_o \mid o \in c \}$ 
and, for each $b \in B_c$, define the belief class 
\[c_b \doteq \{ o \in c \mid b_o = b \}.\]
For all $c_b$, for any occupancy state \(o\in c_b\), define \(c^o_b\) as: \[ \{ o' \in c_b \mid \forall i \in I, \rho_i(o') \geq \rho_i(o), \; \exists j \in I, \rho_j(o') > \rho_j(o) \}.\]
The set $c^o_b$ contains the occupancy states whose the margin belief is $b$ and where the values-so-far Pareto-dominates $o$.
We then define the filtered set of \(c\) as
\[\mathbb{F}(c) \doteq \cup_{b \in B_c} \unique(\{ o \in c_b \mid c^o_b = \emptyset \}),\] 
where $\unique(\cdot)$ retains exactly one representative%
\footnote{
To define $\unique(\{ o \in c_b \mid c^o_b = \emptyset \})$,
we first partition 
$\Omega := \{ o \in c_b \mid c^o_b = \emptyset \}$ 
in sets $\Omega^{\nu_\colorL, \nu_\colorF} = \set{o \in \Omega 
\mid \rho_\colorL(o) = \nu_\colorL, \rho_\colorF(o) = \nu_\colorF}$.
Then we consider the function $\Choose$ that picks an element in a set: 
$\Choose(X)$ is an element in $X$ for all non-empty sets $X$. 
Then $\unique(\{ o \in c_b \mid c^o_b = \emptyset \}) = \set{\Choose(\Omega^{\nu_\colorL, \nu_\colorF}) \mid \Omega^{\nu_\colorL, \nu_\colorF} \neq \emptyset \text{ and } \nu_\colorL, \nu_\colorF \in \mathbb R}$.
}
per unique cumulative value-so-far vector $(\rho_{\colorL}, \rho_{\colorF})$.
That is, for each marginal belief profile $b$, we retain all Pareto-undominated occupancy states and eliminate any redundant duplicates with identical accumulated returns. Applying this operator to any credible set $c$ yields the filtered set $\tilde{c} \doteq \mathbb{F}(c)$. This pruning step guarantees that all strategically relevant occupancy states are preserved: since the marginal belief captures the information required for computing continuation values, and $\rho_i$ accounts for the accumulated return, the filtered set $\tilde{c}$ supports sound and efficient forward planning within the credible MDP. A formal proof of the losslessness of the filtering operator is provided in Appendix~\ref{appendix:proof:lossless:filtering}.

\subsection{Reward Model and Terminal Sets}

 We  define a reward function $R$ over credible sets that captures the leader’s cumulative return at terminal stages. Intuitively, at intermediate stages $t < \ell$, the leader's policy is still under construction, and uncertainty about the follower's best response prevents assigning a meaningful reward. In contrast, at stage $\ell$, the leader's policy is fixed, and all follower responses consistent with epistemic pessimism are known. This enables a reliable assessment of the total reward.
 Let \( \tilde{F} \doteq \{ \tilde{c}_{\theta_{\colorL,\ell}} \in \tilde{C} \mid |\theta_{\colorL,\ell}| = \ell \} \) denote the terminal filtered credible sets. 
 The reward function $R$ is defined by $R(\tilde{c}) \doteq 0$ if $\tilde c \notin \tilde{F}$, and for all $\tilde{c}\in \tilde{F}$,
     \[R(\tilde{c}) \doteq \max \left\{ \rho_{\colorL}(o) 
     \, \middle|\, o \in \argmax_{o' \in \tilde{c}} \rho_{\colorF}(o') \right\}.\]
The reward is thus zero during transient planning and evaluated only once the leader’s policy is complete, ensuring strategic soundness at termination.

\subsection{Formal Definition}

\begin{definition}
\label{def:credible-mdp}
Given a leader--follower general-sum stochastic game \( M \) with horizon \( \ell \), the \emph{credible MDP} is defined as \( M' \doteq ( \tilde{C}, \Delta_{\colorL}, \tilde{T}, R, \tilde{F}, c_0, \ell ) \) where:
\begin{itemize}
  \item \( \tilde{C} \doteq \{\tilde{c}_{\theta_{\colorL,t}} \mid \theta_{\colorL,t}\in \Theta_{\colorL,t}, t \in \{0,1,\ldots,\ell\}\} \) is the set of filtered credible sets;
  \item \( \Delta_{\colorL} \) is the set of admissible leader decision rules;
  \item \( \tilde{T}\colon (\tilde{c}, \delta_{\colorL}) \mapsto \mathbb{F}(T(\tilde{c}, \delta_{\colorL})) \) is the deterministic transition function;
  \item \( R \colon \tilde{C} \to \mathbb{R} \) is the reward function defined above;
  \item \( \tilde{F} \subseteq \tilde{C} \) is the set of terminal filtered credible sets at stage~\( \ell \);
  \item \( c_0 \in \tilde{C} \) is the initial (filtered) credible set induced by \( s_0 \in M \);
  \item \( \ell \in \mathbb{N} \) is the finite planning horizon.
\end{itemize}
\end{definition}

\begin{example}
Figure~\ref{fig:centipede:game:reformulation} illustrates the structure of the credible MDP prior to belief-based filtering, using the centipede game under a deterministic leader. Nodes represent occupancy states summarised by their realised state and values-so-far.
\begin{figure}[H]
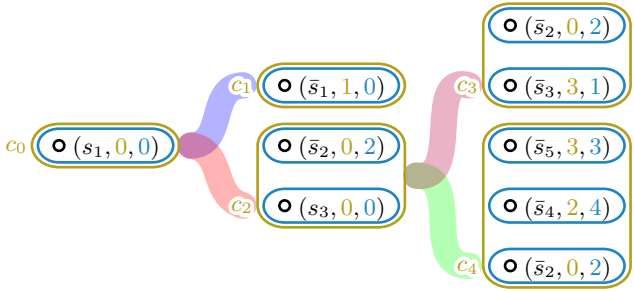

\centering
\begin{beliefspace}[yscale=0.8]
  \state (s0)      at (0,  0)   {$(s_1,\pl1{0},\pl2{0})$};
  \state (s2)      at (3, -1)   {$(s_3,\pl1{0},\pl2{0})$};
  \state (s001)    at (3,  0)   {$(\bar{s}_2,\pl1{0},\pl2{2})$};
  \state (s4)      at (6,  0)   {$(\bar{s}_5,\pl1{3},\pl2{3})$};
  \state (s03)     at (6, -1)   {$(\bar{s}_4,\pl1{2},\pl2{4})$};
  \state (s00001)  at (6, -2)   {$(\bar{s}_2,\pl1{0},\pl2{2})$};
  \state (s01)     at (3,  1)   {$(\bar{s}_1,\pl1{1},\pl2{0})$};
  \state (s02)     at (6,  1)   {$(\bar{s}_3,\pl1{3},\pl2{1})$};
  \state (s0001)   at (6,  2)   {$(\bar{s}_2,\pl1{0},\pl2{2})$};

  \informset<3->[player=1,label=-180, contour] (s0)      {$c_0$};
  \informset<3->[player=1,label=-180, contour] (s01)     {$c_1$};
  \informset<3->[player=1,label=-180, contour] (s2)--(s001) {$c_2$};
  \informset<4->[player=1,label=-180, contour] (s00001)--(s03)--(s4) {$c_4$};
  \informset<3->[player=1,label=-180, contour] (s02)--(s0001) {$c_3$};

  \informset<2->[player=2,label=-180, contour, radius=2.8mm] (s0);
  \informset<2->[player=2,label=-180, contour, radius=2.8mm] (s2);
  \informset<2->[player=2,label=-180, contour, radius=2.8mm] (s00001);
  \informset<2->[player=2,label=-180, contour, radius=2.8mm] (s0001);
  \informset<2->[player=2,label=-180, contour, radius=2.8mm] (s001);
  \informset<2->[player=2,label=-180, contour, radius=2.8mm] (s4);
  \informset<2->[player=2,label=-180, contour, radius=2.8mm] (s03);
  \informset<2->[player=2,label=-180, contour, radius=2.8mm] (s01);
  \informset<2->[player=2,label=-180, contour, radius=2.8mm] (s02);

  \node[coordinate] (frontX0) at ($(s0)+(1.75,0)$) {};
  \node[coordinate] (backX1)  at ($(s01)+(-.55,0)$) {};
  \node[coordinate] (frontX2) at ($(3, -0.5)+(1.75,0)$) {};
  \node[coordinate] (backX2)  at ($(s2)+(-.55,0)$) {};
  \node[coordinate] (backX4)  at ($(s00001)+(-.55,0)$) {};
  \node[coordinate] (backX3)  at ($(s02)+(-.55,0)$) {};

  \begin{pgfonlayer}{background}
    \draw[riverflow, color=blue]   (frontX0) to[out=0,   in=180] (backX1);
    \draw[riverflow, color=red]    (frontX0) to[out=-10, in=160] (backX2);
    \draw[riverflow, color=green]  (frontX2) to[out=0,   in=180] (backX4);
    \draw[riverflow, color=purple] (frontX2) to[out=-10, in=160] (backX3);
  \end{pgfonlayer}
\end{beliefspace}
\caption{Reduction of the centipede game (Figure~\ref{fig:centipede:game}) into the credible MDP prior to filtering. In this example, the filtering step has no effect. Each node is annotated with a tuple \((s_t, \rho_{\colorL}, \rho_{\colorF})\), where \(s_t\) denotes the current state, and \(\rho_{\colorL}, \rho_{\colorF}\) denote the cumulative discounted rewards for the leader and follower, respectively. Determinism ensures that each such tuple corresponds to a unique occupancy state. Brown contours indicate credible sets, each indexed by the leader’s decision-rule history. River arcs depict epistemically pessimistic transitions, capturing the leader’s choice under all admissible follower responses.}\label{fig:centipede:game:reformulation}
\end{figure}
\end{example}

The construction of \( M' \) equips the leader with a forward-searchable state space that faithfully captures all rational follower responses through credible sets. This reduction preserves the strategic essence of the original game while enabling algorithmic tractability through dynamic programming techniques. Importantly, it avoids the pitfalls of earlier formulations that relied on game states or belief states lacking temporal and strategic coherence. Having completed the definition of the reduced model, we now turn to the formal analysis of its computational and structural properties.

\subsection{Lossless Reduction via the Credible MDP}
\label{sec:reduction:correctness}

The concept of a \emph{lossless reduction} formalises when a surrogate model preserves all decision-relevant structure of an LF-GSSG, including planning semantics and equilibrium outcomes, without introducing spurious policies or additional informational power~\cite{sanjari2023isomorphism}.

\begin{definition}[Lossless reduction]
\label{def:lossless-reduction}
A reduction from an LF-GSSG \( M \) to a surrogate model \( M' \) is said to be \emph{lossless} if the following three conditions hold:  
(i) \emph{value preservation} — for every leader policy \( \pi_{\colorL} \in \Pi_{\colorL} \), there exists a policy \( \pi'_{\colorL} \in \Pi'_{\colorL} \) such that \( v^{\pi_{\colorL}}_{M,\colorL,0}(s_0) = v^{\pi'_{\colorL}}_{M',\colorL,0}(s_0) \);  
(ii) \emph{equilibrium correspondence} — there exists a \emph{surjective} map from the set of SSEs in \( M \) onto the set of SSEs in \( M' \), preserving leader and follower payoffs;  
(iii) \emph{information compatibility} — the surrogate model \( M' \) introduces no additional observability or decision flexibility beyond what is available in \( M \).
\end{definition}

The credible MDP \( M' \), defined in Section~\ref{sec:credible:mdp}, satisfies all three conditions above and thus serves as a faithful surrogate model for planning in LF-GSSGs.

\begin{theorem}[Proof in Appendix~\ref{appendix:sec:credible:mdp}]
\label{thm:lossless-reduction}
The credible MDP \( M' \) associated with the LF-GSSG \( M \) constitutes a lossless reduction in the sense of Definition~\ref{def:lossless-reduction}.
\end{theorem}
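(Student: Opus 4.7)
The plan is to verify the three conditions of Definition~\ref{def:lossless-reduction} separately, with the core technical step being an inductive invariant linking the two models at every stage. A leader policy $\pi_{\colorL} = (\delta_{\colorL,0},\ldots,\delta_{\colorL,\ell-1})$ maps naturally to a policy $\pi'_{\colorL}$ in $M'$ that, at each visited credible set, applies the decision rule $\delta_{\colorL,t}$; since transitions in $M'$ are deterministic, this produces a unique trajectory $c_0, \tilde{c}_1, \ldots, \tilde{c}_\ell$. The central invariant, proved by induction on $t$, is that the \emph{unfiltered} credible set $c_{\theta_{\colorL,t}}$ equals $\{o_{(\theta_{\colorL,t},\theta_{\colorF,t})} : \theta_{\colorF,t} \in \Theta_{\colorF,t}\}$: the base case is immediate, and the inductive step follows from the definitions of $\tau$ and $T$, since appending every $\delta_{\colorF,t} \in \Delta_{\colorF,t}$ to every $\theta_{\colorF,t}$ enumerates exactly the length-$(t{+}1)$ follower histories. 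Combined with the losslessness of $\mathbb{F}$ established in Appendix~\ref{appendix:proof:lossless:filtering}, the invariant carries over to the filtered sets $\tilde{c}_{\theta_{\colorL,t}}$ in a value-equivalent way.

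For \emph{value preservation} (i), specialising the invariant at $t = \ell$ shows that the leader's SSE value under $\pi_{\colorL}$ in $M$, namely $\max\{v^{(\pi_{\colorL},\pi_{\colorF})}_{\colorL,0}(s_0,s_0,s_0) : \pi_{\colorF} \in \mathtt{BR}(\pi_{\colorL})\}$, coincides, by the definition of $\rho_{\colorL}$ and $\rho_{\colorF}$, with $\max\{\rho_{\colorL}(o) : o \in \arg\max_{o' \in \tilde{c}_\ell} \rho_{\colorF}(o')\} = R(\tilde{c}_\ell)$. Since rewards in $M'$ vanish at all non-terminal credible sets, the value of $\pi'_{\colorL}$ in $M'$ is exactly $R(\tilde{c}_\ell)$, matching the SSE value in $M$. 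For \emph{equilibrium correspondence} (ii), the same construction sends each SSE $\pi_{\colorL}^{\star}$ of $M$ to an optimal policy of $M'$; conversely, any optimal policy of $M'$ selects at each visited credible set a leader decision rule $\delta_{\colorL,t}$, and concatenating these produces a leader policy in $M$ that, by (i), achieves the SSE value and is therefore an SSE of $M$ whose image is the original $M'$-policy, proving surjectivity. For \emph{information compatibility} (iii), the state $\tilde{c}_{\theta_{\colorL,t}}$ is indexed solely by the leader's own past decision rules, which she already possesses in $M$ as part of her history; the reduction therefore grants no additional observability or decision flexibility.

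The main obstacle is the interaction between the invariant and the filter $\mathbb{F}$, since the credible MDP's transition is $\tilde{T}(\tilde{c},\delta_{\colorL}) = \mathbb{F}(T(\tilde{c},\delta_{\colorL}))$ applied to the already filtered set, not to the full unfiltered set. One must show that pruning a value-dominated occupancy state $o$ in favour of a Pareto-dominating sibling $o'$ sharing the same marginal belief never alters the terminal reward $R(\tilde{c}_\ell)$ nor the set of reachable $(\rho_{\colorL},\rho_{\colorF})$-profiles at stage $\ell$: this requires that $o$ cannot be the unique $\rho_{\colorF}$-maximiser, nor the unique $\rho_{\colorL}$-maximiser among $\rho_{\colorF}$-maximisers, at any descendant credible set. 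The key structural fact that makes this work is that continuation values depend solely on the marginal belief (by full observability and the Markov property), so the $\rho_i$-dominance between $o$ and $o'$ is preserved under every extension of $\pi_{\colorL}$; this is precisely the content of the auxiliary filtering lemma, on which the proof of the theorem rests as its principal technical ingredient.
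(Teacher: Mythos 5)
Your proposal is correct and follows essentially the same route as the paper's proof: the same three-way case split on Definition~\ref{def:lossless-reduction}, the same identification of leader policies with decision-rule sequences yielding identical deterministic trajectories of credible sets, and the same reliance on the losslessness of the filtering operator (the paper's Lemma~\ref{appendix:lem:filtered:equality} and the recursive version in Appendix~\ref{appendix:proof:lossless:filtering}) as the principal technical ingredient. Your explicit flagging of the stage-wise interaction between $\mathbb{F}$ and $\tilde{T}$, resolved via belief-dependence of continuation values under Markov policies, is precisely the invariant the paper proves there.
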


\begin{proof}[Proof sketch]
Every leader policy \( \pi_{\colorL} = (\delta_{\colorL,0}, \dots, \delta_{\colorL,\ell-1}) \) in \( M \) induces a unique sequence of decision rules, which can be executed identically in \( M' \) to obtain the same trajectory of credible sets and cumulative value. Conversely, each policy in \( M' \) is defined by a sequence of decision rules that correspond to a well-defined leader policy in \( M \), ensuring value preservation. Because the follower reactions embedded in \( M' \) are precisely the admissible best responses under SSE semantics, all equilibrium solutions in \( M' \) correspond to equilibria in \( M \), establishing surjectivity. Finally, both models rely exclusively on private histories and decision rules defined in \( M \), satisfying information compatibility.
\end{proof}

\section{Theoretical Results for Credible MDPs}
\label{sec:theoretical:results}

This section establishes the theoretical foundation for solving the original leader--follower stochastic game \( M \) through its lossless reduction to the credible MDP \( M' \).  
We present three main results, each addressing a fundamental aspect: computational complexity, dynamic programming structure, and geometric properties of the value function.  
Together, they provide both theoretical guarantees and structural insights that enable the design of scalable solution algorithms.

We first establish that computing an optimal leader policy is \emph{NP}-hard, even when restricted to Markov policies.
This result highlights the intrinsic computational intractability of synthesising equilibrium strategies, motivating the need for approximate approaches.

\begin{theorem}[Proof in Appendix \ref{appendix:thm:cmdp-reduction:1}]
\label{thm:cmdp-reduction}
    The deterministic memoryless SSE decision problem is NP-hard
    both in the finite-horizon and infinite-horizon cases.
\end{theorem}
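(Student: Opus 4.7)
The plan is to establish NP-hardness by a polynomial-time reduction from a canonical NP-complete problem such as 3-SAT, constructing an LF-GSSG in which the leader's memoryless deterministic policy encodes a truth assignment and the follower's best response certifies (un)satisfiability. Specifically, given a 3-SAT instance \(\varphi\) with variables \(x_1,\dots,x_n\) and clauses \(C_1,\dots,C_m\), I would introduce an initial state \(s_0\) and, for each variable \(x_i\), a state \(s_i\) with two leader actions \(a_T,a_F\) corresponding to the assignment \(x_i=\top\) or \(x_i=\bot\). Because the leader is constrained to be memoryless and deterministic, her policy \(\pi_\colorL\) is exactly a function from variable-states to \(\{\top,\bot\}\), i.e.\ a truth assignment \(\sigma_{\pi_\colorL}\).

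Next I would design the dynamics so that, from \(s_0\), the follower can choose one of \(m\) follower actions, each routing play through a clause gadget that tests whether \(C_j\) is satisfied by the current assignment. The rewards are calibrated so that if \(C_j\) is falsified by \(\sigma_{\pi_\colorL}\), probing \(C_j\) yields the follower a strictly positive payoff \(\alpha>0\) and the leader a strictly negative payoff \(-\beta\); if \(C_j\) is satisfied, probing it yields payoff \(0\) to both. A final \textquotedblleft safe\textquotedblright\ follower action yields \(0\) to the follower and a large positive payoff \(U\) to the leader. Thus the follower's best-response payoff is \(0\) iff \(\sigma_{\pi_\colorL}\) satisfies \(\varphi\); under SSE tie-breaking in the leader's favor, the safe action is then chosen, yielding leader value \(U\). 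Conversely, if \(\varphi\) is unsatisfied by \(\sigma_{\pi_\colorL}\), the follower strictly prefers to probe a violated clause, collapsing the leader's value to \(-\beta\). Setting the decision threshold to, say, \(U/2\) gives a polynomial-time many-one reduction \(\varphi\in\mathrm{SAT}\iff\text{SSE value}\ge U/2\).

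For the finite-horizon case, a horizon of \(\ell=3\) suffices (one step to pick a variable-state, one to read the leader's commitment, one to route through a clause gadget), so the reduction is polynomial. For the infinite-horizon case, I would add absorbing self-loops with zero reward at the gadget terminals and choose any discount \(\gamma\in[0,1)\); since clause-gadget payoffs accrue at a bounded stage, the discounted totals remain proportional to their finite-horizon analogues, and the same threshold argument applies after an innocuous rescaling by \(\gamma^t\).

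The main obstacle will be handling SSE tie-breaking correctly: whenever several clauses are simultaneously falsified, the follower has multiple equally optimal probes, and the leader-favored tie-break must not accidentally select one that raises the leader's value above the acceptance threshold. I would address this by making every probe of a falsified clause yield the \emph{same} leader payoff \(-\beta<0\), so that ties among falsifying responses are irrelevant to the leader's value, while the safe action is uniquely leader-optimal precisely when no falsifying response is follower-optimal. A secondary technicality is verifying that the leader cannot exploit memoryless stochastic mixtures implicitly—but the statement restricts attention to deterministic memoryless policies, so the combinatorial \(2^n\) choice is unavoidable, and the reduction's correctness follows directly from the equivalence between these policies and truth assignments.
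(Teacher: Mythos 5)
Your reduction has a genuine gap at its core: the clause gadget is left as a black box, and the natural ways of instantiating it fail precisely because the leader is \emph{memoryless over a fully observable state}. For the leader's policy to encode a single consistent truth assignment, each variable \(x_i\) must correspond to one shared state \(s_i\) at which the leader acts. But then the game state, being Markovian, cannot remember \emph{which clause} is being probed once play reaches \(s_i\): the reward \(r(s_i,a_{\colorL},a_{\colorF},s')\) cannot know the sign with which \(x_i\) occurs in the probed clause, and verifying that a clause is falsified is a \emph{conjunction} of three literal checks spread over three visits to shared states, which additive stage rewards cannot implement without the state carrying an accumulator. If you instead annotate the variable states with the clause (or even just with the literal's sign, using copies \(s_i^{+},s_i^{-}\)), the leader's memoryless policy may act inconsistently across copies --- e.g.\ play \(a_T\) at \(s_i^{+}\) and \(a_F\) at \(s_i^{-}\), making every literal ``true'' --- so every formula looks satisfiable and the backward direction collapses. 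Letting the follower declare the clause/sign does not help either: the follower knows the committed policy and will always declare whichever verifiable occurrence makes the literal false, or mix claims across different clauses. Your stated horizon \(\ell=3\) also does not add up: one visit to one variable state reveals one literal's value, not whether a three-literal clause is falsified. The tie-breaking issue you flag as the main obstacle is in fact the easy part; the missing idea is the gadget itself.

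The paper sidesteps this obstruction entirely by choosing source problems whose witnesses \emph{are} memoryless deterministic policies. It first shows that deciding whether a deterministic MDP admits a memoryless deterministic policy of \emph{exact} value \(v\) is NP-hard, by reduction from Hamiltonian path (finite horizon, with rewards \(B^{i-1}\) so that hitting the exact target in base \(B\) forces each vertex to be visited once) and Hamiltonian cycle (infinite horizon, with reward \(1\) on leaving \(s_0\) and \(\gamma=1/2\), so the target \(1/(1-\gamma^{|V|})\) forces a cycle of length exactly \(|V|\)). This yields NP-hardness of constrained-MDP feasibility (take cost \(=\) reward), and finally the SSE problem: a fresh initial state where the follower accepts or refuses, refusal giving the leader \(R-1\), and acceptance entering a copy of the CMDP in which the follower's reward \(\nu/\ell - c(s,a)\) encodes the cost constraint, so acceptance is a best response exactly when the constraint holds. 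If you want to salvage a SAT-style argument you would need a mechanism --- absent from your sketch --- that enforces consistency of the leader's choices across the states where a variable is read; the Hamiltonian-path route exists precisely because no such mechanism is readily available.
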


To characterise the recursive structure of leader decision-making over credible sets, we first formalise the optimal state- and action-value functions, which together form the basis of Bellman's optimality equations.

\begin{definition}[Optimal state-value function]
    Let \(\pi_{\colorL} \doteq (\delta_{\colorL,0},\ldots,\delta_{\colorL,\ell-1})\) be a leader policy. 
    The leader value function \(v_{\colorL,t}^{\pi_{\colorL}}\colon \tilde{C} \to \mathbb{R}\) at stage \(t\) under  \(\pi_{\colorL}\) is given as follows: for any terminal credible set \(\tilde{c}\in \tilde{F}\), we have \(v_{\colorL}^{\pi_{\colorL}}(\tilde{c}) = R(\tilde{c})\) and for any transient credible set \(\tilde{c} \not\in \tilde{F}\) at stage \(t\),
    \begin{align*}
        v_{\colorL,t}^{\pi_{\colorL}}\colon \tilde{c} &\mapsto 
             R\big(\tilde{T}\big(\tilde{T}(\cdots \tilde{T}(\tilde{c}, \delta_{\colorL,t}) \cdots, \delta_{\colorL,\ell-2}), \delta_{\colorL,\ell-1}\big)\big).        
    \end{align*}
    The optimal (leader) state-value function \(v_{\colorL,t}^*\colon \tilde{C} \to \mathbb{R}\) at stage \(t\) of surrogate model \(M'\) is given: 
    \begin{align*}
        v_{\colorL,t}^*\colon \tilde{c} & \mapsto 
        \textstyle 
        \max_{\pi_{\colorL}\in \Pi_{\colorL}} v_{\colorL,t}^{\pi_{\colorL}}(\tilde{c}).
    \end{align*}
\end{definition}

\begin{definition}[Optimal action-value function]
Let \(\pi_{\colorL} \doteq (\delta_{\colorL,0},\ldots,\delta_{\colorL,\ell-1})\) be a leader policy. 
The optimal leader action-value function \( q_{\colorL,t}^{\pi_{\colorL}} \colon \tilde{C} \times \Delta_{\colorL} \to \mathbb{R} \)  at stage \(t\) 
under~\(\pi_{\colorL}\) is defined, for any transient credible set \( \tilde{c} \in \tilde{C} \) and leader decision rule \( \delta_{\colorL} \), as
\begin{align*}
q_{\colorL,t}^{\pi_{\colorL}}\colon (\tilde{c}, \delta_{\colorL}) &\mapsto v_{\colorL,t}^{\pi_{\colorL}}(\tilde{T}(\tilde{c}, \delta_{\colorL})),
\end{align*}
where \( \tilde{T}(\tilde{c}, \delta_{\colorL}) \) is the successor credible set induced by \( \delta_{\colorL} \) starting in credible set \( \tilde{c} \). The optimal (leader) action-value function \(q_{\colorL,t}^*\colon \tilde{C}\times \Delta_{\colorL} \to \mathbb{R}\) at stage \(t\) of surrogate model \(M'\) is given: 
\begin{align*}
q_{\colorL,t}^*\colon (\tilde{c}, \delta_{\colorL}) & \mapsto 
\textstyle 
\max_{\pi_{\colorL}\in \Pi_{\colorL}} q_{\colorL,t}^{\pi_{\colorL}} (\tilde{c}, \delta_{\colorL}).
\end{align*}

\end{definition}

We now show that these value functions satisfy Bellman optimality equations, formalising how optimal values propagate across decision stages.

\begin{theorem}[Proof in Appendix~\ref{appendix:thm:boe}]
\label{thm:boe}
The optimal leader state-value function \( v_{\colorL,t}^* \colon \tilde{C} \to \mathbb{R}\) at stage \(t<\ell\) satisfies: 
\begin{align*}
v_{\colorL,t}^*\colon \tilde{c} &\mapsto \textstyle 
\max_{\delta_{\colorL} \in \Delta_{\colorL}} v_{\colorL,t+1}^*\big(\tilde{T}(\tilde{c},\delta_{\colorL})\big).
\end{align*}
\end{theorem}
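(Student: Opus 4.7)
The proof is essentially a direct unfolding of definitions, since the credible MDP \(M'\) has deterministic transitions \(\tilde{T}\) and zero intermediate rewards (all value accrues at the terminal stage via \(R\)). My plan is to first establish a policy-level one-step recursion, and then to take the maximum over policies and factor it through the choice of the stage-\(t\) decision rule.

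First I would observe that for any leader policy \(\pi_{\colorL}=(\delta_{\colorL,0},\ldots,\delta_{\colorL,\ell-1})\), the value \(v_{\colorL,t}^{\pi_{\colorL}}(\tilde c)\) depends only on the tail \((\delta_{\colorL,t},\ldots,\delta_{\colorL,\ell-1})\), because the iterated composition that defines it starts from \(\delta_{\colorL,t}\). By associativity of this composition, for every \(t<\ell\) one immediately obtains the policy-level recursion \(v_{\colorL,t}^{\pi_{\colorL}}(\tilde c) = v_{\colorL,t+1}^{\pi_{\colorL}}\bigl(\tilde{T}(\tilde c,\delta_{\colorL,t})\bigr)\), with the terminal case handled by \(v_{\colorL,\ell}^{\pi_{\colorL}}(\tilde c)=R(\tilde c)\). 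No conditional expectation is needed, since \(\tilde{T}\) is a deterministic operator in \(M'\).

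Next I would take the maximum of both sides over \(\pi_{\colorL}\in\Pi_{\colorL}\) and factor it by first choosing the stage-\(t\) decision rule \(\delta_{\colorL,t}\) and then the remaining tail \((\delta_{\colorL,t+1},\ldots,\delta_{\colorL,\ell-1})\). Because the right-hand side depends on \(\pi_{\colorL}\) only through its components from stage \(t\) onward, the inner maximum over tails, evaluated at the fixed successor \(\tilde{T}(\tilde c,\delta_{\colorL,t})\), equals \(v_{\colorL,t+1}^{*}\bigl(\tilde{T}(\tilde c,\delta_{\colorL,t})\bigr)\) by definition of the optimal state-value function. Pulling the remaining maximum outside yields the claimed Bellman equation.

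The only subtlety—not a genuine obstacle—is the lifting between optimisation domains. Any full leader policy projects to its stage-\(t\) tail, and conversely, any pair consisting of an optimiser \(\delta^{\star}_{\colorL,t}\) and a tail policy attaining \(v_{\colorL,t+1}^{*}\bigl(\tilde{T}(\tilde c,\delta^{\star}_{\colorL,t})\bigr)\) can be completed into a valid leader policy by prepending arbitrary decision rules for stages \(0,\ldots,t-1\). Since these prepended rules do not enter the definition of \(v_{\colorL,t}^{\pi_{\colorL}}\), both directions of inequality follow and the two maxima coincide exactly. The hardest part is thus purely bookkeeping—verifying that the lossless-filtering step (established earlier) does not perturb the recursion—which follows from the fact that \(\tilde{T}\) is defined precisely as \(\mathbb{F}\circ T\), so filtered successors are the correct transitions in \(M'\).
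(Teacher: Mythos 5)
Your proposal is correct and follows essentially the same route as the paper's proof: both decompose the maximisation over $\pi_{\colorL}$ into a maximum over the stage-$t$ decision rule $\delta_{\colorL,t}$ followed by a maximum over the tail applied at the deterministic successor $\tilde{T}(\tilde{c},\delta_{\colorL,t})$, and then identify the inner maximum with $v_{\colorL,t+1}^{*}$. The extra bookkeeping you mention (lifting between optimisation domains, the role of $\mathbb{F}$ inside $\tilde{T}$) is implicit in the paper's argument and does not change the substance.
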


We then establish a key regularity result, showing that the optimal value function is piecewise linear and convex over marginal belief states.  
This property underpins the design of point-based approximation methods by guaranteeing that value representations can be compactly expressed.

Define, for any \( \colorI \in \{\colorL, \colorF\} \), the value function at stage \(t\) induced by the vector set \( \Gamma \) at an occupancy state \( o \) as
\[
v_{\colorI,t}^{\Gamma}(o) \doteq 
\rho_{\colorI}(o)
+
\gamma^t
\sum_{h_{\colorF}} 
\Pr(h_{\colorF} \mid o)
\max_{\alpha \in \bar{\Gamma}(o_{h_{\colorF}})}
\alpha_{\colorI}(o_{h_{\colorF}}),
\]
where \(\bar{\Gamma}(o_{h_{\colorF}}) \doteq \argmax_{\alpha \in \Gamma} \alpha_{\colorF}(o_{h_{\colorF}})\) is a filtered vector set induced by \(\Gamma\)  and \(o_{h_{\colorF}}\colon (s,h_{\colorL}) \mapsto \frac{o(s,h_{\colorL},h_{\colorF}) }{\Pr(h_{\colorF} \mid o)}\) is a conditional occupancy state.

\begin{theorem}[Proof in Appendix~\ref{appendix:thm:convexity}]
\label{thm:convexity}
The optimal leader value function \(v_{\colorL,t}^*\colon \tilde{C}\to\mathbb{R}\) at stage \(t\), solution of Bellman's optimality equations (Theorem \ref{thm:boe}), is uniformly continuous across credible sets. There exists a collection \(\Lambda_t\) of  finite sets \(\Gamma\) of value vectors \((\alpha_{\colorL},\alpha_{\colorF})\) linear across  conditional occupancy states at stage \(t\): 
\[
    v_{\colorL,t}^*\colon \tilde{c} \mapsto \max_{\Gamma\in \Lambda_t}
    \left\{ v_{\colorL,t}^{\Gamma}(o) \,\middle|\,  o \in\argmax_{o'\in \tilde{c}} v_{\colorF,t}^{\Gamma}(o')
    \right\}.
\]
\end{theorem}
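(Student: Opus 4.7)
The plan is to proceed by backward induction on $t$, simultaneously constructing the collection $\Lambda_t$ and verifying the stated representation; uniform continuity will then follow from the finite vector encoding together with the bounded-reward hypothesis.

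For the base case $t=\ell$, a terminal credible set $\tilde{c}\in\tilde{F}$ satisfies $v_{\colorL,\ell}^*(\tilde{c}) = R(\tilde{c}) = \max\{\rho_{\colorL}(o) : o\in\argmax_{o'\in\tilde{c}}\rho_{\colorF}(o')\}$ by definition. I would take $\Lambda_\ell$ to consist of a single set $\Gamma_\ell$ whose vectors $(\alpha_{\colorL},\alpha_{\colorF})$ return the value-so-far pair $(\rho_{\colorL},\rho_{\colorF})$. Because $\rho_\colorI$ is linear in the joint distribution $o$, it remains linear in any conditional $o_{h_\colorF}$, so the required representation is immediate at $t=\ell$.

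For the inductive step, assume the claim holds at $t+1$. Theorem~\ref{thm:boe} yields $v_{\colorL,t}^*(\tilde{c}) = \max_{\delta_\colorL\in\Delta_\colorL} v_{\colorL,t+1}^*(\tilde{T}(\tilde{c},\delta_\colorL))$. I would define a one-step backup operator that, given $\Gamma'\in\Lambda_{t+1}$ and a leader decision rule $\delta_\colorL$, produces a finite vector set $\Gamma$ at stage $t$ by propagating each $(\alpha_\colorL',\alpha_\colorF')\in\Gamma'$ through the transition operator $\tau$ and adding the immediate reward weighted by $\delta_\colorL$, by a follower rule $\delta_\colorF$, and by the dynamics $p$. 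Since $\tau$ and the reward summation act linearly on $o$, linearity in the conditionals $o_{h_\colorF}$ is preserved. Setting $\Lambda_t \doteq \{\mathtt{backup}(\Gamma',\delta_\colorL) : \Gamma'\in\Lambda_{t+1},\ \delta_\colorL\in\Delta_\colorL\}$, I would verify the representation in two directions: every such $\Gamma$ corresponds to a feasible leader continuation and therefore lower-bounds $v_{\colorL,t}^*(\tilde{c})$; conversely, an optimal continuation from $\tilde{c}$ selects some $(\delta_\colorL^\star,\Gamma^\star)$ whose backup recovers the optimal value. The inner $\argmax_{o'} v_{\colorF,t}^\Gamma(o')$ is precisely what encodes the pessimistic closure over follower responses and the SSE leader-favourable tie-breaking inherited from Theorem~\ref{thm:lossless-reduction}. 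Uniform continuity then follows: each $v_{\colorL,t}^\Gamma$ is linear in $o$ with coefficients bounded by $m/(1-\gamma)$, and the outer maxima preserve uniform continuity on the compact space of occupancy states and lift to credible sets via a Hausdorff-type distance.

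The main obstacle I anticipate is reconciling the explicit enumeration over follower decision rules $\delta_\colorF\in\Delta_{\colorF,t}$ built into $\tilde{T}$ with the implicit enumeration over follower-optimal \emph{occupancy states} in the inner $\argmax$ of the representation. Establishing this equivalence requires exploiting the linearity of $v_{\colorF,t}^\Gamma$ in the conditionals $o_{h_\colorF}$ together with SSE tie-breaking, and arguing that occupancy states removed by the filter $\mathbb{F}$ cannot be follower-optimal within their belief class, so that the vector representation remains tight after filtering.
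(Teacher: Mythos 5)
Your construction of $\Lambda_t$ by backward induction (a one-step backup of $\Lambda_{t+1}$ under each leader decision rule) is a genuinely different route from the paper, which instead builds $\Lambda_t$ in one shot: each $\Gamma \in \Lambda_t$ is indexed by a full leader continuation $\pi_{\colorL,t:\ell-1}$ and collects the pairs $(v_{\colorL,t}^{\pi_{\colorL},\pi_{\colorF}(h_{\colorF})}, v_{\colorF,t}^{\pi_{\colorL},\pi_{\colorF}(h_{\colorF})})$ over all follower policy trees and histories, so the representation falls out of unrolling the definition of $R$ on terminal credible sets (Lemma~\ref{appendix:lem:convexity}). Your inductive variant could in principle yield the same object, but the step you yourself flag as the ``main obstacle''---commuting the backup with the follower-optimal $\argmax$, i.e.\ showing that enumerating $\delta_{\colorF}\in\Delta_{\colorF,t}$ inside $\tilde{T}$ and then taking $\argmax_{o'} v_{\colorF,t+1}^{\Gamma'}$ agrees with taking $\argmax_{o'} v_{\colorF,t}^{\Gamma}$ at stage $t$ under leader-favourable tie-breaking---is precisely the content of the claim and is left unresolved. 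The paper's direct unrolling avoids ever having to prove this commutation stage by stage.

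The more serious gap is in the continuity argument. You assert that each $v_{\colorL,t}^{\Gamma}$ is linear in $o$ and that ``the outer maxima preserve uniform continuity.'' Neither holds as stated: $v_{\colorL,t}^{\Gamma}(o)$ contains the term $\max_{\alpha\in\bar{\Gamma}(o_{h_{\colorF}})}\alpha_{\colorL}(o_{h_{\colorF}})$ where $\bar{\Gamma}(o_{h_{\colorF}}) = \argmax_{\alpha\in\Gamma}\alpha_{\colorF}(o_{h_{\colorF}})$ depends on $o$ through an $\argmax$, so the leader component is only piecewise linear with potential jumps where the follower's selection switches; and the outer expression $\max_{\Gamma}\{v_{\colorL,t}^{\Gamma}(o) : o\in\argmax_{o'} v_{\colorF,t}^{\Gamma}(o')\}$ is a maximum constrained to an $\argmax$ set, a structure that is \emph{not} generically continuous (the $\argmax$ correspondence is only upper hemicontinuous). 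This is exactly the difficulty the paper's proof is organised around: Lemma~\ref{lem:stability:filtered:maxima} and Lemma~\ref{lem:continuity:filtered:sets} are devoted to showing that maxima over these filtered ($\argmax$-constrained) sets remain stable under Hausdorff perturbations of the credible set. Your proposal skips this entirely, so the uniform-continuity half of the theorem is not established. To repair it you would need either to reproduce an analogue of those two lemmas, or to argue directly that the SSE tie-breaking structure renders the filtered maxima stable.
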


Uniform continuity guarantees that small perturbations in credible sets lead to small changes in value, a prerequisite for approximating \(v_{\colorL,t}^*\) using a finite set of representative credible sets. To approximate optimal value function \(v_{\colorL,t}^*\), we introduce a point-based value iteration (PBVI) scheme \cite{pineau2003point,peralezsolving,peralez2024optimally}.  Building on the uniform continuity property, we now formalise how an approximate greedy leader decision rule can be computed efficiently for any given credible set, showing that it reduces to solving a mixed-integer linear program (MILP).

\begin{theorem}[Proof in Appendix~\ref{appendix:proof:thm:pbvi-backup}]
\label{thm:pbvi-backup}
Let \( \Lambda_{t+1} \) be a finite collection of finite sets \(\Gamma_{t+1}\) of linear functions approximating the optimal leader value function \( v_{\colorL}^* \) at stage \( t{+}1 \).  
Then, for any credible set \( \tilde{c} \) at stage \( t \), the greedy leader decision rule \( \delta_{\colorL}^{\tilde{c}} \) maximising the leader action-value is the solution of a MILP (see Appendix \ref{appendix:greedy:leader:dr}, Table \ref{appendix:eq:milp-backup}).
\end{theorem}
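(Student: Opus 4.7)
The plan is to construct, for each $\Gamma\in\Lambda_{t+1}$, a MILP whose feasible solutions encode (i)~a leader decision rule $\delta_{\colorL}$ applied at $\tilde{c}$, (ii)~a deterministic follower rule indexing a specific successor occupancy state $o'\in\tilde{T}(\tilde c,\delta_{\colorL})$, and (iii)~for every follower history $h_{\colorF}$, the selection of the $\alpha\in\Gamma$ that realises $\bar\Gamma(o'_{h_{\colorF}})$ in the definition of $v^{\Gamma}_{\colorL,t+1}$. The greedy rule $\delta^{\tilde c}_{\colorL}$ is then the argmax of the $|\Lambda_{t+1}|$ resulting optimal values; equivalently, these can be merged into a single MILP with one-hot indicator variables $u_{\Gamma}$ activating one block of constraints at a time. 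Correctness amounts to checking that the MILP's optimal value equals $\max_{\delta_{\colorL}}q^{*}_{\colorL,t}(\tilde c,\delta_{\colorL})$ as given by Theorems~\ref{thm:boe} and~\ref{thm:convexity}.

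For a fixed $\Gamma$, I would first introduce the decision variables. Continuous variables $x(a_{\colorL}\mid h_{\colorL})\in[0,1]$ with simplex constraints encode the leader's stochastic rule, and binary variables $y(a_{\colorF}\mid h_{\colorF})\in\{0,1\}$ with one-hot constraints encode a deterministic follower rule. Auxiliary variables representing the successor occupancy mass $o'(s',(h_{\colorL},a_{\colorL},s'),(h_{\colorF},a_{\colorF},s'))$ are obtained by multiplying the prior occupancy (a known constant), the transition kernel $p$, and the product $x\cdot y$. Since $y$ is binary and $x\in[0,1]$, each bilinear term $x\cdot y$ admits an exact McCormick linearisation with four inequalities per product. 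For every reachable $o'_{h_{\colorF}}$, binary selectors $z(\alpha\mid h_{\colorF})\in\{0,1\}$ with $\sum_{\alpha} z(\alpha\mid h_{\colorF})=1$ encode $\argmax_{\alpha\in\Gamma}\alpha_{\colorF}(o'_{h_{\colorF}})$, enforced by big-$M$ inequalities $\alpha_{\colorF}(o'_{h_{\colorF}})-\alpha'_{\colorF}(o'_{h_{\colorF}})+M\bigl(1-z(\alpha\mid h_{\colorF})\bigr)\geq 0$ for every pair $\alpha,\alpha'\in\Gamma$, which are linear once $o'$ has been expressed through the auxiliary variables.

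Third, I would encode SSE semantics. Follower best-response is imposed by the one-shot deviation principle applied to the value carried by $\Gamma$: for each $h_{\colorF}$ and each alternative action $a'_{\colorF}\in A_{\colorF}$, a single linear inequality states that the expected follower continuation under the chosen $y(\cdot\mid h_{\colorF})$, plus the terminal contribution captured by the $z$-selectors, is at least that obtained by switching to $a'_{\colorF}$ at $h_{\colorF}$. Leader-favourable tie-breaking, i.e.~the outer $\argmax_{o'\in\tilde c'}$ in the formula defining $v^{\Gamma}_{\colorL,t+1}$, is realised automatically by the MILP's objective, which maximises $\rho_{\colorL}(o')+\gamma^{t}\sum_{h_{\colorF}}\Pr(h_{\colorF}\mid o')\sum_{\alpha\in\Gamma}z(\alpha\mid h_{\colorF})\,\alpha_{\colorL}(o'_{h_{\colorF}})$: among the follower-feasible $(y,z)$ configurations, the solver picks the one most favourable to the leader. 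All terms are linear after the McCormick substitutions, so the whole model is a genuine MILP.

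The main obstacle will be the joint, consistent handling of the two $\argmax$ operators embedded in $v^{\Gamma}_{\colorL,t+1}$ — the inner, follower-driven $\argmax_{\alpha\in\Gamma}$ defining $\bar\Gamma(o'_{h_{\colorF}})$ and the outer, leader-driven $\argmax_{o'\in\tilde c'}$ implementing SSE tie-breaking — in combination with the bilinear coupling between the continuous leader mixture $x$ and the binary follower selection $y$. The delicate step is showing that the one-shot deviation inequalities are sufficient (rather than enumerating exponentially many follower histories), which relies on the linearity of $\Gamma$'s follower component in occupancy mass, and that the McCormick envelopes are exact so that every MILP feasible point corresponds to a genuine occupancy state in $\tilde{T}(\tilde c,\delta_{\colorL})$. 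Once these are established, identifying the MILP's optimum with $\max_{\delta_{\colorL}} v^{\Gamma}_{\colorL,t+1}(\tilde T(\tilde c,\delta_{\colorL}))$ and then taking the outer maximum over $\Gamma\in\Lambda_{t+1}$ yields the claim; the detailed encoding is deferred to Appendix~\ref{appendix:greedy:leader:dr}.
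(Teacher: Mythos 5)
Your overall architecture matches the paper's: continuous variables for the leader's stochastic rule, binary one-hot selectors for the follower's deterministic action at each follower history and for the $\alpha$-vector realising $\bar\Gamma(o'_{h_{\colorF}})$, big-$M$ inequalities to linearise the embedded maximisations, and an outer maximisation over $\Gamma\in\Lambda_{t+1}$ (the paper solves one MILP per pair $(o,\Gamma)$ and takes the argmax afterwards, rather than merging with indicator variables, but that is cosmetic). Your McCormick linearisation of the products $x\cdot y$ is also a workable alternative to the paper's device of indexing all $\beta$-variables by $a_{\colorF}\in A_{\colorF}$ and letting the binary selector $\kappa$ pick one via big-$M$, which avoids bilinear terms altogether.

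However, there is a genuine gap in your encoding of the follower's best response. You impose only \emph{current-stage} one-shot deviation inequalities (for each $h_{\colorF}$ and each alternative $a'_{\colorF}$), and you describe the outer $\argmax_{o'\in\tilde c}$ as ``realised automatically by the MILP's objective,'' i.e.\ as a leader-driven choice. But in Theorem~\ref{thm:convexity} that outer operator is $\argmax_{o'\in\tilde c} v^{\Gamma}_{\colorF,t}(o')$ — it is the \emph{follower's} maximisation over its own past decision-rule histories (each occupancy state in $\tilde c$ corresponds to a distinct follower history $\theta_{\colorF,t}$), and only the tie-breaking among its maximisers is leader-favourable. The paper enforces this explicitly via constraint~\eqref{appendix:milp:constraint:2}, $q^{\Gamma}_{\colorF,t}(o,\delta_{\colorL})\ge q^{\Gamma}_{\colorF,t}(o',\delta_{\colorL})$ for all $o'\in\tilde c$. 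Your one-shot deviation constraints at stage $t$ cannot recover this: they certify that the follower's \emph{future} play is greedy given $\Gamma$, but say nothing about whether the follower would have preferred a different \emph{past} policy, i.e.\ a different occupancy state in $\tilde c$. Letting the objective choose $o$ freely therefore permits the MILP to select an occupancy state that is not a follower best response across the credible set, overestimating the leader's value and breaking SSE consistency. The fix is exactly the paper's extra family of constraints comparing $q^{\Gamma}_{\colorF,t}$ across all $o'\in\tilde c$ — a finite enumeration over the filtered credible set, not over follower histories, so it does not reintroduce the exponential blow-up you were trying to avoid.
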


Intuitively, the MILP exploits the uniform continuity structure of the action-value functions to jointly select leader decisions and consistent follower best responses, using integer variables to encode max-selection constraints. Next, we leverage these insights to design practical algorithms for scalable equilibrium computation.

\section{Point-Based Value Iteration}

For computing an SSE in $M$, we adapt point-based value iteration (PBVI, \emph{see} Algorithm~\ref{alg:pbvi-credible-mdp}) \citet{pineau2003point} to solving credible MDP \(M'\) induced by LF-GSSG \(M\).
The algorithm alternates between three
phases: expansion of credible sets alongside with (conditional) occupancy states, improvement of leader value function, and pruning to control complexity. The pseudo-code is given in \Cref{appendix:pbvi}.

The expansion phase (\emph{see} Algorithm~\ref{alg:pbvi-expand})  samples a finite set \(\tilde{C}' \subset \tilde{C}\) of representative credible sets through forward simulations. Starting with the initial state \(s_0\), each credible set \( \tilde{c} \in \tilde{C}' \) is expanded by selecting leader decisions \( \delta_{\colorL} \in \Delta_{\colorL} \), and populating occupancy states in \(\tilde{T}(\tilde{c},\delta_{\colorL})\) by sampling the corresponding  follower response \( \delta_{\colorF} \in \Delta_{\colorF}  \) starting in an occupancy state \(o\in \tilde{c}\), and applying the transition \( \tau(o,\delta_{\colorL}, \delta_{\colorF}) \) to generate new occupancy states. Newly reached occupancy states are retained only if their \(\ell_1\)-distance from the current credible set exceeds a fixed threshold, ensuring coverage. 
The improvement phase (\emph{see} Algorithm~\ref{alg:pbvi-backup}) computes, through point-based backups and mixed-integer linear programming, a finite set \(\Gamma^{\tilde{c}}\) of value vectors \((\alpha_{\colorL},\alpha_{\colorF})\) linear over conditional occupancy states for every credible sample set \(\tilde{c}\in \tilde{C}'\). The  collection  \(\Lambda \doteq \{\Gamma^{\tilde{c}} \mid \tilde{c}\in \tilde{C}'\}\) induces leader value function \(v_{\colorL}\)  which approximates optimal leader value function \(v_{\colorL}^{*}\).
This point-based update procedure iterates until convergence or computational resources are exhausted.  To further improve scalability, we incorporate two pruning strategies. The first (Algorithm~\ref{alg:pruning-vectors}) eliminates dominated sets of value vectors. The second (Algorithm~\ref{alg:pruning-states}) removes noncredible sets. Both procedures introduce approximation error, but yield significant computational savings.

To analyse the approximation quality of point-based value iteration (PBVI) in the credible MDP \(M'\), we define a sample-based approximation of the optimal leader value function over a finite set \(\tilde{C}' \subset \tilde{C}\) of sampled credible sets. Let \(\Lambda_0, \ldots, \Lambda_\ell\) be the collections of approximate value vectors constructed via Theorem~\ref{thm:pbvi-backup}, inducing an approximate leader value function \(v_{\colorL,t} \colon \tilde{C} \to \mathbb{R}\) at stage \(t\), generalised to arbitrary credible sets \(\tilde{c} \in \tilde{C}\) via nearest-neighbour evaluation:
\[
v_{\colorL,t}(\tilde{c}) = 
\min 
\left\{ 
v_{\colorL,t}(\tilde{c}^*) 
\,\middle|\, 
\tilde{c}^* \in \arg\min_{\tilde{c}' \in \tilde{C}'} d_H(\tilde{c}, \tilde{c}')
\right\}.
\]
The distance \(d_H(\tilde{c},\tilde{c}')\) denotes the \emph{Hausdorff distance} between finite subsets \(\tilde{c},\tilde{c}' \subseteq O\) of occupancy states:
\[
d_H(\tilde{c},\tilde{c}') \doteq \max\left\{
\sup_{o \in \tilde{c}} \inf_{\bar{o} \in \tilde{c}'} \| o - \bar{o} \|_1,\ 
\sup_{\bar{o} \in \tilde{c}'} \inf_{o \in \tilde{c}} \| \bar{o} - o \|_1
\right\}.
\]
Let \(v_{\colorL,t}^* \colon \tilde{C} \to \mathbb{R}\) be the optimal leader value function defined in Theorem~\ref{thm:boe}, and let \(m \doteq \max\{\|r_{\colorL}\|_\infty,\ \|r_{\colorF}\|_\infty\}\) denote a uniform bound on instantaneous payoffs. By Theorem~\ref{thm:convexity}, the mapping \(\tilde{c} \mapsto v_{\colorL,t}^*(\tilde{c})\) is uniformly continuous with respect to \(d_H\).
We now express the worst-case approximation error incurred by nearest-neighbour PBVI in terms of the sampling resolution:

\begin{theorem}
\label{thm:pbvi-exploitability}
Let \( \sigma \doteq \sup_{\tilde{c} \in \tilde{C}} \min_{\tilde{c}' \in \tilde{C}'} d_H(\tilde{c}, \tilde{c}') \) be the Hausdorff covering radius of the PBVI sample set \(\tilde{C}'\). Then the exploitability \(\varepsilon\) of the leader policy returned by PBVI satisfies:
\(
\varepsilon \leq  \textstyle \frac{2m\sigma}{(1 - \gamma)^2} \left[ 1 + \ell \gamma^{\ell+1} - (\ell + 1) \gamma^\ell \right]
\).
\end{theorem}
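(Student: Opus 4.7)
The plan is to combine three ingredients: (i) the piecewise-linear convex (PWLC) representation of $v_{\colorL,t}^*$ from Theorem~\ref{thm:convexity}, which yields an explicit Hausdorff–Lipschitz modulus $L_t$; (ii) a backward induction over stages exploiting the determinism of $\tilde{T}$ and the vanishing of transient rewards in $M'$; and (iii) the standard conversion from a uniform value-function error to leader exploitability. Writing $e_t \doteq \sup_{\tilde{c}\in\tilde{C}}|v_{\colorL,t}(\tilde{c})-v_{\colorL,t}^*(\tilde{c})|$, I would aim for a one-stage recursion $e_t \leq L_t \sigma + e_{t+1}$ with boundary condition $e_\ell = 0$, and then recognise the closed-form identity $\sum_{k=1}^{\ell} k\gamma^{k-1} = \frac{1+\ell\gamma^{\ell+1}-(\ell+1)\gamma^\ell}{(1-\gamma)^2}$, obtained by differentiating $\sum_{k=0}^{\ell}\gamma^k$, when unrolling the recursion.

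For the modulus, I would unfold the representation of Theorem~\ref{thm:convexity}: each $v_{\colorL,t}^\Gamma$ decomposes as the linear value-so-far $\rho_\colorL(o)$ (slope in $o$ bounded by $\frac{m(1-\gamma^t)}{1-\gamma}$) plus the discounted max-of-affine tail $\gamma^t\sum_{h_\colorF}\Pr(h_\colorF\mid o)\max_\alpha \alpha_\colorL(o_{h_\colorF})$ (magnitude bounded by $\gamma^t\frac{m(1-\gamma^{\ell-t})}{1-\gamma}$), and analogously for $v_{\colorF,t}^\Gamma$. The subtlety is the nested follower selection $\arg\max_{o'\in\tilde{c}} v_{\colorF,t}^\Gamma(o')$ used to define $v_{\colorL,t}^*$: as a set-valued map of $\tilde{c}$ it is generically discontinuous, so a plain Lipschitz composition fails. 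A pessimistic-over-argmax perturbation argument, in which the worst admissible swap of the follower's best response is controlled \emph{simultaneously} by the leader and follower $\alpha$-slopes, is what is needed; carrying it out yields a stage-dependent bound of the form $L_t \leq 2m(\ell-t)\gamma^{\ell-t-1}$, with the factor $2$ absorbing the joint sensitivity.

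Given $L_t$, the recursion closes by a triangle inequality against a nearest sample $\tilde{c}^\star\in\arg\min_{\tilde{c}'\in\tilde{C}'} d_H(\tilde{c},\tilde{c}')$: the nearest-neighbour definition of the extended $v_{\colorL,t}$ contributes nothing, the uniform-continuity step contributes $L_t \sigma$, and the sample-side error is bounded by $e_{t+1}$ using the exact PBVI backup of Theorem~\ref{thm:pbvi-backup}, the determinism of $\tilde{T}$, and the non-expansiveness of $\max_{\delta_\colorL}$ in $\ell_\infty$. Unrolling gives $e_0 \leq \sigma\sum_{t=0}^{\ell-1} L_t = 2m\sigma\sum_{k=1}^\ell k\gamma^{k-1}$, which equals the stated right-hand side via the closed-form identity above. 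Finally, the exploitability of the greedy leader policy induced by $v_{\colorL,0}$ is bounded by $e_0$, the usual value-to-exploitability doubling being already absorbed into the factor $2$ within $L_t$ through the tie-breaking bound.

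The main obstacle I expect is the Hausdorff–Lipschitz argument for the follower tie-breaking: extracting the specific stage-dependent factor $(\ell-t)\gamma^{\ell-t-1}$ — rather than a crude stage-independent constant of order $\frac{m}{1-\gamma}$ — requires keeping the leader and follower $\alpha$-vector slopes coupled throughout the perturbation analysis and leveraging the finiteness of $\Lambda_t$ to control argmax swaps. Once this modulus is pinned down, the remaining steps (backward induction, arithmetic–geometric summation, and the greedy-policy exploitability conversion) are routine.
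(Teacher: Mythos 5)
Your proposal takes a genuinely different route from the paper's proof, but it contains a gap at exactly the step you yourself flag as "the main obstacle." The paper does not perform a backward induction on a value-function error $e_t$ over credible sets, and it never needs a Hausdorff--Lipschitz modulus for the composite map $\tilde{c}\mapsto v_{\colorL,t}^*(\tilde{c})$. Instead it runs a forward telescoping (performance-difference) argument along the trajectory of credible sets induced by an optimal leader policy and the worst-case best response: the gap $v_{\colorL,0}^*(\tilde{c}_0)-v_{\colorL,0}^{\pi'_\colorL,\pi'_\colorF}(s_0)$ is rewritten as $\sum_{t=0}^{\ell-1}\gamma^t\bigl(q_{\colorL,t}^{\pi'_\colorL,\pi'_\colorF}(o_t,\delta'_{\colorL,t},\delta'_{\colorF,t})-v_{\colorL,t}^{\pi'_\colorL,\pi'_\colorF}(o_t)\bigr)$, each term is compared against a nearby sampled occupancy state $o'_t$ with $\|o_t-o'_t\|_1\le\sigma$, and the difference is controlled by H\"older's inequality using the \emph{linearity} of $q^{\pi'}-v^{\pi'}$ over occupancy states for the fixed policy pair. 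Because the follower response is held fixed in each comparison, the discontinuous argmax over $\tilde{c}$ never enters, which is precisely the difficulty your route must confront head-on.

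The concrete gaps in your plan are these. First, the modulus $L_t\le 2m(\ell-t)\gamma^{\ell-t-1}$ is asserted, not derived, and its form appears reverse-engineered from the target sum rather than from the structure of $v_{\colorL,t}^*$: the natural sensitivity of the stage-$t$ value (value-so-far plus discounted tail) is of order $2m\gamma^t\tfrac{1-\gamma^{\ell-t}}{1-\gamma}$ per unit of $\ell_1$-perturbation, which for small $\gamma$ is dominated by the immediate-reward term of order $m$, whereas your $L_0=2m\ell\gamma^{\ell-1}$ vanishes as $\gamma\to 0$ --- so it cannot be a valid uniform modulus stage by stage even though $\sum_t L_t$ happens to equal $\sum_t \gamma^t\tfrac{1-\gamma^{\ell-t}}{1-\gamma}$ after regrouping the double sum. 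Second, Theorem~\ref{thm:convexity} supplies only \emph{uniform continuity}, not a Lipschitz constant; turning the follower tie-breaking $\argmax_{o'\in\tilde{c}}v_{\colorF,t}^{\Gamma}(o')$ into a quantitative Hausdorff--Lipschitz bound is not routine and is not done anywhere in the paper. Third, the claim that the greedy-policy exploitability conversion is "already absorbed into the factor $2$ within $L_t$" conflates two distinct constants: in the paper the factor $2$ arises from bounding $\|q_{\colorL,t}^{\pi'}(\cdot,\delta'_{\colorL,t},\delta'_{\colorF,t})-v_{\colorL,t}^{\pi'}(\cdot)\|_\infty$ by twice the reward envelope, not from a value-to-policy loss doubling, so that step also needs an explicit argument in your framework.
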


\section{Experimental Evaluation}
\label{sec:experiments}

We evaluate the empirical performance of our dynamic programming framework for computing strong Stackelberg equilibria (SSEs) in leader–follower general-sum stochastic games (LF-GSSGs). The main objective is to quantify the impact of history-dependent planning on leader value and runtime across diverse strategic settings. Experiments were run on a machine with 64\,GB RAM and a 4.9\,GHz CPU, with a timeout of 10 minutes per run. Linear programs and mixed-integer programs were solved using ILOG CPLEX. Source code will be released publicly.

We compare six planning variants reported in Appendix~\ref{appendix:algorithms}:~\textbf{H} denotes point-based value iteration (PBVI) over credible sets induced by history-dependent policies; \textbf{S} restricts the leader to Markov (state-dependent) policies, which limit the expressiveness of the resulting credible sets compared to history-dependent planning; \textbf{BI} applies backward induction using the full follower value function; \textbf{MY} is a myopic variant that considers only stage-wise follower rewards;~\textbf{LP} solves a linear program per follower policy~\cite{conitzer2006computing}; and \textbf{MILP} encodes the full strategy space in a single mixed-integer program~\cite{3038794.3038831}.
The benchmark suite includes five domains: \textsc{Centipede}, \textsc{Match}, \textsc{Patrolling}, \textsc{Dec-Tiger}, and \textsc{MABC}. These domains span cooperative, competitive, and mixed-motive scenarios, with varying reliance on history. Formal game definitions are deferred to Appendix~\ref{appendix:benchmarks}.
Table~\ref{tab:compact-results} summarises, for each method and horizon length \( \ell \), the resulting SSE leader value (V) and runtime (T in seconds). Additional sensitivity analyses appear in Appendix~\ref{appendix:results}.

\begin{table}[t]
\scriptsize
\setlength{\tabcolsep}{3pt}
\renewcommand{\arraystretch}{1.1}
\scalebox{0.9}{
\begin{tabular}{c
  cc cc cc cc cc cc}
\toprule
\multicolumn{1}{c}{} 
& \multicolumn{2}{c}{\textbf{H}} 
& \multicolumn{2}{c}{\textbf{S}} 
& \multicolumn{2}{c}{\textbf{BI}} 
& \multicolumn{2}{c}{\textbf{MY}} 
& \multicolumn{2}{c}{\textbf{LP}} 
& \multicolumn{2}{c}{\textbf{MILP}} \\[-1mm]
\cmidrule(lr){2-3} \cmidrule(lr){4-5} \cmidrule(lr){6-7} \cmidrule(lr){8-9} \cmidrule(lr){10-11} \cmidrule(lr){12-13}
$\ell$ 
& V & T 
& V & T 
& V & T 
& V & T 
& V & T 
& V & T \\
\midrule
\multicolumn{13}{c}{\textsc{Centipede}  \cite{rosenthal1981games}} \\[-0.5mm]
\hline
1 & \pl1{\bf1} & 0.01 & \pl1{\bf1} & 0.01 & \pl1{\bf1} & 0.01 & \pl1{\bf1} & 0.01 & \pl1{\bf1} & 0.25 & \pl1{\bf1} & 0.14 \\
\myrowcolour
2 & \pl1{\bf1} & 0.01 & \pl1{\bf1} & 0.01 & \pl1{\bf1} & 0.01 & \pl1{\bf1} & 0.01 & \pl1{\bf1} & 20.21 & \pl1{\bf1} & 7.29 \\
3 & \pl1{\bf2} & 0.01 & \pl1{\bf2} & 0.01 & 1 & 0.01 & 1 & 0.01 & --- & --- & --- & --- \\
\myrowcolour
6 & \pl1{\bf4.67} & 3.68 & \pl1{\bf4.67} & 0.11 & 1 & 0.17 & 1 & 0.13 & --- & --- & --- & --- \\
\midrule
\multicolumn{13}{c}{\textsc{Match} \cite{10.5555/2900929.2900938}} \\
\hline
1 & \pl1{\bf-1000} & 0.01 & \pl1{\bf-1000} & 0.01 & \pl1{\bf-1000} & 0.01 & \pl1{\bf-1000} & 0.01 & \pl1{\bf-1000} & 0.2 & \pl1{\bf-1000} & 0.3 \\
\myrowcolour
2 & \pl1{\bf-1000} & 0.01 & \pl1{\bf-1000} & 0.01 & \pl1{\bf-1000} & 0.01 & \pl1{\bf-1000} & 0.01 & --- & --- & --- & --- \\
3 & \pl1{\bf0} & 0.01 & -1000 & 0.01 & -1000 & 0.04 & -1000 & 0.02 & --- & --- & --- & --- \\
\myrowcolour
6 & \pl1{\bf0} & 0.03 & -2000 & 0.03 & -2000 & 0.03 & -2000 & 0.03 & --- & --- & --- & --- \\
\midrule
\multicolumn{13}{c}{\textsc{Dec-Tiger} \cite{nair2003taming}} \\
\hline
1 & \pl1{\bf20} & 0.01 & \pl1{\bf20} & 0.01 & \pl1{\bf20} & 0.01 & \pl1{\bf20} & 0.01 & \pl1{\bf20} & 0.19 & \pl1{\bf20} & 0.22 \\
\myrowcolour
2 & \pl1{\bf40} & 0.01 & \pl1{\bf40} & 0.01 & \pl1{\bf40} & 0.01 & \pl1{\bf40} & 0.01 & \pl1{\bf40} & 1.07 & \pl1{\bf40} & 0.23 \\
3 & \pl1{\bf60} & 0.04 & \pl1{\bf60} & 0.03 & \pl1{\bf60} & 0.01 & \pl1{\bf60} & 0.01 & --- & --- & --- & --- \\
\myrowcolour
6 & \pl1{\bf120} & 0.45 & \pl1{\bf120} & 0.30 & \pl1{\bf120} & 0.20 & \pl1{\bf120} & 0.10 & --- & --- & --- & --- \\
\midrule
\multicolumn{13}{c}{\textsc{MABC} \cite{hansen2004dynamic}} \\
\hline
1 & \pl1{\bf1} & 0.01 & \pl1{\bf1} & 0.01 & \pl1{\bf1} & 0.06 & \pl1{\bf1} & 0.03 & \pl1{\bf1} & 0.20 & \pl1{\bf1} & 0.19 \\
\myrowcolour
2 & \pl1{\bf2} & 0.11 & \pl1{\bf2} & 0.05 & \pl1{\bf2} & 0.09 & \pl1{\bf2} & 0.08 & --- & --- & --- & --- \\
3 & \pl1{\bf2.99} & 0.73 & \pl1{\bf2.99} & 0.49 & \pl1{\bf2.99} & 0.05 & \pl1{\bf2.99} & 0.06 & --- & --- & --- & --- \\
\myrowcolour
6 & \pl1{\bf5.93} & 208.6 & \pl1{\bf5.93} & 0.59 & \pl1{\bf5.93} & 0.09 & \pl1{\bf5.93} & 0.08 & --- & --- & --- & --- \\
\midrule
\multicolumn{13}{c}{\textsc{Patrolling} \cite{10.5555/2900929.2900938}} \\
\hline
1 & \pl1{\bf0} & 0.01 & \pl1{\bf0} & 0.01 & \pl1{\bf0} & 0.01 & \pl1{\bf0} & 0.01 & \pl1{\bf0} & 0.1 & \pl1{\bf0} & 0.09 \\
\myrowcolour
2 & \pl1{\bf0} & 0.04 & \pl1{\bf0} & 0.02 & \pl1{\bf0} & 0.02 & \pl1{\bf0} & 0.02 & --- & --- & --- & --- \\
3 & \pl1{\bf0} & 3.41 & \pl1{\bf0} & 1.1 & \pl1{\bf0} & 0.10 & \pl1{\bf0} & 0.07 & --- & --- & --- & --- \\
\myrowcolour
6 & \pl1{\bf0} & 66.23 & \pl1{\bf0} & 4.0 & \pl1{\bf0} & 0.14 & \pl1{\bf0} & 0.26 & --- & --- & --- & --- \\
\bottomrule
\end{tabular}
}
\caption{SSE leader value (V) and runtime (T in seconds). 
\pl1{\bf Bold leader values} indicate the highest return across planning variants for each configuration.
``---'' indicates timeout. 
}
\label{tab:compact-results}
\end{table}

\textbf{Results and insights.}  
In history-sensitive games such as \textsc{Centipede} and \textsc{Match}, the history-aware PBVI variant~\textbf{H} consistently achieves the highest leader value, highlighting the benefit of planning over credible sets. 
By contrast,~\textbf{S} performs similarly in domains where history plays a limited strategic role (\eg \textsc{Dec-Tiger}, \textsc{MABC}, \textsc{Patrolling}).  
Both \textbf{BI} and \textbf{MY} often fail to induce valid SSEs because backward induction may ignore how a local policy deviation retroactively alters the follower’s interpretation of prior commitment, as formalised in Proposition~\ref{prop:wrong:pi}.
Both \textbf{LP} and \textbf{MILP} become computationally intractable as the horizon increases due to exponential growth in the space of deterministic history-dependent strategies.  
Finally, in fully cooperative or adversarial domains (\eg \textsc{MABC}, \textsc{Dec-Tiger}, \textsc{Patrolling}), 
all methods converge to the same SSE value under cooperative or adversarial symmetry, consistent with DP solutions for MDPs~\cite{Puterman1994} and zero-sum SGs~\cite{shapley_1953_stochastic}.

\section{Conclusion}
\label{sec:conclusion}

We presented a dynamic programming framework for computing strong Stackelberg equilibria (SSEs) in leader--follower general-sum stochastic games (LF-GSSGs). Our central contribution is a lossless reduction of the original game \( M \), which is multi-objective and non-Markovian, into a credible Markov decision process \( M' \) defined over an infinite state space of credible sets. Each credible set captures all follower occupancy states consistent with a fixed leader decision-rule history. This construction preserves optimal leader values, admits recursive decomposition via Bellman’s principle, and supports the transfer of point-based value iteration (PBVI) techniques. We formally proved that optimal value functions over credible sets are uniformly continuous, which supports greedy decision-rule extraction through mixed integer linear programming and enables bounded-error approximation via credible-set sampling.
Our PBVI algorithms (\textbf H and \textbf S) provably yield \(\varepsilon\)-optimal leader policies with respect to the exact SSE value, and outperform existing planning baselines on five LF-GSSG benchmarks, especially in mixed-motive settings requiring history-aware strategies.

Beyond these contributions, our reduction offers a foundation for extending exact dynamic programming to more expressive settings such as leader--follower general-sum partially observable stochastic games \cite{chang2021worstcase}. By explicitly modelling belief updates and credible best responses through occupancy-state transitions, our framework addresses a longstanding open question on the applicability of Bellman-style recursion to asymmetric general-sum stochastic games \cite{shapley_1953_stochastic}. As such, it paves the way for unified methods that integrate strategic planning, information asymmetry, and sample-efficient learning.

\bibliography{aaai2026}

\onecolumn
\appendix 
\section{Table of Notations}
\label{appendix:notations}

\begin{table}[H]
\centering
\renewcommand{\arraystretch}{1.2}
\small
\begin{tabular}{|l|p{8cm}|}
\hline
\multicolumn{2}{|c|}{\textbf{Leader–Follower General-Sum Stochastic Game (\( M \))}} \\ \hline

\multicolumn{2}{|l|}{\textbf{Leader and Follower}} \\ \hline
Leader's / Follower's action set & \( A_{\colorL} \) / \( A_{\colorF} \) \\ \hline
Leader's / Follower's action & \( a_{\colorL} \) / \( a_{\colorF} \) \\ \hline
Joint action & \( a = (a_{\colorL}, a_{\colorF}) \) \\ \hline
Leader's / Follower's reward function & \( r_{\colorL} \) / \( r_{\colorF} \) \\ \hline
Leader's / Follower's policy & \( \pi_{\colorL} \) / \( \pi_{\colorF} \) \\ \hline
Joint policy & \( \pi = (\pi_{\colorL}, \pi_{\colorF}) \) \\ \hline
Leader's / Follower's decision rule at stage \( t \) & \( \delta_{\colorL,t} \), \( \delta_{\colorF,t} \) \\ \hline
Joint decision rule & \( \delta = (\delta_{\colorL}, \delta_{\colorF}) \) \\ \hline

\multicolumn{2}{|l|}{\textbf{Game Structure}} \\ \hline
LF-GSSG model & \( M \) \\ \hline
Number of stages (horizon) & \( \ell \) \\ \hline
Stage index & \( t \in \{0, 1, \dots, \ell\} \) \\ \hline
State space & \( S \) \\ \hline
Environment state & \( s \in S \) \\ \hline
Transition function & \( p(s' \mid s, a_{\colorL}, a_{\colorF}) \) \\ \hline
Discount factor & \( \gamma \in [0,1) \) \\ \hline
Set of best responses to \( \pi_{\colorL} \) & \( \texttt{BR}(\pi_{\colorL}) \) \\ \hline

\multicolumn{2}{|c|}{\textbf{Credible MDP (\( M' \))}} \\ \hline
Leader's / Follower's private environment history & \( h_{\colorL,t} \), \( h_{\colorF,t} \) \\ \hline
Joint private history & \( h = (h_{\colorL}, h_{\colorF}) \) \\ \hline
Decision-rule history up to stage \( t \) & \( \theta_{\colorL,t} \), \( \theta_{\colorF,t} \) \\ \hline
Joint decision-rule history & \( \theta = (\theta_{\colorL}, \theta_{\colorF}) \) \\ \hline
Occupancy state (distribution over joint private histories) & \( o = o_{\theta} \) \\ \hline
Credible set & \( c_{\theta_{\colorL}} = \{ o_{(\theta_{\colorL}, \theta_{\colorF})} \} \) \\ \hline
Filtered credible set & \( \tilde{c} = \mathbb{F}(c_{\theta_{\colorL}}) \) \\ \hline
Set of filtered credible sets & \( \tilde{C} \) \\ \hline
Initial credible set & \( c_0 \) \\ \hline
Filtered transition function & \( \tilde{T}(\tilde{c}, \delta_{\colorL}) \) \\ \hline
Deterministic transition on occupancy states & \( \tau(o, \delta_{\colorL}, \delta_{\colorF}) \) \\ \hline
Expected cumulative reward (leader / follower) & \( \rho_{\colorL}(o) \), \( \rho_{\colorF}(o) \) \\ \hline
Marginal belief (over states) & \( b_o \in \Delta(S) \) \\ \hline
Reward function on \( \tilde{C} \) & \( R(\tilde{c}) \) \\ \hline

\multicolumn{2}{|c|}{\textbf{Value Functions}} \\ \hline
Leader / Follower value function at stage \( t \) under policy \( \pi \) & \( v^{\pi}_{\colorL,t} \), \( v^{\pi}_{\colorF,t} \) \\ \hline
Leader / Follower \( q \)-value at stage \( t \) & \( q^{\pi}_{\colorL,t} \), \( q^{\pi}_{\colorF,t} \) \\ \hline

\multicolumn{2}{|c|}{\textbf{Planning Objects}} \\ \hline
Set of \(\alpha\)-vector pairs & \( \Gamma =\{ (\alpha^{(k)}_\colorL,\alpha^{(k)}_\colorF) \}_{k=1}^K \) \\ \hline
Collection of such sets & \( \Lambda = \{ \Gamma^{(j)} \}_{j=1}^J \) \\ \hline
\end{tabular}
\caption{Summary of notations used throughout the paper. All terms are formally defined in the main text.}
\label{table:notations}
\end{table}
\section{Naive Dynamic Programming: Why It Fails}

A naive dynamic programming approach may fail to compute strong Stackelberg equilibria (SSEs), even when applying seemingly rational local improvement steps. The following proposition illustrates this failure mode.

\begin{proposition}
\label{appendix:prop:wrong:pi}
Let \( \pi_{\colorL} \) be a Markov leader policy. Construct a new policy \( \pi'_{\colorL} \) by selecting, at each state \( s \in S \) and stage \( t < \ell \), a decision rule \( \delta'_{\colorL,t}(s) \) solving:
\[
\begin{aligned}
\max_{\delta_{\colorL} \in \Delta(A_{\colorL})} \; &\max_{a_{\colorF} \in A_{\colorF}} \;
\mathbb{E}_{a_{\colorL} \sim \delta_{\colorL}} \left[ q^{\pi_{\colorL}}_{\colorL,t}(s,s,s, a_{\colorL}, a_{\colorF}) \right] \\
\text{s.t.} \quad &
\mathbb{E}_{a_{\colorL} \sim \delta_{\colorL}} \left[ q^{\pi_{\colorL}}_{\colorF,t}(s,s,s, a_{\colorL}, a_{\colorF}) \right]
= v^{\pi_{\colorL}}_{\colorF,t}(s,s,s).
\end{aligned}
\]
Even if the solution satisfies \( \pi'_\colorL = \pi_\colorL\), the resulting policy \( \pi'_{\colorL} \) may fail to induce a strong Stackelberg equilibrium.
\end{proposition}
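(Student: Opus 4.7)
The plan is to construct an explicit counterexample on the centipede game of Figure~\ref{fig:centipede:game}, exhibiting a Markov leader policy that is a fixed point of the constraint-satisfying DP update at every leader decision node, yet is strictly dominated by the true SSE.

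First, I would fix $\pi_{\colorL}$ to be the Markov policy that plays Take at both $s_1$ and $s_3$. Under $\pi_{\colorL}$ the game terminates immediately at $\bar{s}_1$ with payoffs $(1,0)$, so $v^{\pi_{\colorL}}_{\colorL,0}(s_1,s_1,s_1) = 1$ and $v^{\pi_{\colorL}}_{\colorF,0}(s_1,s_1,s_1) = 0$, while the counterfactual follower value at the leader node $s_3$ is $v^{\pi_{\colorL}}_{\colorF,2}(s_3,s_3,s_3) = 1$. Because $\pi_{\colorL}$ terminates on the first move, every follower policy lies in $\mathtt{BR}(\pi_{\colorL})$, which I will use to compute the off-path Q-values. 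By contrast, the genuine SSE is $\pi^{\star}_{\colorL} = (\text{Continue},\text{Continue})$: the unique follower best response is $(\text{Continue at }s_2,\text{Take at }s_4)$, giving leader value $2$, so $\pi_{\colorL}$ is strictly suboptimal.

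Second, I would verify that the DP update of the proposition returns $\delta'_{\colorL,t}(s) = \delta_{\colorL,t}(s)$ at both leader nodes. At $(t=0,s_1)$ the leader-evaluated Q-values are $(q^{\pi_{\colorL}}_{\colorL,0}, q^{\pi_{\colorL}}_{\colorF,0}) = (1,0)$ for Take and $(3,2)$ for Continue, where the leader-favourable resolution of $\mathtt{BR}(\pi_{\colorL})$ yields $q_{\colorL}=3$ (follower Continues at $s_2$, leader Takes at $s_3$) and the follower-favourable resolution yields $q_{\colorF}=2$ (follower Takes at $s_2$). The feasibility constraint $\mathbb{E}[q^{\pi_{\colorL}}_{\colorF,0}] = v^{\pi_{\colorL}}_{\colorF,0}(s_1,s_1,s_1) = 0$ is met only by Take, matching $\pi_{\colorL}(s_1)$. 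An analogous tabulation at $(t=2,s_3)$ gives $(3,1)$ for Take and $(3,4)$ for Continue, and the constraint $\mathbb{E}[q_{\colorF}] = 1$ again forces Take. Hence $\pi'_{\colorL} = \pi_{\colorL}$ no matter which stage $t$ is chosen for the update, yet $v^{\pi'_{\colorL}}_{\colorL,0}(s_1,s_1,s_1) = 1 < 2$, so $\pi'_{\colorL}$ fails the SSE definition.

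The main obstacle is handling strong-Stackelberg tie-breaking consistently across the off-path subgames: since $\mathtt{BR}(\pi_{\colorL})$ contains every follower policy when $\pi_{\colorL}$ terminates on the first move, the two $\max$-over-$\mathtt{BR}$ clauses defining $q^{\pi_{\colorL}}_{\colorL}$ and $q^{\pi_{\colorL}}_{\colorF}$ may single out different follower continuations in the same hypothetical subgame, so one must check carefully that Take remains the unique constraint-feasible leader action at each node (including ruling out non-trivial mixtures). Once this bookkeeping is done, the argument reduces to reading off the two small tables of Q-values above and comparing leader values, and the proposition follows immediately.
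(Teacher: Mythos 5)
Your proposal is correct and follows essentially the same route as the paper's proof: the centipede game with the Take/Take leader policy as a fixed point of the local update, contrasted against the Continue/Continue policy whose induced best response yields leader value $2 > 1$. The only difference is that you verify the fixed-point property via explicit Q-value tables and the feasibility constraint at each leader node, whereas the paper asserts this step more briefly; your added bookkeeping on the tie-breaking within $\mathtt{BR}(\pi_{\colorL})$ is a sound refinement, not a deviation.
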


\begin{proof}
We construct a counterexample based on the finite-horizon centipede game (Figure~\ref{fig:centipede:game}). Consider a Markov leader policy \( \pi_{\colorL} \) that selects action \( \pl1{\mathtt{take}} \) in both states \( s_1 \) and \( s_3 \), inducing a follower best response \( \pi_{\colorF} \) in which the follower plays \( \pl2{\mathtt{take}} \) in both \( s_2 \) and \( s_4 \). The resulting joint policy \( \pi = (\pi_{\colorL}, \pi_{\colorF}) \) yields the following value vectors:
\[
\begin{aligned}
v_0^{\pi}(s_1) &= (1, 0), \\
v_1^{\pi}(s_2) &= (0, 2), \\
v_2^{\pi}(s_3) &= (3, 1), \\
v_3^{\pi}(s_4) &= (2, 4).
\end{aligned}
\]

Suppose we now apply the local improvement rule defined in the proposition to \( \pi_{\colorL} \). At both \( s_1 \) and \( s_3 \), the decision rule \( \delta_{\colorL,t}(s) \) corresponding to \( \pl1{\mathtt{take}} \) already satisfies the local optimisation objective, and hence the improvement step returns the same policy: \( \pi'_{\colorL} = \pi_{\colorL} \).

However, consider an alternative leader policy \( \pi''_{\colorL} \) that plays \( \pl1{\mathtt{continue}} \) in both \( s_1 \) and \( s_3 \). The follower best response under \( \pi''_{\colorL} \) is to play \( \pl2{\mathtt{continue}} \) in \( s_2 \) and \( \pl2{\mathtt{take}} \) in \( s_4 \), leading to the joint policy \( \pi'' = (\pi''_{\colorL}, \pi''_{\colorF}) \) with:
\[
v_0^{\pi''}(s_1) = (2, 4).
\]
This dominates the outcome under \( \pi'_{\colorL} \), which yielded only \( (1,0) \), thereby contradicting the assumption that \( \pi'_{\colorL} \) induces an SSE.

Hence, the local improvement rule may yield a fixed point that fails to satisfy the global optimality required by the strong Stackelberg equilibrium.
\end{proof}

\section{Credible Markov Decision Processes}
\label{appendix:sec:credible:mdp}

\subsection{Lossless Reduction} \label{appendix:proof:thm:lossless-reduction} 
\begin{lemma} \label{appendix:lem:filtered:equality} For every Markov leader policy \( \pi_{\colorL} \in \Pi_{\colorL} \) in LF-GSSG \( M \), let \( c_{\pi_{\colorL}} \) be the credible set induced by \( \pi_{\colorL} \) starting from the initial state \( s_0 \). Then, \(R(c_{\pi_{\colorL}}) = R\big(\mathbb{F}(c_{\pi_{\colorL}})\big)\), where \( \mathbb{F}(\cdot) \) denotes the filter operator. 
\end{lemma}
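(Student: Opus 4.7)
My plan is to unfold the definition of $R$ on a terminal credible set and then argue that a specific optimiser of the reward expression survives the filter $\mathbb{F}$. Since $c_{\pi_{\colorL}}$ is terminal and $\mathbb{F}(c_{\pi_{\colorL}}) \subseteq c_{\pi_{\colorL}}$, both rewards depend only on the value-so-far vectors $(\rho_{\colorL}(o), \rho_{\colorF}(o))$ of occupancy states in the respective sets, via a lexicographic maximum: first maximise $\rho_{\colorF}$, then maximise $\rho_{\colorL}$ among the follower-maximisers. The lemma therefore reduces to showing that this lexicographic maximum is preserved under filtering.

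The first step is to let $o^{\star}$ be any occupancy state in $c_{\pi_{\colorL}}$ attaining the lex-max, so $\rho_{\colorF}(o^{\star}) = \max_{o' \in c_{\pi_{\colorL}}} \rho_{\colorF}(o')$ and $\rho_{\colorL}(o^{\star}) = R(c_{\pi_{\colorL}})$. I would then show that $o^{\star}$ is Pareto-undominated within its own marginal-belief class $c_{b_{o^{\star}}}$, i.e.\ $c^{o^{\star}}_{b_{o^{\star}}} = \emptyset$. Indeed, any putative dominator $o' \in c_{b_{o^{\star}}}$ would satisfy $\rho_{\colorF}(o') \geq \rho_{\colorF}(o^{\star})$ and $\rho_{\colorL}(o') \geq \rho_{\colorL}(o^{\star})$ with at least one strict inequality; a strict inequality in $\rho_{\colorF}$ contradicts the global maximality of $\rho_{\colorF}(o^{\star})$ over all of $c_{\pi_{\colorL}}$, while a strict inequality only in $\rho_{\colorL}$ contradicts the lex-maximality of $\rho_{\colorL}(o^{\star})$ among follower-maximisers.

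Consequently, $o^{\star}$ belongs to $\{o \in c_{b_{o^{\star}}} \mid c^{o}_{b_{o^{\star}}} = \emptyset\}$, and the operator $\unique(\cdot)$ retains some representative $o^{\star\star}$ with the same value vector $(\rho_{\colorL}(o^{\star}), \rho_{\colorF}(o^{\star}))$. Hence $\mathbb{F}(c_{\pi_{\colorL}})$ contains an element attaining the same lex-max as $o^{\star}$, yielding $R(\mathbb{F}(c_{\pi_{\colorL}})) \geq R(c_{\pi_{\colorL}})$. The reverse inequality is immediate from $\mathbb{F}(c_{\pi_{\colorL}}) \subseteq c_{\pi_{\colorL}}$, since restricting the argmax of $\rho_{\colorF}$ to a subset cannot raise the follower-maximal value, and any surviving follower-maximiser has a leader value that already appeared in $c_{\pi_{\colorL}}$ and is thus bounded by $R(c_{\pi_{\colorL}})$.

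The main obstacle I anticipate is bookkeeping around the interaction between the per-belief-class filtering and the global lexicographic optimisation: the filter acts belief-by-belief, whereas $R$ optimises globally. The argument above resolves this by observing that the global lex-max optimiser is automatically Pareto-undominated \emph{within its own belief class}, so the local filtering rule does not discard it. A minor care point is that $\unique(\cdot)$ may replace $o^{\star}$ by a different representative with the same $(\rho_{\colorL}, \rho_{\colorF})$; since $R$ depends only on this pair, the substitution is harmless, and the Markov-policy hypothesis plays no deeper role here than guaranteeing that the marginal-belief-based partition is the one used by $\mathbb{F}$.
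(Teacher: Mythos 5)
Your proposal is correct and rests on the same key observation as the paper's proof (which argues by contradiction rather than directly): an occupancy state attaining the lexicographic maximum defining $R(c_{\pi_{\colorL}})$ cannot be Pareto-dominated within its own marginal-belief class, so it—or a $\unique$-representative with the same $(\rho_{\colorL},\rho_{\colorF})$—survives filtering. One phrase to tighten: in the reverse inequality, ``any surviving follower-maximiser has a leader value \ldots bounded by $R(c_{\pi_{\colorL}})$'' is only valid because your forward step already guarantees that $\max_{o\in\mathbb{F}(c_{\pi_{\colorL}})}\rho_{\colorF}(o)=\max_{o\in c_{\pi_{\colorL}}}\rho_{\colorF}(o)$, so the argmax over the filtered set is contained in the argmax over the original set; without that, restricting to a subset could lower the follower-maximum and thereby \emph{raise} $R$.
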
 
\begin{proof} Let \(c\) be any credible set. Define $B_c \doteq \{ b_o \mid o \in c \}$ and, for each $b \in B_c$, define the belief class 
\[c_b \doteq \{ o \in c \mid b_o = b \}.\] 
Within each $c_b$, for any occupancy state \(o\in c_b\), define  \[ c^o_b\doteq \{ o' \in c_b \mid \forall i \in I, \rho_i(o') \geq \rho_i(o), \; \exists j \in I, \rho_j(o') > \rho_j(o) \}.\]  We then define the filtered set as \[\mathbb{F}(c) \doteq \cup_{b \in B_c} \unique(\{ o \in c_b \mid c^o_b = \emptyset \}),\] where $\unique(\cdot)$ retains exactly one representative per unique cumulative value-so-far vector $(\rho_{\colorL}, \rho_{\colorF})$. We proceed by contradiction. Assume the statement does not hold, that is, \[ R(c_{\pi_{\colorL}}) \neq R\big(\mathbb{F}(c_{\pi_{\colorL}})\big), \] which implies that there exists an occupancy state \( o \in c_{\pi_{\colorL}} \) whose removal by the filter operator \( \mathbb{F} \) changes the value: \[ R(c_{\pi_{\colorL}}) \neq R\big(c_{\pi_{\colorL}} \setminus \{o\}\big). \]  Let \( b \) be the marginal belief state associated with \( o \), and let \( c_b \subseteq c_{\pi_{\colorL}} \) be the subset of occupancy states in \( c_{\pi_{\colorL}} \) sharing the marginal belief \( b \).  For the removal of \( o \) to affect the value, it must satisfy: \[ \rho_{\colorF}(o) > \max_{o' \in \mathbb{F}(c_{\pi_{\colorL}})} \rho_{\colorF}(o') \quad \text{or} \quad \rho_{\colorF}(o) = \max_{o' \in \mathbb{F}(c_{\pi_{\colorL}})} \rho_{\colorF}(o') \wedge \rho_{\colorL}(o) > R\big(\mathbb{F}(c_{\pi_{\colorL}})\big). \] where \( \rho_{\colorL}(o) \) and \( \rho_{\colorF}(o) \) denote the expected cumulative rewards for the leader and follower, respectively, under occupancy state \( o \).  Since \( o \) was removed by \( \mathbb{F} \), there exists \( o' \in c^o_b \) such that \[ \forall i \in I, \rho_i(o') \geq \rho_i(o), \; \exists j \in I, \rho_j(o') > \rho_j(o).\]  Consequently, we have \[ R\big(\mathbb{F}(c_{\pi_{\colorL}})\big) > \rho_{\colorL}(o), \] which contradicts the earlier assumption that \[ \rho_{\colorF}(o) > \max_{o' \in \mathbb{F}(c_{\pi_{\colorL}})} \rho_{\colorF}(o') \quad \text{or} \quad \rho_{\colorF}(o) = \max_{o' \in \mathbb{F}(c_{\pi_{\colorL}})} \rho_{\colorF}(o') \wedge \rho_{\colorL}(o) > R\big(\mathbb{F}(c_{\pi_{\colorL}})\big). \]  Therefore, the assumption is false, and we conclude that \[ R(c_{\pi_{\colorL}}) = R\big(\mathbb{F}(c_{\pi_{\colorL}})\big). \] \end{proof} 

\begin{theorem}
\label{appendix:thm:lossless-reduction}
The credible MDP \( M' \) associated with the LF-GSSG \( M \) constitutes a lossless reduction in the sense of Definition~\ref{def:lossless-reduction}.
\end{theorem}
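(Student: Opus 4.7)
The plan is to establish all three clauses of Definition~\ref{def:lossless-reduction} by exploiting the fact that a leader policy in $M$ and a policy in $M'$ are both finite sequences of elements of $\Delta_{\colorL}$, so there is a canonical bijection $\iota\colon\pi_{\colorL}=(\delta_{\colorL,0},\dots,\delta_{\colorL,\ell-1})\mapsto\pi'_{\colorL}$. I would prove the three conditions in the order (iii), (i), (ii), since information compatibility is essentially a definitional check, value preservation is the substantive content, and equilibrium correspondence is a corollary of it.

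For information compatibility (iii), I would simply observe that each occupancy state $o_\theta$ is a distribution over the same joint histories $(s_t,h_{\colorL,t},h_{\colorF,t})$ that already appear in $M$, that credible sets $c_{\theta_{\colorL,t}}$ are indexed by leader decision-rule histories, and that $\tilde{T}(\tilde{c},\delta_{\colorL})$ depends only on the leader's own decision rule; no observation of the follower's private action or decision rule is ever used, so $M'$ grants the leader no surveillance power beyond $M$.

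For value preservation (i), I would proceed by induction on $t$ to show that executing $\iota(\pi_{\colorL})$ from $c_0$ in $M'$ produces exactly the sequence $\tilde{c}_{\theta_{\colorL,0}},\dots,\tilde{c}_{\theta_{\colorL,\ell}}$ with $\theta_{\colorL,t}=(\delta_{\colorL,0},\dots,\delta_{\colorL,t-1})$. The inductive step unfolds $\tilde{T}(\tilde{c}_{\theta_{\colorL,t}},\delta_{\colorL,t})=\mathbb{F}(T(\tilde{c}_{\theta_{\colorL,t}},\delta_{\colorL,t}))$ and uses the definition of $T$ to match, follower-decision-rule by follower-decision-rule, the occupancy states produced by $\tau$ against those enumerated in $c_{\theta_{\colorL,t+1}}$. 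Because intermediate rewards are zero and the values-so-far $\rho_{\colorL},\rho_{\colorF}$ are transported inside the occupancy states themselves, the leader value in $M'$ collapses to $R(\tilde{c}_{\theta_{\colorL,\ell}})=\max\{\rho_{\colorL}(o)\mid o\in\argmax_{o'\in \tilde{c}_{\theta_{\colorL,\ell}}}\rho_{\colorF}(o')\}$. By the definition of occupancy states, this quantity equals $\max_{\pi_{\colorF}\in\mathtt{BR}(\pi_{\colorL})}v^{(\pi_{\colorL},\pi_{\colorF})}_{\colorL,0}(s_0,s_0,s_0)$ in $M$, which is precisely the leader's SSE evaluation under $\pi_{\colorL}$. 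For equilibrium correspondence (ii), the bijection $\iota$ together with value preservation immediately yields that $\pi^{\star}_{\colorL}$ maximises the leader's SSE payoff in $M$ iff $\iota(\pi^{\star}_{\colorL})$ maximises $v^{\pi'_{\colorL}}_{M',\colorL,0}(c_0)$; the required surjection from SSEs of $M$ onto SSEs of $M'$ is $\iota$ itself, and follower payoffs are preserved because they are recorded in $\rho_{\colorF}$ on the terminal occupancy state selected inside the definition of $R$.

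The main obstacle is making the inductive value-preservation argument airtight in the presence of the filter $\mathbb{F}$ applied at \emph{every} intermediate stage rather than only terminally. Lemma~\ref{appendix:lem:filtered:equality} is stated at a single terminal credible set, but the credible MDP filters successors at each step, so one must show that filtering commutes (in a value-theoretic sense) with the transition operator $T$. The right formulation is a composition lemma: for any $\tilde{c}$ and any sequence $\delta_{\colorL,t},\dots,\delta_{\colorL,\ell-1}$ of leader decision rules, the terminal reward obtained by iterating $\tilde{T}$ equals the terminal reward obtained by iterating the unfiltered $T$ and then applying $\mathbb{F}$ once at the end. Establishing this reduces to showing that whenever $\mathbb{F}$ discards an occupancy state $o$ in favour of another $o'$ with the same marginal belief and Pareto-dominating $(\rho_{\colorL},\rho_{\colorF})$, every future continuation reachable from $o$ is matched by a continuation from $o'$ with weakly higher $\rho_{\colorF}$ and, under ties, weakly higher $\rho_{\colorL}$; this hinges on the fact that $b_o=b_{o'}$ implies identical one-step transition kernels under any joint decision rule. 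Once this composition lemma is in hand, conditions (i) and (ii) follow by a routine induction, and (iii) is immediate.
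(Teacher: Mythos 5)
Your proposal follows essentially the same route as the paper: the canonical identification of leader policies with decision-rule sequences, verification of conditions (i)--(iii) by reducing value preservation to the terminal-reward identity $R(\tilde{c}_\ell)=R(c_\ell)$, and equilibrium correspondence as a corollary. The ``composition lemma'' you correctly identify as the main obstacle is precisely the paper's separate theorem on losslessness of the filtering operator (proved by the same inductive matching of belief classes and value-so-far vectors), with the one caveat that the matching of continuations from belief-equivalent occupancy states holds only for Markov decision rules, not for arbitrary history-dependent ones as your sketch asserts --- the paper's filtering lemmas are accordingly restricted to Markov leader policies.
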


\begin{proof}
We verify each of the three criteria in Definition~\ref{def:lossless-reduction}.

\textbf{(i) Value preservation.}  
Let \( \pi_{\colorL} = (\delta_{\colorL,0}, \ldots, \delta_{\colorL,\ell-1}) \in \Pi_{\colorL} \) be a leader policy in \( M \).  
We define the corresponding policy \( \pi'_{\colorL} \in \Pi'_{\colorL} \) in \( M' \) as the sequence of decision rules \( \pi'_{\colorL} \doteq (\delta_{\colorL,0}, \ldots, \delta_{\colorL,\ell-1}) \), applied deterministically at each stage.  
Since the transitions in \( M' \) are deterministic and coincide with the application of decision rules over credible sets, this correspondence preserves execution paths.

The value in \( M' \) starting from initial credible set \( c_0 = \{s_0\} \) is
\[
v_{M',\colorL,0}^{\pi'_{\colorL}}(c_0) 
= R\big( \mathbb{F}(c_{\pi'_{\colorL}}) \big)
= R(c_{\pi'_{\colorL}}),
\]
where the second equality follows from Lemma~\ref{appendix:lem:filtered:equality}.

By definition of the terminal reward function, we then have:
\[
v_{M',\colorL,0}^{\pi'_{\colorL}}(c_0)
=
\max\left\{
\rho_{\colorL}(o) \;\middle|\;
o \in  \argmax_{o' \in c_{\pi'_{\colorL}}} \rho_{\colorF}(o')
\right\}.
\]
Each \( o \in c_{\pi'_{\colorL}} \) corresponds to an occupancy state induced by some follower response \( \pi_{\colorF} \in \mathtt{BR}(\pi_{\colorL}) \). Since \( \rho_{\colorI}(o) = v_{M,\colorI,0}^{\pi_{\colorL},\pi_{\colorF}}(s_0) \) by construction of occupancy states, this implies
\[
v_{M',\colorL,0}^{\pi'_{\colorL}}(c_0) = v_{M,\colorL,0}^{\pi_{\colorL}}(s_0).
\]
Conversely, given a policy \( \pi'_{\colorL} \in \Pi'_{\colorL} \), the sequence of decision rules defines a leader policy \( \pi_{\colorL} \in \Pi_{\colorL} \) in \( M \) that induces the same sequence of occupancy sets and the same terminal credible set \( c_{\pi_{\colorL}} = c_{\pi'_{\colorL}} \). Hence,
\[
v_{M,\colorL,0}^{\pi_{\colorL}}(s_0) = v_{M',\colorL,0}^{\pi'_{\colorL}}(c_0).
\]
This establishes value preservation in both directions.

\textbf{(ii) Equilibrium correspondence.}  
Let \( \pi = (\pi_{\colorL}, \pi_{\colorF}) \) be a strong Stackelberg equilibrium (SSE) in \( M \). Then \( \pi_{\colorF} \in \mathtt{BR}(\pi_{\colorL}) \), and all best responses induce occupancy states \( o \in c_{\pi_{\colorL}} \).  
As above, the corresponding leader policy \( \pi'_{\colorL} \) in \( M' \) induces the same credible set \( c_{\pi'_{\colorL}} = c_{\pi_{\colorL}} \), and the follower best response in \( M' \) coincides with selecting the occupancy state \( o \in c_{\pi_{\colorL}} \) maximising \( \rho_{\colorF} \).  
Hence, the value of the SSE is preserved, and the mapping \( \pi \mapsto \pi'_{\colorL} \) defines a surjection from SSEs in \( M \) to optimal leader policies in \( M' \). Multiple SSEs in \( M \) can map to the same policy in \( M' \) if they induce the same decision-rule sequence.

\textbf{(iii) Information compatibility.}  
The leader policies in both \( M \) and \( M' \) are defined over the same private state–action histories, and decision rules are chosen with respect to credible sets indexed by these histories.  
The surrogate model \( M' \) does not introduce any additional observability or decision-making capability. All transitions and outcomes are grounded in the environment and policy structure of the original model \( M \).

\medskip
Therefore, \( M' \) satisfies all three criteria of Definition~\ref{def:lossless-reduction} and is a lossless reduction of \( M \).
\end{proof}

\section{Theoretical Results for Credible MDPs}
\label{appendix:sec:theoretical:results}
\subsection{Complexity Results}


We consider the following problem. 
\textsc{Infinite-Horizon Exact Value: }
Given a MDP $M$, and a rational number $v$, decide whether there is a
memoryless deterministic policy $\pi : S \rightarrow A$ whose value $V^\pi$ (from the initial state) is exactly $v$.
We also define \textsc{Finite Horizon Exact Value} which further accepts a horizon $\ell$.

\begin{theorem}
    \label{thm:exact-value-policy-np-hard-finite-horizon}
    The \textsc{Finite Horizon Exact Value} is NP-complete.
\end{theorem}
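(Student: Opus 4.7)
I would establish NP-completeness in two parts: membership in NP, and NP-hardness via a polynomial reduction.

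For membership, observe that a memoryless deterministic policy $\pi\colon S\to A$ has description size $O(|S|\log|A|)$, which serves as a polynomial-length certificate. Given such a certificate, the finite-horizon value $V^\pi$ from the initial state can be computed exactly by backward induction: set $V^\pi_\ell\equiv 0$ and, for $t=\ell-1,\ldots,0$, let $V^\pi_t(s)=\sum_{s'}p(s,\pi(s),s')\bigl[r(s,\pi(s),s')+V^\pi_{t+1}(s')\bigr]$. All arithmetic stays within the rationals, and the bit-size of $V^\pi_0(s_0)$ grows polynomially in the input, so the comparison $V^\pi_0(s_0)=v$ can be verified in polynomial time.

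For NP-hardness, I would reduce from the standard NP-complete problem \textsc{Subset Sum}: given positive integers $a_1,\ldots,a_n$ and target $T$, decide whether some $I\subseteq\{1,\ldots,n\}$ satisfies $\sum_{i\in I}a_i=T$. Construct an MDP with states $s_0,s_1,\ldots,s_n$, two actions $\mathsf{in}$ and $\mathsf{out}$ at each non-terminal state $s_{i-1}$, and deterministic transitions $s_{i-1}\to s_i$ under both actions; assign reward $a_i$ to $\mathsf{in}$ and $0$ to $\mathsf{out}$. Choose $s_0$ as the initial state, horizon $\ell=n$, and target $v=T$. Any memoryless deterministic policy $\pi$ selects exactly one action per state, inducing a subset $I_\pi=\{i:\pi(s_{i-1})=\mathsf{in}\}$, and by construction $V^\pi_0(s_0)=\sum_{i\in I_\pi}a_i$. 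Hence a policy of value exactly $T$ exists iff the \textsc{Subset Sum} instance is a yes-instance, and the reduction is clearly polynomial.

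The main subtlety is that \emph{exact}-value queries are a natural fit for number-theoretic hardness sources like \textsc{Subset Sum}, whereas a reduction from \textsc{3-SAT} would more naturally yield threshold-style lower bounds and would need additional padding to achieve exact equality. A minor adaptation would be required if the paper's reward convention embeds a discount $\gamma\in(0,1)$: substituting the one-step reward $a_i/\gamma^{i-1}$ at $s_{i-1}$ restores the identity $V^\pi_0(s_0)=\sum_{i\in I_\pi}a_i$, with only polynomial blow-up in bit-size. The harder part of the theorem, namely the infinite-horizon case mentioned in the surrounding discussion, would require a self-looping gadget to realise subset selection under discounting, but that is orthogonal to the finite-horizon statement proved here.
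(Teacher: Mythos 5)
Your proposal is correct, but it takes a genuinely different route from the paper. The paper proves hardness by reduction from \textsc{Hamiltonian Path}: it turns the graph into an MDP, assigns reward $B^{i-1}$ (with $B=|V|+1$) to the state of index $i$, and uses the uniqueness of base-$B$ representations to force the target value $v=1+B+\cdots+B^{|V|-1}$ to be attainable only by visiting every state exactly once within horizon $\ell=|V|$. You instead reduce from \textsc{Subset Sum} via a deterministic chain $s_0\to s_1\to\cdots\to s_n$ with an $\mathsf{in}/\mathsf{out}$ choice at each step, so that policies are in bijection with subsets and the value is the subset sum. Both reductions are sound; yours is arguably more elementary and more directly matched to the ``exact value'' flavour of the problem, since the number-theoretic rigidity comes for free from \textsc{Subset Sum} rather than having to be manufactured through base-$B$ reward weights. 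You also spell out the NP-membership certificate and the backward-induction verification for the finite-horizon case, which the paper only states explicitly for the infinite-horizon variant, and your remark on absorbing a discount factor by rescaling rewards to $a_i/\gamma^{i-1}$ correctly handles the convention issue (the paper sidesteps it by setting $\gamma=1$). What the paper's choice buys in exchange is uniformity: the Hamiltonian-path gadget adapts directly to a Hamiltonian-cycle gadget for the infinite-horizon version of the theorem, whereas your chain construction would need to be replaced there, as you yourself note.
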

\begin{proof}
    By reduction from Hamiltonian path. 
    Given directed graph $G=(V,E)$, we build an MDP $M$ whose states are the vertices of $G$,
    and whose transitions are the edges of $G$. The initial state $s_0$ of $M$ is chosen to be an arbitrary vertex of $G$.
    Define $B = |V|+1$.
    We associate an index $0\leq i \leq |V|-1$ to each state of $M$.
    The reward when leaving the state with index $i$ is $B^{i-1}$.
    The discount factor is $\gamma = 1$. We set $v = 1 + B + ... B^{|V|-1}$.
    
    The graph $G$ admits a Hamiltonian path iff $(M, v, \ell=|V|)$ is a positive instance of the finite horizon exact value problem.
    
    \fbox{$\Rightarrow$} If there is a Hamiltonian path, consider policy $\pi$ that follows that path. Its value is $v = 1 + B + ... B^{|V|-1}$ because we reach exactly each state once.
    
    \fbox{$\Leftarrow$} Consider a policy $\pi$ whose value is $v = 1 + B + ... B^{|V|-1}$.
    Let $x_i$ be the number of times the state with index $i$ is visited.
    The value can also be written $v = x_1 + x_2 B + \times x_{|V|}B^{|V|-1}$.
    As $x_i < B$, the two writings of number $v$ in base $B$ are unique, thus $x_i = 1$ for all $i$.
    So $\pi$ follows a Hamiltonian path.
\end{proof}

The infinite-horizon case is also NP-hard but for this proof, we restrict to memoryless deterministic policies.

\begin{theorem}
    \label{thm:exact-value-policy-np-hard}
    \textsc{Infinite Horizon Exact Value} is NP-complete.
\end{theorem}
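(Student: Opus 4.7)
My plan is to prove NP-completeness via NP membership and NP-hardness separately. NP membership is routine: any memoryless deterministic policy $\pi : S \to A$ has polynomial description size, and its value vector $V^\pi$ is the unique solution of the linear system $(I - \gamma P^\pi)V = r^\pi$ over the rationals (invertible since $\gamma < 1$), solvable by Gaussian elimination in polynomial time. Comparing $V^\pi(s_0)$ to $v$ is then an exact rational equality check.

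For NP-hardness, I would reduce from \textsc{Subset Sum} rather than from \textsc{Hamiltonian Path} (as used for the finite-horizon case). Given $w_1, \ldots, w_n \in \mathbb{N}$ and target $W$, I would construct a chain MDP with states $s_1, \ldots, s_n, s_{\mathrm{sink}}$ and initial state $s_1$. At each $s_i$ with $i \leq n$ there are two actions, \texttt{take} and \texttt{skip}, both transitioning deterministically to $s_{i+1}$ (or to $s_{\mathrm{sink}}$ when $i = n$): \texttt{take} yields reward $w_i \cdot 2^{i-1}$, while \texttt{skip} yields $0$. The sink has a single $0$-reward self-loop. I fix $\gamma = 1/2$ and set the target value to $W$.

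The key calculation is that, for any memoryless deterministic policy $\pi$ with $I_\pi \subseteq \{1,\ldots,n\}$ denoting the indices where $\pi$ selects \texttt{take}, the trajectory from $s_1$ visits $s_i$ exactly at time $t = i-1$, so each selected stage contributes
\[
\gamma^{i-1} \cdot w_i \cdot 2^{i-1} \;=\; w_i,
\]
since the pre-scaling cancels the discount exactly, while the sink contributes nothing. Summing, $V^\pi(s_1) = \sum_{i \in I_\pi} w_i$, which equals $W$ iff $I_\pi$ is a positive witness for the \textsc{Subset Sum} instance.

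The routine obligation is to verify that the rewards $w_i \cdot 2^{i-1}$ have binary length $O(n + \log w_i)$, keeping the reduction polynomial-size. The main conceptual obstacle I anticipate is justifying the choice of source problem: \textsc{Hamiltonian Path} does not transfer cleanly because, with a non-trivial discount, the contribution of each visited state depends on \emph{when} it is visited, yet a memoryless policy cannot control visit order independently of the chosen vertex, so the elegant base-$B$ uniqueness argument of the finite-horizon proof breaks. \textsc{Subset Sum} circumvents this by making each \texttt{take}/\texttt{skip} decision local to a unique state whose visit time is fixed, so exactly one time step per decision is relevant and the discount can be cancelled exactly via the pre-scaled rewards.
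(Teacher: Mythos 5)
Your proposal is correct, and both halves (membership via solving the linear system $(I-\gamma P^\pi)V = r^\pi$, hardness via a polynomial-size reduction) are sound, but the hardness argument takes a genuinely different route from the paper's. The paper reduces from \textsc{Hamiltonian Cycle}: it keeps the graph itself as the MDP, places a reward of $1$ only on transitions leaving $s_0$, sets $\gamma = 1/2$ and $v = 1/(1-\gamma^{|V|})$, and observes that a deterministic memoryless policy on this deterministic MDP traces a simple cycle through $s_0$ of some length $n$, yielding value $1/(1-\gamma^n)$, which equals $v$ iff $n = |V|$. Your \textsc{Subset Sum} chain gadget with rewards $w_i \cdot 2^{i-1}$ cancelling the discount $\gamma^{i-1}$ is arguably cleaner: the correctness argument is a one-line telescoping computation, each decision is local to a state visited at a fixed time, and no reasoning about cycle structure is needed. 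What the paper's construction buys in exchange is a stronger-looking source of hardness (a single rewarded state suffices) and uniformity with its finite-horizon proof, which also reduces from a Hamiltonicity problem. One small correction to your closing remark: the obstacle you anticipate for transferring the Hamiltonian-path argument is real for the base-$B$ encoding, but the paper sidesteps it not by changing the source problem, rather by switching to \textsc{Hamiltonian Cycle} and exploiting the fact that only the return period to $s_0$ matters under discounting. Note also that both reductions rely essentially on the restriction to \emph{deterministic} memoryless policies (under randomisation, fractional combinations could hit the target value spuriously), which is consistent with the problem as stated.
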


\begin{proof}
   The membership is proven as follows. Guess $\pi$. Compute the exact value in the induced Markov chain in polynomial time. Check that it is equal to $v$.

    We now focus on the NP-hardness, proven by reduction from Hamiltonian cycle. Let $G = (V, E)$ be a directed graph. We construct an MDP $M$ as follows. The state set is $V$, and the transitions in $M$ are given by the edges of $G$. Fix any vertex $s_0$ in $V$ to be the initial state of $M$. The reward is 1 when going out $s_0$ and 0 everywhere else. We fix the discount factor $\gamma = 1/2$.
    Let $v = \frac 1 {1 - \gamma^{|V|}}$.

    The instance $(M, v)$ is computable in poly-time from $G$. It remains to prove that $G$ has a Hamiltonian cycle iff $(M, v)$ is a positive instance.

    \fbox{$\Rightarrow$} Suppose $G$ has a Hamiltonian cycle. Consider $\pi$ to be the strategy that goes through the cycle forever. Every $|V|$ steps, we reach $s_0$ and the reward is 1. The gain is

    $$V^\pi(s_0) = 1 + \gamma^{|V|} + \gamma^{2|V|} + ... = v.$$

So $M$ has a strategy whose value is exactly $v$. So $(M, v)$ is a positive instance.

    \fbox{$\Leftarrow$} Suppose that $(M, v)$ is a positive instance, that is, $M$ has a strategy $\pi$ whose value is exactly $v$. Because $M$ is deterministic, a single infinite path in $G$ is generated under $\pi$.

    Suppose that under $\pi$, $s_0$ is never reached again after the first step. Then $V^\pi(s_0) = 1$, which contradicts $V^\pi(s_0) = v$.
    So $s_0$ must be reached again.
    Because $\pi$ is deterministic and memoryless,
    it induces a simple cycle that contains $s_0$, which is repeated indefinitely.
    Let $n$ be the number of states before reaching $s_0$ again, that is, the length of the simple cycle. We have 

    $$V^\pi(s_0) = 1 + \gamma^{n} + \gamma^{2n} + ... = \frac{1}{1 - \gamma^{n}}.$$

    As $V^\pi(s_0) = v$, we have $n = |V|$. Thus $\pi$ induces a Hamiltonian cycle.
\end{proof}

A \emph{constrained MDP (CMDP)} (we consider the single-cost version of CMDP given in
\cite{feinberg2000constrained})
is a tuple 
\[ (S, A, p, r, c, \nu, \gamma, s_0) \] where:
\begin{itemize}
  \item \( S \): Finite state space, \( s_0 \in S \) initial state
  \item \( A \): Finite action space
  \item \( p(s' \mid s, a) \): Transition probabilities
  \item \( r(s, a) \in \mathbb{Q} \): Reward function
  \item \( c(s, a) \in \mathbb{Q}_{\geq 0} \): Cost function 
  \item \( \gamma \in (0,1) \): Discount factor
  \item $\nu\in \mathbb{Q}$: Cost threshold
\end{itemize}
The \emph{infinite-horizon CMDP feasibility problem} is the the decision problem that, given a CMDP, and rational $R$, asks whether there exists a policy \( \pi \) satisfying:
\begin{align}
  \textstyle
  \mathbb{E}^{\pi} \left[ \sum_{t=0}^{\infty} \gamma^t r(s_t, a_t) \right] &\geq R, \label{eq:cmdp-reward} \\
  \textstyle
  \mathbb{E}^{\pi} \left[ \sum_{t=0}^{\infty} \gamma^t c(s_t, a_t) \right] &\leq \nu.  \label{eq:cmdp-constraint}
\end{align}

In words, we ask for the existence of a policy $\pi$ whose expected gain is greater than or equal to $R$, while expected cost is bounded by $\nu$.

We similarly define the \emph{finite-horizon CMDP feasibility problem}
by considering a horizon $\ell$. In the sequel, we ask for the existence of a \emph{deterministic} policy.

\begin{theorem}
    \label{thm:cmdp-feasbility-np-hard}
    The infinite-horizon and finite horizon CMDP feasability problems are NP-hard.
\end{theorem}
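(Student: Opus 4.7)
The plan is to reduce the \textsc{Finite Horizon Exact Value} and \textsc{Infinite Horizon Exact Value} problems, which are NP-hard by Theorems~\ref{thm:exact-value-policy-np-hard-finite-horizon} and~\ref{thm:exact-value-policy-np-hard}, to the corresponding CMDP feasibility problems. Given an instance $(M,v)$ of exact value with $M=(S,A,p,r,\gamma,s_0)$, I construct a CMDP $M'$ that inherits $(S,A,p,\gamma,s_0)$ and the reward function $r$, takes the cost function $c := r$, and fixes both thresholds to $R := v$ and $\nu := v$. The Hamiltonian-based MDPs used in the cited proofs produce nonnegative rewards (values in $\{0,1\}$ or positive powers of $B$), so the CMDP requirement $c(s,a)\in\mathbb{Q}_{\geq 0}$ is met automatically.

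The key identity driving the reduction is that a deterministic policy $\pi$ simultaneously satisfies $\mathbb{E}^\pi[\sum_t \gamma^t r(s_t,a_t)] \geq v$ and $\mathbb{E}^\pi[\sum_t \gamma^t c(s_t,a_t)] \leq v$ if and only if its expected discounted return in $M$ equals exactly $v$. Thus YES instances of exact value and YES instances of CMDP feasibility coincide, and since the construction is evidently polynomial, NP-hardness transfers to both the finite- and infinite-horizon CMDP feasibility problems.

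There are two technicalities to address. First, the discount factor: the finite-horizon exact-value construction uses $\gamma=1$, while the CMDP definition fixes $\gamma\in(0,1)$; I would handle this by augmenting the state with a time index $t\in\{0,\ldots,\ell\}$ and inflating the reward leaving a state at time $t$ by a factor $\gamma^{-t}$, which preserves the value identity for any admissible discount. Second---and this is the main obstacle---the exact-value theorems are phrased for memoryless deterministic policies, whereas CMDP feasibility ranges over deterministic policies in general. In the finite-horizon case the base-$B$ uniqueness argument already applies to any deterministic trajectory of length $\ell$, so the reduction carries through without modification. In the infinite-horizon case, however, a history-dependent deterministic policy could in principle reach $s_0$ exactly at times $0,|V|,2|V|,\ldots$ through distinct non-simple walks of length $|V|$, attaining value $1/(1-\gamma^{|V|})$ without tracing a Hamiltonian cycle. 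Closing this gap---either by restricting CMDP feasibility to stationary deterministic policies (the standard convention in CMDP theory) or by augmenting the Hamiltonian-cycle MDP with a modulo-$|V|$ phase counter and a cost penalty on within-phase revisits, so that every feasible deterministic policy induces a genuine Hamiltonian cycle---is the delicate step that requires care in a full write-up.
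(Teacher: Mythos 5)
Your reduction is exactly the one the paper uses: set $c := r$ and $R = \nu = v$, so that a deterministic policy has value exactly $v$ in $M$ iff it has value $\geq v$ and cost $\leq v$ in the CMDP. The technicalities you flag---the discount factor ($\gamma=1$ in the finite-horizon exact-value construction versus $\gamma\in(0,1)$ in the CMDP definition) and the memoryless-versus-general deterministic policy class---are genuine loose ends, but they are equally unaddressed in the paper's own proof, which tacitly takes the CMDP feasibility problem over the same memoryless deterministic policies used in the exact-value problems.
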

\begin{proof}
    Reduction from \textsc{Infinite Horizon Exact value policy}
    of Theorem \ref{thm:exact-value-policy-np-hard}. Consider an instance $(M, v)$ of \textsc{Infinite Horizon Exact value policy}. 
    The CMDP $M'$ is $M$, and the costs are identical to rewards: $c(s, a) = r(s, a).$
    
    Then, there is a policy in $M$ whose value is $v$ iff 
    there is a policy in $M'$ whose value is $\geq v$ and whose 
    cost is $\leq v$.

    For the finite horizon case, we similarly reduce from 
    the \textsc{Finite Horizon Exact value policy} using Theorem \ref{thm:exact-value-policy-np-hard-finite-horizon}.
\end{proof}

The \emph{``two-bound'' SSE decision problem} is the following: 
Given rationals \( R, \nu' \),
and LF-GSSG \( M \), determine if there is a 
joint policy \( \pi = (\pi_{\colorL}, \pi_{\colorF}) \) 
such that  
\(v_{\colorL}^{\pi}(s_0) \geq R\)
and \(v_{\colorF}^{\pi}(s_0) \geq \nu'\).
The \emph{memoryless deterministic} SSE decision problem
is the restriction of this problem to leader policies that are
memoryless and deterministic.

\begin{theorem}
    \label{appendix:thm:cmdp-reduction:1}
    The memoryless deterministic ``two-bound'' SSE decision problem is NP-hard
    both in the finite horizon and infinite horizon cases.
\end{theorem}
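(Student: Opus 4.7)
The plan is to obtain NP-hardness by reducing from the CMDP feasibility problem, which is NP-hard by Theorem~\ref{thm:cmdp-feasbility-np-hard} in both the finite- and infinite-horizon variants. Given a CMDP instance \((S, A, p, r, c, \nu, \gamma, s_0, R)\) (with horizon \(\ell\) in the finite case), I would build an LF-GSSG \(M\) in which the leader ``plays the MDP policy'' and the follower is a dummy player whose value tracks the negated cumulative cost. Concretely, take the same state space \(S\), the same initial state \(s_0\), the same discount factor \(\gamma\) (and horizon \(\ell\)), set \(A_\colorL \doteq A\) and \(A_\colorF \doteq \{*\}\) (a singleton), define transitions \(p'(s' \mid s, a_\colorL, *) \doteq p(s' \mid s, a_\colorL)\), and set rewards \(r_\colorL(s, a_\colorL, *, s') \doteq r(s, a_\colorL)\) and \(r_\colorF(s, a_\colorL, *, s') \doteq -c(s, a_\colorL)\). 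For the two-bound SSE instance, keep the leader threshold \(R\) and set the follower threshold to \(\nu' \doteq -\nu\).

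The key steps are then as follows. First, verify that the construction is polynomial-time computable, which is immediate since each component is a direct copy or pointwise transformation of the input. Second, observe that because \(A_\colorF\) is a singleton, the follower has exactly one (memoryless deterministic) policy, which is tautologically the unique best response to every leader policy and trivially satisfies leader-favourable tie-breaking; hence the SSE structure collapses. Consequently, memoryless deterministic leader policies \(\pi_\colorL : S \to A_\colorL\) in \(M\) are in bijection with memoryless deterministic policies \(\pi : S \to A\) in the original MDP, and under this bijection \(v_\colorL^{\pi}(s_0)\) equals the CMDP reward functional \(\mathbb{E}^{\pi}[\sum_t \gamma^t r(s_t, a_t)]\) while \(v_\colorF^{\pi}(s_0)\) equals its negated cost functional \(-\mathbb{E}^{\pi}[\sum_t \gamma^t c(s_t, a_t)]\). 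Third, conclude: the pair of constraints \(v_\colorL^{\pi}(s_0) \geq R\) and \(v_\colorF^{\pi}(s_0) \geq -\nu\) in \(M\) is equivalent to the CMDP constraints \eqref{eq:cmdp-reward} and \eqref{eq:cmdp-constraint}, establishing the reduction in both horizon regimes.

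I do not expect any technical difficulty in the counting or value calculations, since the follower's triviality means all expected sums carry over verbatim from the CMDP to \(M\). The main conceptual point to check carefully is that the SSE semantics -- in particular the ``\(\max_{\pi_\colorF \in \mathtt{BR}(\pi_\colorL)}\)'' inside the two-bound problem definition -- really does collapse to a single evaluation; this is where the singleton follower action set does the work, eliminating any ambiguity from tie-breaking or from the follower being allowed to adopt a different history-dependent strategy. Once this is noted, the equivalence between the CMDP feasibility instance and the memoryless deterministic two-bound SSE instance is immediate, yielding NP-hardness uniformly for the finite- and infinite-horizon cases.
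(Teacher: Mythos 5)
Your reduction is exactly the one the paper uses: same state space, leader actions copied from the CMDP, a single dummy follower action, rewards $r_{\colorL}=r$ and $r_{\colorF}=-c$, and thresholds $(R,\,\nu'=-\nu)$, reducing from the CMDP feasibility problems of Theorem~\ref{thm:cmdp-feasbility-np-hard} in both horizon regimes. The only cosmetic difference is that you take extra care over the collapse of the $\mathtt{BR}$ semantics, which the paper's ``two-bound'' problem does not even require (it quantifies existentially over joint policies with no best-response constraint); your argument is correct.
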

\begin{proof}
    We prove NP-hardness via a polynomial-time reduction from the 
    CMDP feasibility problems (Theorem~\ref{thm:cmdp-feasbility-np-hard})
            
    From a CMDP feasibility instance $M$, and bounds $R,\nu$, we construct an infinite-horizon LF-GSSG $M'$ with bounds $R, \nu'=-\nu$
    on the same state space as $M$.
    The leader's actions are identical to actions available in $M$,
    and the follower has a single dummy action which is ignored in all transitions. Furthermore, we have the following rewards:
    $r_{\colorL}(s,a) = r(s,a)$, and $r_{\colorF}(s,a) = -c(s,a)$
    for all $s,a$.
    
    Assume that \eqref{eq:cmdp-reward}-\eqref{eq:cmdp-constraint} hold
    for policy $\pi$. 
    Let $\pi_{\colorL} = \pi$, and $\pi_{\colorF}$ be an arbitrary policy. Then 
    \(v_{\colorL}^{\pi}(s_0) \geq R\) since the value for the leader is identical to the value in $M$, and 
    \(v_{\colorF}^{\pi}(s_0) \geq \nu'\) since the follower's value is minus the expected cost in $M$.
    
    Conversely, if 
    \(v_{\colorL}^{\pi}(s_0) \geq R\) and 
    \(v_{\colorF}^{\pi}(s_0) \geq \nu'\), then 
    by letting $\pi=\pi_{\colorL}$, we satisfy
    \eqref{eq:cmdp-reward}-\eqref{eq:cmdp-constraint}.

    The proof for the finite horizon case is analogous using
    Theorem~\ref{thm:cmdp-feasbility-np-hard}.
\end{proof}

The \emph{infinite-horizon SSE decision problem} is the following: 
Given rational \( R \), and LF-GSSG \( M \), determine if there is a 
joint policy \( \pi = (\pi_{\colorL}, \pi_{\colorF}) \) such that 
$\pi_{\colorF} \in \mathtt{BR}(\pi_{\colorL})$ and
\(v_{\colorL}^{\pi}(s_0) \geq R\).
We also define the \emph{finite-horizon SSE decision problem} which further takes
the horizon $\ell$ as input.

\begin{theorem}
    \label{appendix:thm:cmdp-reduction:2}
    The deterministic memoryless SSE decision problem is NP-hard
    both in the finite-horizon and infinite-horizon cases.
\end{theorem}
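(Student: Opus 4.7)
The plan is to reduce from the memoryless deterministic two-bound SSE decision problem, whose NP-hardness was just established in Theorem~\ref{appendix:thm:cmdp-reduction:1}. The central observation is that in a two-bound instance the follower constraint $v_{\colorF}^\pi(s_0) \geq \nu'$ is a passive constraint (the follower has a single dummy action in the construction of Theorem~\ref{appendix:thm:cmdp-reduction:1}), whereas in the plain SSE problem the follower is forced to best-respond. I would therefore encode the follower's value lower bound as an incentive constraint captured by the follower's best response at a newly introduced initial state.

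Concretely, given a two-bound instance $(M_{2b}, R, \nu')$, I would construct an LF-GSSG $M'$ by adding a fresh initial state $s^*$ and an absorbing state $s_e$. At $s^*$ the leader has a single dummy action, while the follower gains two actions: $a_1$ transitions deterministically to the original initial state $s_0$ of $M_{2b}$, and $a_2$ transitions to $s_e$. Immediate rewards at $s^*$ are set to zero, and the self-loop rewards at $s_e$ are calibrated so that its total discounted value equals exactly $\nu'$ for the follower and $R - \epsilon$ for the leader, for a sufficiently small rational $\epsilon > 0$. Elsewhere $M'$ inherits the dynamics of $M_{2b}$. The SSE target threshold is $R^{\mathrm{SSE}} \doteq \gamma R$, and the horizon is increased by one in the finite-horizon case (with exit rewards rescaled to match the residual horizon).

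Correctness would then follow by arguing both directions. If $\pi_{\colorL}$ witnesses a YES answer in $M_{2b}$, then in $M'$ the follower's best response at $s^*$ is $a_1$: strictly whenever $v_{\colorF}^{\pi_{\colorL}}(s_0) > \nu'$, and under leader-favourable SSE tie-breaking when $v_{\colorF}^{\pi_{\colorL}}(s_0) = \nu'$, since the $\epsilon$ slack guarantees $\gamma v_{\colorL}^{\pi_{\colorL}}(s_0) \geq \gamma R > \gamma(R - \epsilon)$. The resulting leader value at $s^*$ is $\gamma v_{\colorL}^{\pi_{\colorL}}(s_0) \geq \gamma R$. Conversely, any leader policy in $M'$ attaining value $\geq \gamma R$ must induce a best response containing $a_1$, because $a_2$ alone yields only $\gamma(R - \epsilon) < \gamma R$; this forces $v_{\colorF}^{\pi_{\colorL}}(s_0) \geq \nu'$ together with $\gamma v_{\colorL}^{\pi_{\colorL}}(s_0) \geq \gamma R$, recovering a two-bound witness. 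Memoryless determinism is preserved since the added states admit no nontrivial leader choice, and the reduction is plainly polynomial-time.

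The main obstacle I expect is the tie-breaking interface between the two-bound formulation, whose inequalities are weak, and the SSE solution concept, where the follower tie-breaks in the leader's favour: the slack parameter $\epsilon > 0$ has to be chosen small enough to preserve the direction of both inequalities simultaneously, and the per-step rewards at $s_e$ have to be tuned to hit the target discounted values exactly in both the finite- and infinite-horizon regimes (in finite horizon, this amounts to solving a linear equation in the remaining number of stages). Once this calibration is settled, the same construction handles both horizon cases uniformly.
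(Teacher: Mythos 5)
Your construction is essentially the same as the paper's: both introduce a fresh initial state where the follower chooses to opt in or out, with the opt-out branch calibrated so that accepting is a best response exactly when the constraint $v_{\colorF}^{\pi_{\colorL}}(s_0)\ge\nu'$ holds, and with the leader's opt-out payoff set strictly below the target so that achieving the threshold forces acceptance (the paper uses $R-1$ and encodes the threshold by shifting the follower's in-game rewards to $\nu/\ell - c(s,a)$, whereas you use $R-\epsilon$ and place the threshold on the exit branch; the paper also reduces directly from CMDP feasibility rather than via the two-bound problem, but since those hard instances have a dummy follower this is the same reduction in substance). The argument is correct, including the tie-breaking analysis and the horizon/discount bookkeeping you flag, which mirrors the paper's division of rewards by $\gamma$ in the infinite-horizon case.
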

\begin{proof}
     We prove NP-hardness via a polynomial-time reduction from the finite-horizon 
    CMDP feasibility problem (Theorem~\ref{thm:cmdp-feasbility-np-hard}).

\begin{figure}
    \centering
    \begin{tikzpicture}
    \node[blue] (debut) {$s'_0$};
    \node[black] (reject) at (-2, -3) {\footnotesize 
    \begin{tabular}{l}
    leader gets a reward of $R-1$ \\
    follower gets 0
    \end{tabular}};
    \node[orange, draw, minimum height=2cm, minimum width=3cm] (game) at (2, -3) {simulation of $M$};
    \draw (debut) edge[-latex, blue] node[right] {accept} (game);
    \draw (debut) edge[-latex, blue] node[left] {refuse} (reject);
    \end{tikzpicture}
    \caption{The LF-GSSG $M'$ constructed from the CMDP feasibility instance $(M, R,  \nu)$ works as follows. In the initial state $s_0'$ the follower starts by either refusing or accepting. If the follower refuses, the leader gets a reward of $R-1$ while the follower gets 0. If the follower accepts, a simulation of $M$ starts where the leader plays. The leader receives the reward in $M$ while the reward of $F$ handles the constraint.}
    \label{figure:reductioncmdptosse}
\end{figure}
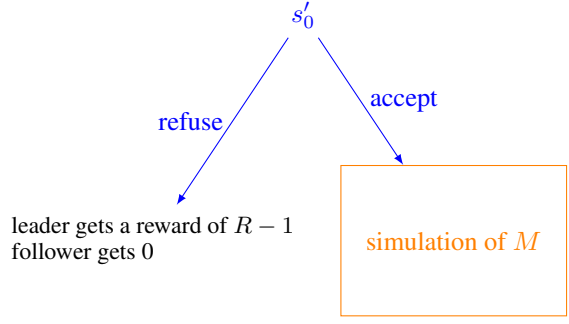

    Let us start with the finite horizon case.
    From a finite-horizon deterministic CMDP instance $(M, R, \nu, \ell)$ we construct the following finite-horizon deterministic SSE-instance $(M', R, \ell)$. The LF-GSSG $M'$ depicted in Figure~\ref{figure:reductioncmdptosse} is constructed as follows:
    \begin{itemize}
        \item We introduce a fresh initial state $s_0'$ (distinct from the states of $M$);
        \item In $s_0'$, the follower decides either to "accept" to simulate a play in $M$, or to "refuse".
        \item If the follower refuses, the leader obtains a reward of $R-1$ and the follower a reward of 0 and the LF-GSSG stops.
        \item If the follower accepts, we enter the initial state of a copy of $M$, in which only the leader determines the actions to be taken (actions played by the follower are not relevant), and where the reward of the leader mimics the reward in $M$, while the reward of $F$ takes the constraint into account as follows. At state $s$, when the leader plays $a$, the follower receives the reward:
        $$r_F(s, a) := \frac \nu \ell - c(s,a)$$
        where $c(s, a)$ is the cost of executing $a$ from state $s$ in $M$.
    \end{itemize}
    The parameter $R$ and $\ell$ are the same than in $(M, R, \nu, \ell)$.
    The construction of $(M', R, \ell)$ from $(M, R, \nu, \ell)$ can be performed in poly-time.

    It remains to prove that there is a policy $\pi$ in the CMDP $M$ such that the gain is $\geq R$ and the cost is $\leq \nu$ iff there is $\pi'$ a SSE in $M'$ with the leader-value $\geq R$.
    
    \fbox{$\Rightarrow$}
    Suppose there is such a $\pi$. We set $\pi'$ so that $\pi_L$ mimics $\pi$ in the $M$-part of $M'$, and $\pi_F$ accepts in $s_0$. The leader-value for $\pi'$ is $\geq R$. Meanwhile, as the cost is $\leq \nu$ in $M$ with $\pi$, the follower value is $\geq 0$. So $\pi_{\colorF}$ is a best response, and $\pi'$ is indeed a SSE.
    
    \fbox{$\Leftarrow$} Suppose there is a SSE $\pi'$ in $M'$ such that the leader-value is $\geq R$ with $\pi'$. Let $\pi$ the policy that mimics the leader-policy $\pi_L$ in the $M$-part. As the leader-value is $\geq R$, it means that the follower has accepted in $s_0$. 
    Thus, as the follower plays a best response, the follower-value is as good as the "refuse" choice, \textit{i.e.} greater than or equal to 0. 
    It follows that the total cost under $\pi$ in $M$ 
    is less than or equal to $\nu$.

    \bigskip
    For the infinite-horizon case, we consider the same reduction 
    with the following differences.
    In the initial state $s_0'$, if the follower refuses, the players receive $R-1$ and $0$ respectively as before,
    and we extend the game with a self-loop with reward $0$ for both players.
    In the copy of $M$, given state-action pair $(s,a)$, the leader receives the reward $r_{\colorL}(s,a) = r(s,a)/\gamma$,
    while the follower receives $r_{\colorF}(s,a) = \nu(1-\gamma)/\gamma - c(s,a)/\gamma$. Note that we divide the rewards and costs by $\gamma$
    to account for the additional step we introduced due to $s_0'$.

    Consider a joint policy $\pi$.
    If the follower refuses, then the leader receives $R-1$ and the follower $0$. If the follower accepts, then the leader's policy is played in the copy of $M$, and the leader receives
    \begin{align*}
        \mathbb{E}^{\pi} \left[ \sum_{t=1}^{\infty} \gamma^t r(s_t, a_t)/\gamma \right]
        = \mathbb{E}^{\pi} \left[ \sum_{t=0}^{\infty} \gamma^t r(s_t, a_t) \right].
    \end{align*}
    
    The follower receives
    \begin{align*}
      &\mathbb{E}^{\pi} \left[ \sum_{t=1}^{\infty} \gamma^t (\nu(1-\gamma)/\gamma - c(s_t, a_t)/\gamma) \right]\\
      & = \mathbb{E}^{\pi} \left[ \sum_{t=0}^{\infty} \gamma^t (\nu(1-\gamma)
        - c(s_t, a_t)) \right]\\
        & = \nu - \mathbb{E}^{\pi} \left[ \sum_{t=0}^{\infty} \gamma^t c(s_t, a_t) \right].
    \end{align*}

    The correctness of the reduction is then identical to the proof in the finite-horizon case.
\end{proof}

\subsection{Proof of Losslessness of the Filtering Operator}
\label{appendix:proof:lossless:filtering}

\begin{theorem}[Losslessness of the Filtering Operator]
\label{thm:lossless-filtering}
Let \( M \) be a finite-horizon leader--follower general-sum stochastic game (LF-GSSG) with horizon \( \ell \), and let \( \mathbb{F} \) denote the belief-based filtering operator.  
Fix any Markov leader policy \( \pi_{\colorL} \in \Pi_{\colorL} \), and let \( (c_0, \dots, c_\ell) \) denote the sequence of credible sets generated by forward simulation under \( \pi_{\colorL} \).  
Define the filtered sequence \( (\tilde{c}_0, \dots, \tilde{c}_\ell) \) recursively by:
\begin{itemize}
    \item \( \tilde{c}_0 \doteq \mathbb{F}(c_0) \),
    \item \( \tilde{c}_{t+1} \doteq \mathbb{F}(T(\tilde{c}_t, \delta_{\colorL,t})) \) for all \( t < \ell \),
\end{itemize}
where \( \delta_{\colorL,t} \) is the decision rule applied by \( \pi_{\colorL} \) at stage \( t \), and \( T \) denotes the transition function of the credible MDP.  
Then, the final cumulative leader reward is invariant under recursive filtering:
\[
R(c_\ell) = R(\tilde{c}_\ell).
\]
\end{theorem}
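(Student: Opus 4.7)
The plan is to prove the theorem by backward induction on the remaining horizon $k = \ell - t$, strengthening the induction hypothesis so that filtering may be introduced at any intermediate stage. Specifically, I will show: for every stage $t \in \{0,\dots,\ell\}$, every credible set $c$ at stage $t$, and every continuation $(\delta_{\colorL,t},\ldots,\delta_{\colorL,\ell-1})$ of the Markov leader policy, the terminal reward $R$ obtained by propagating $c$ forward without filtering equals the terminal reward obtained by first applying $\mathbb{F}$ to $c$ and then alternating transition and filtering until stage $\ell$. The statement of the theorem follows by instantiating this claim at $t=0$, and the base case $k=0$ is immediate from Lemma~\ref{appendix:lem:filtered:equality}.

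For the inductive step, I would analyse one transition and reduce to a comparison of two unfiltered trajectories. The value on the unfiltered side is $R$ of the $(k-1)$-step unfiltered propagation of $T(c,\delta_{\colorL,t})$, while on the filtered side it is $R$ of the $(k-1)$-step filtered propagation of $\mathbb{F}(T(\mathbb{F}(c),\delta_{\colorL,t}))$. Applying the induction hypothesis at stage $t+1$ to $T(\mathbb{F}(c),\delta_{\colorL,t})$ lets me replace the filtered propagation on the right by the unfiltered one. It therefore suffices to show that unfiltered propagations of $T(c,\delta_{\colorL,t})$ and of $T(\mathbb{F}(c),\delta_{\colorL,t})$ yield the same $R$ at stage $\ell$. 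Since $\mathbb{F}(c)\subseteq c$ and $T$ is monotone in its first argument, the second set is included in the first, so only the additional occupancy states on the unfiltered side must be shown not to affect $R$.

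The core sub-lemma I would establish is a Pareto-dominance propagation principle: if $o\in c\setminus\mathbb{F}(c)$ is removed because some $o'\in c_{b_o}$ satisfies $b_{o'}=b_o$, $\rho_{\colorL}(o')\geq\rho_{\colorL}(o)$, and $\rho_{\colorF}(o')\geq\rho_{\colorF}(o)$, then every stage-$\ell$ descendant of $o$ admits a matching descendant of $o'$ with identical marginal belief and Pareto-dominant cumulative rewards. This rests on two observations. First, under any Markov joint policy the marginal belief evolves through a fixed kernel and the one-step expected rewards depend on the joint distribution only through the marginal belief; hence $b_o=b_{o'}$ is preserved under all future transitions and the differences $\rho_{\colorI}(o'_\ell)-\rho_{\colorI}(o_\ell)=\rho_{\colorI}(o')-\rho_{\colorI}(o)\geq 0$ persist all the way to stage $\ell$. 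Second, since the leader is Markov, the follower's best-response problem is an MDP and admits a Markov optimal response, so in evaluating $R$ at stage $\ell$ we may restrict to Markov follower continuations without loss. Combined with the definition of $R$ as the leader value of the follower-maximising occupancy state (with leader-favourable tie-breaking), this ensures any optimum of $R$ attained by a descendant of $o$ is also attained by a descendant of $o'$, which remains on the filtered side.

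The main obstacle will be formalising this descendant correspondence rigorously in the presence of history-dependent follower rules: two occupancy states with the same marginal belief may encode different conditional distributions of the follower history given the current state, and $\tau$ is sensitive to these conditionals. The cleanest resolution is to invoke the Markov best-response reduction at the follower side, as outlined above, so that the comparison required for evaluating $R$ involves only Markov continuations on both sides, under which the first observation genuinely applies. Once this reduction is made explicit, the two inequalities $R(c_\ell)\leq R(\tilde{c}_\ell)$ and $R(c_\ell)\geq R(\tilde{c}_\ell)$ close (the second being immediate from $\tilde{c}_\ell\subseteq c_\ell$) and the induction step completes.
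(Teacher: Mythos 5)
Your proposal is correct and follows essentially the same route as the paper's proof: an induction over the horizon whose engine is the observation that, under Markov policies, a belief-equal Pareto-dominated occupancy state's descendants are matched by descendants of the dominating state (the paper packages this as a forward ``Filtered Reachability Preservation'' invariant, you as a backward induction with an explicit dominance-propagation sub-lemma), closed off by Lemma~\ref{appendix:lem:filtered:equality} at the terminal stage. If anything, you are more explicit than the paper in flagging that \(\tau\) is sensitive to the conditional distribution of follower histories given the current state --- a point the paper's inductive step silently elides --- though your proposed fix (restricting to Markov follower continuations) would still need a word on why the leader-favourable tie-break in \(R\) is unaffected by excluding non-Markov best responses.
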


\begin{proof}
We proceed by induction on \( t \in \{0, \dots, \ell\} \), maintaining the following invariant.

\begin{quote}
\textbf{Filtered Reachability Preservation (FRP).} 
At every stage \( t \), the filtered set \( \tilde{c}_t \) contains, for each marginal belief \( b \), at least one occupancy state per value-so-far vector \( (\rho_{\colorL}, \rho_{\colorF}) \) that is reachable from \( c_0 \) under \( \pi_{\colorL} \), and that may lead to a final occupancy state attaining the optimal cumulative reward \( R(c_\ell) \).
\end{quote}

\textit{Base case (\( t = 0 \)).} The initial credible set \( c_0 \) is a singleton, and the filtering operator does not remove it: \( \tilde{c}_0 = \mathbb{F}(c_0) = c_0 \). The invariant holds trivially.

\textit{Inductive step.} Assume the invariant holds at stage \( t \). We must show it holds at \( t+1 \).  
Let \( o \in \tilde{c}_t \) be any occupancy state preserved by filtering. By the inductive hypothesis, every marginal belief and value-so-far vector relevant for optimal return is represented in \( \tilde{c}_t \).  
Since the leader policy is Markov and the environment is fully observable and Markovian, the successor occupancy states resulting from applying \( T(o, \delta_{\colorL,t}) \) depend only on the marginal belief \( b \) and the joint action, not on local histories.

Moreover, the transition function \( T \) deterministically propagates marginal beliefs and extends cumulative rewards additively. Thus, the set \( c_{t+1} \doteq T(\tilde{c}_t, \delta_{\colorL,t}) \) includes all marginal belief and cumulative reward pairs relevant for reaching optimal terminal values.  
Applying \( \mathbb{F} \) to \( c_{t+1} \) preserves one representative per such pair, hence the invariant holds at \( t+1 \).

\textit{Terminal step.} At stage \( t = \ell \), the credible MDP accumulates the total leader reward. By the inductive invariant, the final filtered set \( \tilde{c}_\ell \) contains all marginal belief and reward pairs relevant for achieving \( R(c_\ell) \).  
By Lemma~\ref{appendix:lem:filtered:equality}, which guarantees that filtering does not reduce the maximal cumulative reward,
\[
R(\tilde{c}_\ell) = R(c_\ell).
\]

\textit{Conclusion.} Filtering preserves all value-relevant paths under Markov leader policies. Thus, filtering may be applied at each stage without loss of optimality.
\end{proof}

\footnotetext{
The filtering operator \( \mathbb{F} \) retains exactly one representative per cumulative reward vector within each marginal belief class. Since transitions and rewards in the credible MDP depend only on these quantities under Markov policies, the tie-breaking rule used to select representatives has no impact on value.
}

\subsection{Bellman's Optimality Equations}
\label{appendix:proof:thm:boe}
\begin{theorem}
\label{appendix:thm:boe}
The optimal leader state-value function \( v_{\colorL,t}^* \colon \tilde{C} \to \mathbb{R}\) at stage \(t<\ell\) satisfies: 
\begin{align*}
v_{\colorL,t}^*\colon \tilde{c} &\mapsto \textstyle 
\max_{\delta_{\colorL} \in \Delta_{\colorL}} v_{\colorL,t+1}^*\big(\tilde{T}(\tilde{c},\delta_{\colorL})\big).
\end{align*}
\end{theorem}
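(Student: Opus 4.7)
The plan is to prove the two inequalities $(\leq)$ and $(\geq)$ separately, exploiting the fact that the credible MDP $M'$ has deterministic transitions $\tilde{T}$ and zero reward on every non-terminal credible set, so that $v_{\colorL,t}^{\pi_{\colorL}}(\tilde{c})$ collapses to the terminal reward $R$ evaluated at the $(\ell-t)$-fold composition of $\tilde{T}$ with the leader decision rules $\delta_{\colorL,t},\dots,\delta_{\colorL,\ell-1}$.

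The preparatory step is a one-line decomposition lemma: for any leader policy $\pi_{\colorL}=(\delta_{\colorL,0},\dots,\delta_{\colorL,\ell-1})$ and any $t<\ell$, the definition of $v_{\colorL,t}^{\pi_{\colorL}}$ immediately gives
\[
v_{\colorL,t}^{\pi_{\colorL}}(\tilde{c})
\;=\;
v_{\colorL,t+1}^{\pi_{\colorL}'}\!\bigl(\tilde{T}(\tilde{c},\delta_{\colorL,t})\bigr),
\]
where $\pi_{\colorL}'=(\delta_{\colorL,0},\dots,\delta_{\colorL,t},\delta_{\colorL,t+1},\dots,\delta_{\colorL,\ell-1})$ is any policy whose stage-$t+1$--onward tail is $(\delta_{\colorL,t+1},\dots,\delta_{\colorL,\ell-1})$. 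Note that the value at stage $t$ depends only on the tail of the policy from $t$ onward, because the earlier decision rules never appear in the nested composition inside $R$.

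For the $(\leq)$ direction, I would pick an optimal leader policy $\pi_{\colorL}^\star$ achieving $v_{\colorL,t}^*(\tilde{c})$ (existence follows from compactness of $\Delta_{\colorL}$ in each stage and continuity of the value in the decision rules, or alternatively by extracting a supremising sequence). Applying the decomposition lemma with $\delta_{\colorL,t}=\delta_{\colorL,t}^\star$ and using that the tail is itself a valid leader policy starting at stage $t+1$, I get
\[
v_{\colorL,t}^*(\tilde{c}) \;=\; v_{\colorL,t+1}^{\pi_{\colorL}^{\star\prime}}\!\bigl(\tilde{T}(\tilde{c},\delta_{\colorL,t}^\star)\bigr) \;\leq\; v_{\colorL,t+1}^*\!\bigl(\tilde{T}(\tilde{c},\delta_{\colorL,t}^\star)\bigr) \;\leq\; \max_{\delta_{\colorL}\in\Delta_{\colorL}} v_{\colorL,t+1}^*\!\bigl(\tilde{T}(\tilde{c},\delta_{\colorL})\bigr).
\]
For the $(\geq)$ direction, I would fix any $\delta_{\colorL}\in\Delta_{\colorL}$, let $\pi_{\colorL}^{\star,t+1}$ be an optimal policy at stage $t+1$ from $\tilde{T}(\tilde{c},\delta_{\colorL})$, and concatenate $\delta_{\colorL}$ with this tail to form a policy whose stage-$t$ value on $\tilde{c}$ equals $v_{\colorL,t+1}^*(\tilde{T}(\tilde{c},\delta_{\colorL}))$. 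Taking the max over $\delta_{\colorL}$ and bounding by $v_{\colorL,t}^*(\tilde{c})$ yields the inequality.

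The main (minor) obstacle is justifying the attainment of the two suprema: the outer $\max$ over $\delta_{\colorL}$ and the optimal policy used in each direction. Both are handled by the same argument: because the transitions $\tilde{T}$ are deterministic and $R$ depends continuously on the terminal credible set (which itself is a continuous function of the finitely many decision rules composing $\pi_{\colorL}$), the value function is continuous on the compact space of policies and the maxima are attained. If one prefers to avoid any topological argument, replacing the optima by $\varepsilon$-optima and letting $\varepsilon\downarrow 0$ yields the same conclusion, since both inequalities are strict only up to $\varepsilon$. No other subtlety arises; in particular, the filtering step $\mathbb{F}$ hidden inside $\tilde{T}$ plays no role in this proof beyond the fact established in Theorem~\ref{thm:lossless-reduction} that $\tilde{T}$ is a well-defined deterministic map on $\tilde{C}$.
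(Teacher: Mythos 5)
Your proposal is correct and takes essentially the same route as the paper: both rest on the observation that $v_{\colorL,t}^{\pi_{\colorL}}(\tilde{c}) = v_{\colorL,t+1}^{\pi'_{\colorL}}\bigl(\tilde{T}(\tilde{c},\delta_{\colorL,t})\bigr)$ for the tail policy $\pi'_{\colorL}$, and then optimise over the head decision rule and the tail separately (the paper phrases this as an interchange of nested maxima, of which your two-inequality argument is simply the careful version). Your additional remarks on attainment of the suprema address a point the paper silently glosses over but do not change the substance.
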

\begin{proof}
The proof proceeds from backward induction for finite-horizon discrete-time Markov decision process \cite{Puterman1994}. We begin with the definition: for any transient credible set \(\tilde{c}\) at stage \(t<\ell\),
\begin{align*}
 v_{\colorL,t}^*(\tilde{c}) &= 
 \max_{\pi_{\colorL} \in \Pi_{\colorL}}
 v_{\colorL,t}^{\pi_{\colorL}}(\tilde{c}).
\end{align*}
By decomposing leader policy \(\pi_{\colorL} \doteq (\delta_{\colorL,t},\ldots, \delta_{\colorL,\ell-1})\), we obtain the following expression
\begin{align*}
 v_{\colorL,t}^*(\tilde{c}) &= 
 \max_{\delta_{\colorL,t} \in \Delta_{\colorL,t}}
 \cdots
 \max_{\delta_{\colorL,\ell-1} \in \Delta_{\colorL,\ell-1}}
 v_{\colorL,t}^{\pi_{\colorL}}(\tilde{c}).
\end{align*}
Since the policy tail \(\pi'_{\colorL} \doteq (\delta_{\colorL,t+1},\ldots, \delta_{\colorL,\ell-1})\) applies to the next credible set \(\tilde{T}(\tilde{c},\delta_{\colorL,t})\) reached by executing decision rule \(\delta_{\colorL,t}\) at \(\tilde{c}\), we obtain:
\begin{align*}
  v_{\colorL,t}^* (\tilde{c}) &=
  \max_{\delta_{\colorL,t}\in \Delta_{\colorL,t}}
  \max_{\pi'_{\colorL}\in \Delta_{\colorL,t+1:\ell-1}} v_{\colorL,t+1}^{\pi'_{\colorL}}(\tilde{T}(\tilde{c},\delta_{\colorL,t})).  
\end{align*}
Injecting the definition of the optimal value function yields:
\begin{align*}
  v_{\colorL,t}^* (\tilde{c}) &=
  \max_{\delta_{\colorL,t}\in \Delta_{\colorL,t}}
  v_{\colorL,t+1}^*(\tilde{T}(\tilde{c},\delta_{\colorL,t}))\\
  &=
  \max_{\delta_{\colorL,t}\in \Delta_{\colorL,t}}
  q_{\colorL,t}^*(\tilde{c},\delta_{\colorL,t}).
\end{align*}
This concludes the proof.
\end{proof}

\subsection{Uniform Continuity Properties}
We establish the uniform continuity of the leader’s value function over finite credible sets. For clarity, we decompose the proof into three lemmas leading to the main proposition. Before proceeding any further we start with the definition of uniform continuity and the Hausdorff distance across credible sets useful for establishing the uniform continuity properties.

\begin{definition}[Uniform Continuity]
    A function \(v\colon \tilde{C}\to \mathbb{R}\) is uniformly continuous on \(\tilde{C}\) if:
    \begin{align*}
        \forall \varepsilon > 0,
        \exists \delta > 0
        \text{ such that }
        \forall \tilde{c},\tilde{c}'\in \tilde{C},
        |\tilde{c}-\tilde{c}'| < \delta
        \implies 
        |v(\tilde{c})-v(\tilde{c}')| < \varepsilon.
    \end{align*}
\end{definition}

Next, we introduce a distance between two credible sets, \(\tilde{c}\) and \(\tilde{c}'\), drawn from the occupancy-state space \(O\), using the Hausdorff distance with the \(\ell_1\)-norm. The Hausdorff distance between two credible sets measures how far apart they are by capturing the worst-case minimal \(\ell_1\)-distance needed to match each occupancy state in one set to some occupancy state in the other. If every occupancy state in both credible sets is close to at least one in the other, the sets are considered behaviorally close.

\begin{definition}[Hausdorff Distance]
Let \( \tilde{c}, \tilde{c}' \subset O \) be finite credible sets, and equip the space of finite subsets of \( O \) with the Hausdorff distance \( d_H \) induced by the \( \ell_1 \)-norm:
\[
\textstyle 
d_H(\tilde{c},\tilde{c}') \doteq \max \left\{ \sup_{o \in \tilde{c}} \inf_{o' \in \tilde{c}'} \| o - o' \|_1,\; \sup_{o' \in \tilde{c}'} \inf_{o \in \tilde{c}} \| o' - o \|_1 \right\}.
\]
\end{definition}

\begin{lemma}[Stability of filtered maxima under Hausdorff perturbations]
\label{lem:stability:filtered:maxima}
Let \(v \colon O \to \mathbb{R}\) be uniformly continuous, and let \(x, \tilde{c}' \subset O\) be credible sets. Then for every \( \varepsilon > 0 \), there exists \( \delta > 0 \) such that if \( d_H(\tilde{c}, \tilde{c}') < \delta \), then
\[
\left| \max_{o \in x} v(o) - \max_{\bar{o} \in \tilde{c}'} v(\bar{o}) \right| < \varepsilon.
\]
\end{lemma}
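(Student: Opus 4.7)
The plan is to reduce the statement to a direct application of uniform continuity, exploiting the two-sided matching built into the Hausdorff distance. Given $\varepsilon > 0$, since $v$ is uniformly continuous on $O$, there exists $\eta > 0$ such that $\|o - o'\|_1 < \eta$ implies $|v(o) - v(o')| < \varepsilon$. I would set $\delta \doteq \eta$ and assume $d_H(\tilde{c}, \tilde{c}') < \delta$ (interpreting the $x$ in the statement as $\tilde{c}$).

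The core of the argument is a standard two-sided comparison of maxima. Let $o^\star \in \arg\max_{o \in \tilde{c}} v(o)$. By definition of the Hausdorff distance, there exists $\bar{o} \in \tilde{c}'$ with $\|o^\star - \bar{o}\|_1 \leq \sup_{o \in \tilde{c}} \inf_{\bar{o}' \in \tilde{c}'} \|o - \bar{o}'\|_1 \leq d_H(\tilde{c},\tilde{c}') < \delta$; hence $v(o^\star) - v(\bar{o}) < \varepsilon$ by uniform continuity, and since $v(\bar{o}) \leq \max_{\bar{o}' \in \tilde{c}'} v(\bar{o}')$, this yields $\max_{o \in \tilde{c}} v(o) - \max_{\bar{o}' \in \tilde{c}'} v(\bar{o}') < \varepsilon$. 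The symmetric step picks $\bar{o}^\star \in \arg\max_{\bar{o} \in \tilde{c}'} v(\bar{o})$, finds $o \in \tilde{c}$ within $\delta$, and concludes analogously. Combining the two inequalities gives $\left|\max_{o \in \tilde{c}} v(o) - \max_{\bar{o} \in \tilde{c}'} v(\bar{o})\right| < \varepsilon$.

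The only mild subtleties are technical rather than conceptual: one must observe that $\tilde{c}$ and $\tilde{c}'$ are \emph{finite} (as stated), so the $\arg\max$ is attained and the infima in the Hausdorff distance are realised as minima, which turns the $\sup/\inf$ reasoning into clean pointwise matching without having to pass to near-optimisers. This avoids any $\varepsilon/2$ slack and keeps the constants clean.

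I expect no real obstacle here; the argument is essentially the proof that $\max$ is $1$-Lipschitz with respect to the sup-norm on functions, pulled back through uniform continuity and the Hausdorff distance. The lemma is precisely the form in which this robustness needs to be invoked later, when applying it to the functions $v_{\colorI,t}^{\Gamma}$ appearing in Theorem~\ref{thm:convexity} so that perturbations of a credible set translate into controlled perturbations of both the tie-breaking $\arg\max$ over follower values and the induced leader value.
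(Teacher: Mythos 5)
Your proof is correct and follows essentially the same route as the paper's: obtain $\delta$ from uniform continuity, use the two-sided matching in the Hausdorff distance to bound each maximum by the other plus $\varepsilon$, and combine. The only cosmetic difference is that you instantiate the maximiser $o^\star$ explicitly (noting finiteness) where the paper quantifies over all $o \in \tilde{c}$ before taking maxima; you also correctly read the $x$ in the statement as $\tilde{c}$.
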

\begin{proof}
Fix \( \varepsilon > 0 \). By uniform continuity of \( v \colon O \to \mathbb{R} \), there exists \( \delta > 0 \) such that:
\[
\|o - \bar{o}\|_1 < \delta \quad \Rightarrow \quad |v(o) - v(\bar{o})| < \varepsilon.
\]

Assume \( d_H(\tilde{c}, \tilde{c}') < \delta \). Then:
\begin{enumerate}
\item For every \( o \in \tilde{c} \), there exists \( \bar{o} \in \tilde{c}' \) with \( \|o - \bar{o}\|_1 < \delta \), hence \( |v(o) - v(\bar{o})| < \varepsilon \).
Taking maxima over \( o \in \tilde{c} \), we obtain:
\[
\max_{o \in \tilde{c}} v(o) < \max_{\bar{o} \in \tilde{c}'} v(\bar{o}) + \varepsilon.
\]

\item Similarly, for every \( \bar{o} \in \tilde{c}' \), there exists \( o \in \tilde{c} \) with \( \|\bar{o} - o\|_1 < \delta \), hence \( |v(\bar{o}) - v(o)| < \varepsilon \), and thus:
\[
\max_{\bar{o} \in \tilde{c}'} v(\bar{o}) < \max_{o \in \tilde{c}} v(o) + \varepsilon.
\]
\end{enumerate}

Combining both bounds:
\(
\left| \max_{o \in \tilde{c}} v(o) - \max_{\bar{o} \in \tilde{c}'} v(\bar{o}) \right| < \varepsilon
\).
\end{proof}

\begin{lemma}[Leader value function uniform continuity]
\label{lem:continuity:filtered:sets}
Let \( K \) and \( N \) be finite index sets, and let \( O \) be a compact metric space with distance \( \| \cdot \| \). For each \( k \in K \) and \( n \in N \), let \( v_{kn}, u_{kn} \colon O \to \mathbb{R} \) be uniformly continuous functions. For any credible set \( \tilde{c} \subset O \), define the filtered set
\[
\tilde{c}_{kn}^* \doteq \big\{ o \in \tilde{c} \mid u_{kn}(o) = \max_{o' \in \tilde{c}} \max_{n' \in N} u_{kn'}(o') \big\},
\]
and define the composite value function
\[
v^*(\tilde{c}) \doteq \max_{k \in K} \max_{n \in N} \max_{o \in \tilde{c}_{kn}^*} v_{kn}(o).
\]
Then \( v^* \) is uniformly continuous over credible sets equipped with the Hausdorff distance \( d_H \) induced by \( \| \cdot \| \).
\end{lemma}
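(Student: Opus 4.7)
The plan is to reduce the uniform continuity of $v^{*}$ to repeated applications of Lemma~\ref{lem:stability:filtered:maxima}, exploiting the finiteness of $K$ and $N$. Fixing $\varepsilon>0$, I would first pick a single $\delta>0$—valid for every $k\in K$ and $n\in N$ simultaneously because only finitely many $u_{kn},v_{kn}$ are involved—such that $\|o-o'\|_1<\delta$ implies $|u_{kn}(o)-u_{kn}(o')|<\varepsilon$ and $|v_{kn}(o)-v_{kn}(o')|<\varepsilon$. Applying Lemma~\ref{lem:stability:filtered:maxima} to the aggregate $u_{k}(o)\doteq\max_{n\in N}u_{kn}(o)$ (itself uniformly continuous as the max of finitely many uniformly continuous functions) would give $|M_k(\tilde c)-M_k(\tilde c')|<\varepsilon$ whenever $d_H(\tilde c,\tilde c')<\delta$, where $M_k(\tilde c)\doteq\max_{o\in\tilde c,n\in N}u_{kn}(o)$ is the value benchmark defining the filter $\tilde c^{*}_{kn}$.

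Next, for each fixed $k$ I would bound $|V_k(\tilde c)-V_k(\tilde c')|$, where $V_k(\tilde c)\doteq\max_{n\in N}\max_{o\in\tilde c^{*}_{kn}}v_{kn}(o)$. Taking $(n^{\star},o^{\star})$ attaining $V_k(\tilde c)$, so that $u_{kn^{\star}}(o^{\star})=M_k(\tilde c)$ and $v_{kn^{\star}}(o^{\star})=V_k(\tilde c)$, the Hausdorff hypothesis supplies $o''\in\tilde c'$ with $\|o^{\star}-o''\|_1<\delta$. Uniform continuity then yields $v_{kn^{\star}}(o'')>V_k(\tilde c)-\varepsilon$ and $u_{kn^{\star}}(o'')>M_k(\tilde c)-\varepsilon\ge M_k(\tilde c')-2\varepsilon$, so $o''$ is an approximate argmax in $\tilde c'$ whose leader value already approximates $V_k(\tilde c)$. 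The symmetric construction, starting from an attainer in $\tilde c'$, gives the reverse inequality. Finally, the outer maximum over the finite set $K$ is handled with the elementary bound $|\max_k a_k-\max_k b_k|\le\max_k|a_k-b_k|$, which lifts the per-$k$ uniform estimate to $v^{*}=\max_kV_k$.

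The hard part is the equality-constrained nature of the filter $\tilde c^{*}_{kn}=\{o\in\tilde c\mid u_{kn}(o)=M_k(\tilde c)\}$: the neighbour $o''$ identified via the Hausdorff bound typically lies in an $O(\varepsilon)$-neighbourhood of $\tilde c'^{*}_{kn^{\star}}$ rather than inside it, so the approximate argmax $v_{kn^{\star}}(o'')$ must be compared to the \emph{exact}-argmax value $V_k(\tilde c')$. To close this gap I plan to treat the filter-and-select mechanism as a single composite functional on the product family $\{(u_{kn},v_{kn})\}_{k,n}$ and apply Lemma~\ref{lem:stability:filtered:maxima} in two coordinates at once, so that the $\varepsilon$-slack in the $u$-coordinate is absorbed jointly with the $\varepsilon$-slack in the $v$-coordinate. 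A clean way to execute this is to prove a short auxiliary claim: whenever $(n^{\star},o^{\star})$ is an exact argmax in $\tilde c$ and $o''\in\tilde c'$ is $\delta$-close, then the exact argmax of $\tilde c'$ (obtained by maximizing $u_{kn}$ over $\tilde c'$) provides a leader value within $O(\varepsilon)$ of $v_{kn^{\star}}(o'')$; this uses the uniform continuity on the $v$-coordinate of every $(k,n)$ that is a candidate argmax, which is legitimate precisely because $K$ and $N$ are finite and all $v_{kn}$ share the same modulus of continuity. Once this auxiliary claim is established, the remainder of the argument is bookkeeping.
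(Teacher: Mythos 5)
Your proposal correctly isolates the crux---the equality-constrained filter \( \tilde c^{*}_{kn}=\{o\in\tilde c\mid u_{kn}(o)=M_k(\tilde c)\} \) is the only genuinely delicate object---but the auxiliary claim you propose to close the gap is false, so the argument does not go through. Concretely: take \( O=[0,1]^2 \), \( K=N=\{1\} \), \( u(o)=o_1 \), \( v(o)=o_2 \), \( \tilde c=\{(1,0),(1,1)\} \) and \( \tilde c'=\{(1,0),(1-\eta,1)\} \). Then \( d_H(\tilde c,\tilde c')=\eta \) is arbitrarily small, yet \( \tilde c^{*}=\tilde c \) gives \( v^{*}(\tilde c)=1 \) while \( \tilde c'^{*}=\{(1,0)\} \) gives \( v^{*}(\tilde c')=0 \). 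Your exact argmax \( (n^{\star},o^{\star}) \) with \( o^{\star}=(1,1) \) does have a \( \delta \)-close neighbour \( o''=(1-\eta,1) \) in \( \tilde c' \), but the exact argmax of \( u \) over \( \tilde c' \) is \( (1,0) \), whose \( v \)-value is not within \( O(\varepsilon) \) of \( v(o'')=1 \). The filter is only upper hemicontinuous: a point that ties for the \( u \)-maximum in \( \tilde c \) can drop out of the filtered set of every nearby \( \tilde c' \), and no joint bookkeeping in the two coordinates recovers it; the ``reverse inequality by symmetry'' half of your argument fails for the same reason. Your own diagnosis that \( o'' \) lies only in an \( O(\varepsilon) \)-neighbourhood of \( \tilde c'^{*}_{kn^{\star}} \) rather than inside it is exactly the obstruction, and the proposed auxiliary claim does not remove it.

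For what it is worth, the paper's own proof stumbles at the same place: it invokes Lemma~\ref{lem:stability:filtered:maxima} to assert \( d_H(\tilde c^{*}_{kn},\tilde c'^{*}_{kn})<\delta^{v}_{kn} \) whenever \( d_H(\tilde c,\tilde c')<\delta^{u}_{kn} \), but that lemma only controls the difference of the \emph{maximum values} of a function over two Hausdorff-close sets; it says nothing about the Hausdorff distance between the \emph{argmax sets}, which the example above shows can remain bounded away from zero. As stated for arbitrary uniformly continuous \( u_{kn},v_{kn} \), the lemma is false; any repair must either relax the filter (e.g.\ replace the exact argmax by an \( \varepsilon \)-argmax, for which one-sided semicontinuity estimates do hold) or invoke structural properties of the particular \( u_{kn},v_{kn} \) and credible sets arising in the reduction that are not hypothesised in the lemma.
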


\begin{proof}
Let \( \varepsilon > 0 \) be arbitrary.

By uniform continuity of each \( u_{kn} \) and by Lemma~\ref{lem:stability:filtered:maxima},  
there exists \( \delta_{kn}^u > 0 \) such that if \( d_H(\tilde{c}, \tilde{c}') < \delta_{kn}^u \),  
then the filtered sets satisfy:
\[
d_H(\tilde{c}_{kn}^*, \tilde{c}_{kn}'^*) < \delta_{kn}^v,
\]
where \( \delta_{kn}^v \) will be chosen next.

By uniform continuity of each \( v_{kn} \) and again by Lemma~\ref{lem:stability:filtered:maxima},  
there exists \( \delta_{kn}^v > 0 \) such that if \( d_H(\tilde{c}_{kn}^*, \tilde{c}_{kn}'^*) < \delta_{kn}^v \), then:
\[
\big| \max_{o \in \tilde{c}_{kn}^*} v_{kn}(o) - \max_{o' \in \tilde{c}_{kn}'^*} v_{kn}(o') \big| < \varepsilon.
\]

Define:
\[
\delta \doteq \min_{k, n} \{ \delta_{kn}^u, \delta_{kn}^v \}.
\]

Then, under \( d_H(\tilde{c}, \tilde{c}') < \delta \),  
each index pair satisfies:
\[
\big| \max_{o \in \tilde{c}_{kn}^*} v_{kn}(o) - \max_{o' \in \tilde{c}_{kn}'^*} v_{kn}(o') \big| < \varepsilon,
\]
and the global maximum over finitely many \( (k, n) \) stays controlled:
\[
|v^*(\tilde{c}) - v^*(\tilde{c}')| < \varepsilon.
\]

Therefore, \( v^* \) is uniformly continuous.
\end{proof}

In demonstrating the uniform continuity of \(v_{\colorL}^*\colon \tilde{C}\to\mathbb{R}\) across credible sets, it will prove useful to show the underlying structure of value function \(v^*_{\colorL}\) across credible sets.

\begin{lemma}
\label{appendix:lem:convexity}
The optimal leader state-value function \( v_{\colorL,t}^* \colon \tilde{C} \to \mathbb{R} \) at stage \(t\) is such that: there exists a collection \(\Lambda_t\) of finite sets \( \Gamma_t \) of linear function pairs \( (\alpha_{\colorL}, \alpha_{\colorF}) \), each defined over occupancy states, such that:
\[
    v_{\colorL,t}^*(\tilde{c})  
    = 
    \max_{\Gamma\in \Lambda_t}
    \left\{ 
    \rho_{\colorL}(o)
    +
    \gamma^t
    \sum_{h_{\colorF}}
    \Pr(h_{\colorF} \mid o)
    \cdot \alpha^{o_{h_{\colorF}}}_{\colorL}(o_{h_{\colorF}})
    \mid 
          (o, \alpha^{\cdot})
            \in 
            \argmax_{o'\in \tilde{c}, \alpha'\in \Gamma^{|\mathbb{C}_{\colorF}(\tilde{c})|}}
           \left(
           \rho_{\colorF}(o')
           +
           \gamma^t
    \sum_{h_{\colorF}}
    \Pr(h_{\colorF} \mid o)
    \cdot             \alpha'^{o_{h_{\colorF}}}_{\colorF}(o'_{h_{\colorF}})
            \right)
    \right\}.
\]
\end{lemma}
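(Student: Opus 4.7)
The plan is to proceed by backward induction on the stage index $t \in \{0, 1, \ldots, \ell\}$, using Bellman's optimality equation (Theorem~\ref{appendix:thm:boe}) to propagate the piecewise-linear representation from stage $t+1$ to stage $t$. This mirrors the classical Sondik-style argument for POMDPs, adapted here to the credible-MDP setting, where both an occupancy state $o \in \tilde{c}$ and a family of $\alpha$-vectors (one per conditional occupancy $o_{h_{\colorF}}$) are selected jointly, with the follower's value acting as the primary selector and the leader's value as the secondary (tie-breaking) objective.

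\textbf{Base case ($t = \ell$).} At the terminal stage, $v_{\colorL,\ell}^*(\tilde{c}) = R(\tilde{c}) = \max\{\rho_{\colorL}(o) \mid o \in \argmax_{o' \in \tilde{c}} \rho_{\colorF}(o')\}$. I would exhibit this as a trivial instance of the claimed form by taking $\Lambda_\ell$ to be a single $\Gamma$ containing only the zero pair $(\alpha_{\colorL}, \alpha_{\colorF}) \equiv (0, 0)$, so that the discounted sum terms vanish and the expression collapses to the terminal reward. This establishes the induction base.

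\textbf{Inductive step.} Assume the representation holds at stage $t+1$ with collection $\Lambda_{t+1}$. By Theorem~\ref{appendix:thm:boe}, $v_{\colorL,t}^*(\tilde{c}) = \max_{\delta_{\colorL} \in \Delta_{\colorL}} v_{\colorL,t+1}^*(\tilde{T}(\tilde{c}, \delta_{\colorL}))$. I would first unpack $\tilde{T}(\tilde{c}, \delta_{\colorL})$: each successor occupancy $o' \in \tilde{T}(\tilde{c}, \delta_{\colorL})$ arises from some $o \in \tilde{c}$ and some follower decision rule $\delta_{\colorF} \in \Delta_{\colorF,t}$ via $o' = \tau(o, (\delta_{\colorL}, \delta_{\colorF}))$. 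The key observation is that the conditional occupancy state $o'_{h'_{\colorF}}$ at stage $t+1$, where $h'_{\colorF} = (h_{\colorF}, a_{\colorF}, s')$, is a linear function of $o_{h_{\colorF}}$ weighted by $\delta_{\colorL}(a_{\colorL} \mid h_{\colorL})$, $\delta_{\colorF}(a_{\colorF} \mid h_{\colorF})$, and $p(s, a_{\colorL}, a_{\colorF}, s')$. Substituting the inductive hypothesis into $v_{\colorL,t+1}^*(\tilde{T}(\tilde{c},\delta_{\colorL}))$ and using the additive decomposition $\rho_i(o') = \rho_i(o) + \gamma^t \, \mathbb{E}[r_i]$, each pair $(\delta_{\colorL}, \delta_{\colorF}, \Gamma_{t+1})$ induces a new pair $(\alpha_{\colorL}, \alpha_{\colorF})$ at stage $t$ defined by a one-step Bellman backup linear in the conditional occupancy state $o_{h_{\colorF}}$. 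Collecting these backups across all $\delta_{\colorF}$ and $\Gamma_{t+1} \in \Lambda_{t+1}$ for a fixed $\delta_{\colorL}$ yields a finite set $\Gamma^{\delta_{\colorL}}$; taking the union over $\delta_{\colorL}$ yields the new collection $\Lambda_t$.

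\textbf{Main obstacle.} The principal difficulty will be handling the two nested argmaxes correctly: the outer selection of $(o, \alpha^{\cdot})$ inside $\tilde{c}$ (which encodes the follower's strong-Stackelberg response with leader-favourable tie-breaking) and the inner per-history selection of $\alpha$-vectors from $\Gamma^{|\mathbb{C}_{\colorF}(\tilde{c})|}$. In particular, I need to show that maximising first over $\delta_{\colorF} \in \Delta_{\colorF,t}$ at stage $t$ and then recursively over future follower responses coincides with the joint argmax over $o$ and $\alpha^{\cdot}$ in the representation, so that the new $\Gamma_t$ correctly internalises every admissible follower continuation. A careful argument here amounts to verifying that the conditional-occupancy decomposition commutes with the $\sum_{h_{\colorF}} \Pr(h_{\colorF} \mid o)$ factorisation of the follower's value, which holds because Bellman backups over conditional occupancies are linear. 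Once this commutation is established, the inductive step closes and the claim follows.
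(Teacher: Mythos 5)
Your route is genuinely different from the paper's. The paper proves this lemma \emph{without} backward induction: starting from \(v_{\colorL,t}^*(\tilde{c}) = \max_{\pi_{\colorL}} R(\tilde{c}_{\pi_{\colorL}})\), it expands the terminal reward \(R\) over the terminal credible set, rewrites each terminal \(\rho_{\colorI}\) as \(\rho_{\colorI}(o) + \gamma^t v_{\colorI,t}^{\pi_{\colorL},\pi_{\colorF}}(o)\), and then simply \emph{defines} each \(\Gamma\) as the set of exact policy-evaluation pairs \((v_{\colorL,t}^{\pi_{\colorL},\pi_{\colorF}(h_{\colorF})}, v_{\colorF,t}^{\pi_{\colorL},\pi_{\colorF}(h_{\colorF})})\) indexed by follower policy trees and follower histories, with \(\Lambda_t\) indexed by leader policy suffixes \(\pi_{\colorL,t:\ell-1}\). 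Linearity over (conditional) occupancy states is then immediate from the definition of policy evaluation, and the nested argmax structure falls directly out of the definition of \(R\). Your Sondik-style recursive construction of \(\Lambda_t\) from \(\Lambda_{t+1}\) is closer in spirit to the paper's MILP backup theorem (Theorem~\ref{appendix:proof:thm:pbvi-backup}) than to its proof of this lemma; if completed, it would have the advantage of directly justifying the backup operator used by the algorithm, whereas the paper's flat enumeration gives the representation with essentially no combinatorial work.

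The gap is that the step you flag as the ``main obstacle'' is precisely where the content of your proof lives, and you do not carry it out. Two things need to be verified there, and linearity of the one-step backup handles only the first. First, the algebraic identity: that \(\rho_{\colorF}(o'') + \gamma^{t+1}\sum_{h'_{\colorF}}\Pr(h'_{\colorF}\mid o'')\,\alpha'_{\colorF}(o''_{h'_{\colorF}})\) for \(o'' = \tau(o,(\delta_{\colorL},\delta_{\colorF}))\) re-factors as \(\rho_{\colorF}(o) + \gamma^{t}\sum_{h_{\colorF}}\Pr(h_{\colorF}\mid o)\,\tilde{\alpha}_{\colorF}(o_{h_{\colorF}})\) for a backed-up vector \(\tilde{\alpha}\) depending on \(\delta_{\colorF}(\cdot\mid h_{\colorF})\) and the chosen stage-\((t+1)\) vectors at each extension \(h_{\colorF}a_{\colorF}s'\) — this does follow from the factorisation of \(\Pr\) and additivity of \(\rho\), but it must be written out. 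Second, and not addressed by linearity at all: you must show that the \emph{argmax sets} compose, i.e.\ that the set of pairs \((o,\alpha^{\cdot})\) attaining the follower's maximum in the stage-\(t\) representation is exactly the image, under the backup, of the stage-\((t+1)\) argmax set ranged over all \((o,\delta_{\colorF})\), so that the leader-favourable tie-breaking (the outer \(\max\) of \(\rho_{\colorL}+\gamma^t\sum\alpha_{\colorL}\) over the follower's argmax set) is preserved with equality and not merely as an inequality. Without this the induction gives at best a bound, not the claimed identity. The paper's direct construction avoids both issues because its \(\alpha\)-vectors are the exact continuation value functions of full policy pairs, so the argmax over \((o,\alpha^{\cdot})\) is literally the argmax over follower best responses by definition.
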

\begin{proof}
The proof follows directly from the definition of the optimal leader state-value function at stage \(t\): for any arbitrary credible set \(\tilde{c}\in \tilde{C}\), we have that:
\begin{align}
    v_{\colorL,t}^*(\tilde{c}) &\doteq 
    \textstyle
    \max_{\pi_{\colorL}\in \Pi_{\colorL}}
    v_{\colorL,t}^{\pi_{\colorL}}(\tilde{c}) 
    \label{eqn:appendix:thm:convexity:0}.
\end{align}
Let  \(\tilde{c}_{\pi_{\colorL}} \doteq \tilde{T}( \tilde{T}( \ldots (\tilde{T}(\tilde{c}, \delta_{\colorL,t}), \delta_{\colorL,t+1}) \ldots), \delta_{\colorL,\ell-1})\) denote a terminal credible set induced by \(\tilde{c}\) and leader policy \(\pi_{\colorL} \doteq (\delta_{\colorL,t}, \delta_{\colorL,t+1}, \ldots,\delta_{\colorL,\ell-1})\) from stage \(t\) onward. Then, it follows from (\ref{eqn:appendix:thm:convexity:0}) that: 
\begin{align}
    v_{\colorL,t}^*(\tilde{c})  &= 
    \textstyle
    \max_{\pi_{\colorL}\in \Pi_{\colorL}}
    R(\tilde{c}_{\pi_{\colorL}})
    \label{eqn:appendix:thm:convexity:1}.
\end{align}
Let \(o_{\pi_{\colorL}, \pi_{\colorF}}\) denote a terminal occupancy state induced by occupancy state \(o\in \tilde{c}\) and leader policy \(\pi_{\colorL} \doteq (\delta_{\colorL,t}, \delta_{\colorL,t+1}, \ldots,\delta_{\colorL,\ell-1})\) and follower policy \(\pi_{\colorF} \doteq (\delta_{\colorF,t}, \delta_{\colorF,t+1}, \ldots,\delta_{\colorF,\ell-1})\) from stage \(t\) onward.
The application of the definition of the reward in a credible Markov decision process into (\ref{eqn:appendix:thm:convexity:1}) leads us to the following expression:
\begin{align}
    v_{\colorL,t}^*(\tilde{c})  
    &= 
    \max_{\pi_{\colorL}\in \Pi_{\colorL}}
    \max_{o_{\pi_{\colorL}, \pi_{\colorF}}\in \tilde{c}_{\pi_{\colorL}}}
    \left\{ 
    \rho_{\colorL}(o_{\pi_{\colorL}, \pi_{\colorF}})
    \mid 
           \rho_{\colorF}(o_{\pi_{\colorL}, \pi_{\colorF}})
            =
            \max_{o'_{\pi_{\colorL}, \pi'_{\colorF}}\in \tilde{c}_{\pi_{\colorL}}}
           \rho_{\colorF}(o'_{\pi_{\colorL}, \pi'_{\colorF}})
    \right\}
    \label{eqn:appendix:thm:convexity:2}.
\end{align}
By the application of the definition of the state-value function \(v_{\colorI,t}^{\pi_{\colorL}, \pi_{\colorF}} \colon S\times H_{\colorL}\times H_{\colorF} \to \mathbb{R}\) under leader--follower policies \((\pi_{\colorL}, \pi_{\colorF})\) from stage \(t\) onward into (\ref{eqn:appendix:thm:convexity:2}) we obtain:
\begin{align}
    v_{\colorL,t}^*(\tilde{c})  
    &= 
    \max_{\pi_{\colorL}\in \Pi_{\colorL}}
    \max_{o\in \tilde{c}}
    \left\{ 
    \rho_{\colorL}(o)
    +
    \gamma^t
    v_{\colorL,t}^{\pi_{\colorL}, \pi_{\colorF}}(o)
    \mid 
           \rho_{\colorF}(o)
           +
           \gamma^t
           v_{\colorF,t}^{\pi_{\colorL}, \pi_{\colorF}}(o)
            =
            \max_{o'\in \tilde{c}}
           \left(
           \rho_{\colorF}(o')
           +
           \gamma^t
            v_{\colorF,t}^{\pi_{\colorL}, \pi'_{\colorF}}(o')
            \right)
    \right\}
    \label{eqn:appendix:thm:convexity:3}.
\end{align}
If we let 
\[ 
\Gamma_{\pi_{\colorL,t:\ell-1}} \doteq \{(v_{\colorL,t}^{\pi_{\colorL,t:\ell-1}, \pi_{\colorF,t:\ell-1}(h_{\colorF,t})}, v_{\colorF,t}^{\pi_{\colorL,t:\ell-1}, \pi_{\colorF,t:\ell-1}(h_{\colorF,t})}) \mid \pi_{\colorF,t:\ell-1}\in \Pi_{\colorF,t:\ell-1}, h_{\colorF,t}\in H_{\colorF,t}\} 
\] 
where \(\pi_{\colorF,t:\ell-1}(h_{\colorF,t})\) is a follower policy tree from stage \(t\) onward, 
and \(\Lambda_t \doteq \{\Gamma_{\pi_{\colorL,t:\ell-1}} \mid \pi_{\colorL,t:\ell-1}\in \Pi_{\colorL,t:\ell-1}\} \), then (\ref{eqn:appendix:thm:convexity:3}) becomes:
\begin{align*}
    v_{\colorL,t}^*(\tilde{c})  
    &= 
    \max_{\Gamma\in \Lambda_t}
    \left\{ 
    \rho_{\colorL}(o)
    +
    \gamma^t
    \sum_{h_{\colorF}}
    \Pr(h_{\colorF} \mid o)
    \cdot \alpha^{o_{h_{\colorF}}}_{\colorL}(o_{h_{\colorF}})
    \mid 
          (o, \alpha^{\cdot})
            \in 
            \argmax_{o'\in \tilde{c}, \alpha'\in \Gamma^{|\mathbb{C}_{\colorF}(\tilde{c})|}}
           \left(
           \rho_{\colorF}(o')
           +
           \gamma^t
    \sum_{h_{\colorF}}
    \Pr(h_{\colorF} \mid o)
    \cdot             \alpha'^{o_{h_{\colorF}}}_{\colorF}(o'_{h_{\colorF}})
            \right)
    \right\}.
\end{align*}
Which ends the proof.
\end{proof}

\begin{theorem}
\label{appendix:thm:convexity}
The optimal leader state-value function \( v_{\colorL,t}^* \colon \tilde{C} \to \mathbb{R} \) at stage \(t\) is uniformly continuous across credible sets.  
\end{theorem}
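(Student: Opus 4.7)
The plan is to combine the alpha-vector representation of $v_{\colorL,t}^{*}$ furnished by Lemma~\ref{appendix:lem:convexity} with the abstract continuity result of Lemma~\ref{lem:continuity:filtered:sets}. My overall strategy is to show that the right-hand side in Lemma~\ref{appendix:lem:convexity} is a concrete instance of the composite value function $v^{*}$ treated in Lemma~\ref{lem:continuity:filtered:sets}, so that uniform continuity of $v_{\colorL,t}^{*}$ under the Hausdorff distance $d_H$ follows by a direct appeal to that lemma.

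First, I would make the identification between the two formulas explicit. In the notation of Lemma~\ref{lem:continuity:filtered:sets}, let $K \doteq \Lambda_t$ index the alpha-vector sets $\Gamma$, and let $N$ index, for each $\Gamma$, all tuples $\alpha^{\cdot} \in \Gamma^{|H_{\colorF,t}|}$ assigning one $(\alpha_{\colorL},\alpha_{\colorF})$ pair per follower history at stage $t$. To keep $N$ independent of $\tilde{c}$, I would enumerate over the full, $\tilde{c}$-free set $H_{\colorF,t}$ rather than the $\tilde{c}$-dependent support $\mathbb{C}_{\colorF}(\tilde{c})$; histories with $\Pr(h_{\colorF}\mid o)=0$ contribute nothing. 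I would then set
\[
v_{kn}(o) \doteq \rho_{\colorL}(o) + \gamma^{t}\,\textstyle\sum_{h_{\colorF} \in H_{\colorF,t}} \Pr(h_{\colorF}\mid o)\,\alpha_{\colorL}^{o_{h_{\colorF}}}(o_{h_{\colorF}}),
\]
and $u_{kn}(o)$ analogously with the follower-side components $\alpha_{\colorF}^{o_{h_{\colorF}}}$. With this encoding the coupled $(o',\alpha')$-argmax appearing in Lemma~\ref{appendix:lem:convexity} reduces, for each fixed $(k,n)$, to the plain argmax of $u_{kn}$ over $\tilde{c}$, which is exactly the filtered subset $\tilde{c}_{kn}^{*}$ of Lemma~\ref{lem:continuity:filtered:sets}.

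Second, I would verify the two hypotheses. Finiteness of $K$ and $N$ is immediate from the finiteness of the horizon, state space, and action spaces, which makes $H_{\colorF,t}$, every $\Gamma \in \Lambda_t$, and $\Lambda_t$ itself finite. Uniform continuity of each $v_{kn}$ and $u_{kn}$ in $o$ under the $\ell_{1}$-norm follows from their affine structure: $\rho_{\colorI}$ is affine by construction, and the product $\Pr(h_{\colorF}\mid o)\cdot\alpha(o_{h_{\colorF}})$ collapses, via the identity $o_{h_{\colorF}}(s,h_{\colorL}) = o(s,h_{\colorL},h_{\colorF}) / \Pr(h_{\colorF}\mid o)$ together with the linearity of $\alpha$ on conditional occupancy states, to the plainly linear expression $\sum_{s,h_{\colorL}} o(s,h_{\colorL},h_{\colorF})\,\alpha(s,h_{\colorL})$. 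Affine maps on the compact occupancy-state simplex are Lipschitz, hence uniformly continuous, and Lemma~\ref{lem:continuity:filtered:sets} then yields uniform continuity of $v_{\colorL,t}^{*}$ under $d_H$.

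The hard part will be pinning down the reparametrisation of the follower-side filter. In Lemma~\ref{appendix:lem:convexity} the argmax couples the choice of $o \in \tilde{c}$ with the choice of per-history follower vectors $\alpha'$, whereas the template of Lemma~\ref{lem:continuity:filtered:sets} assumes $(k,n)$ fixed and filters only over $o$. Resolving this requires folding the $\alpha'$-selection into the index $n$ so that, for each $(k,n)$, the follower score becomes a fixed function of $o$, and then checking that the outer maximum over all such $(k,n)$ and over the resulting $\tilde{c}_{kn}^{*}$ reproduces the original coupled maximum. Once this combinatorial bookkeeping is settled, the remainder of the argument is just the affine-continuity observation above followed by a direct invocation of Lemma~\ref{lem:continuity:filtered:sets}.
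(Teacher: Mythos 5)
Your proposal follows essentially the same route as the paper's own proof: it instantiates the composite value function of Lemma~\ref{lem:continuity:filtered:sets} with the leader- and follower-side linear functionals extracted from the alpha-vector representation of Lemma~\ref{appendix:lem:convexity}, verifies uniform continuity of these functionals via their affine structure over occupancy states, and concludes by direct appeal to the abstract lemma. If anything, you are more explicit than the paper about the one delicate point --- folding the coupled \((o',\alpha')\)-argmax into a \(\tilde{c}\)-independent index set \(N\) --- which the paper glosses over by writing \(\Gamma^{|\mathbb{C}_{\colorF}(\tilde{c})|}\) despite its dependence on \(\tilde{c}\).
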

\begin{proof}
Let \(v^{\Gamma}_{\alpha^{\cdot}}\colon o \mapsto \rho_{\colorL}(o) + \gamma^t \sum_{h_{\colorF}} \Pr(h_{\colorF}\mid o)\cdot \alpha_{\colorL}^{o_{h_{\colorF}}}(o_{h_{\colorF}})\) and \(u^{\Gamma}_{\alpha^{\cdot}}\colon o \mapsto \rho_{\colorF}(o) + \gamma^t \sum_{h_{\colorF}} \Pr(h_{\colorF}\mid o)\cdot \alpha_{\colorF}^{o_{h_{\colorF}}}(o_{h_{\colorF}})\). Both are uniformly continuous over occupancy states \(O\) linearity of \(\rho_{\colorI}\) and \(\alpha^{\cdot}_{\colorI}\) for all \(\colorI\in \{\colorL,\colorF\}\). The result then follows directly from Lemma~\ref{appendix:lem:convexity} and  Lemma~\ref{lem:continuity:filtered:sets}, which ensures the uniform continuity of  
\[
        v^*\colon \tilde{c}\mapsto \max_{\Gamma\in \Lambda}\{v^\Gamma_{\alpha^{\cdot}}(o) \mid (o,\alpha^{\cdot}) \in \argmax_{o'\in \tilde{c},\alpha'^{\cdot}\in \Gamma^{|\mathbb{C}_\colorF(\tilde{c})|}} u^{\Gamma}_{\alpha'^{\cdot}}(o') \}
.\]
\end{proof}

\begin{theorem}
\label{appendix:thm:pbvi-exploitability}
Let \( \sigma \doteq \sup_{\tilde{c} \in \tilde{C}} \min_{\tilde{c}' \in \tilde{C}'} d_H(\tilde{c}, \tilde{c}') \) be the Hausdorff covering radius of the PBVI sample set \(\tilde{C}'\). Then the exploitability \(\varepsilon\) of the leader policy returned by PBVI satisfies:
\[
\varepsilon \leq \textstyle \frac{2m\sigma}{(1 - \gamma)^2} \left[ 1 + \ell \gamma^{\ell+1} - (\ell + 1) \gamma^\ell \right].
\]
\end{theorem}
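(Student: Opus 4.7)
The plan is to combine three ingredients: a quantitative Lipschitz estimate for the optimal leader value function in Hausdorff distance, a backward-induction bound on the PBVI approximation error, and a greedy-extraction step that converts value-function accuracy into exploitability. First, I would upgrade the uniform continuity statement of Theorem~\ref{thm:convexity} into an explicit Lipschitz estimate with modulus $L_t \leq m\gamma^t(1-\gamma^{\ell-t})/(1-\gamma)$, namely $|v_{\colorL,t}^*(\tilde{c}) - v_{\colorL,t}^*(\tilde{c}')| \leq L_t\, d_H(\tilde{c},\tilde{c}')$ for all $\tilde{c},\tilde{c}' \in \tilde{C}$. Under the representation of Theorem~\ref{thm:convexity}, each linear piece $v^{\Gamma}_{\colorI}(o)$ decomposes into a past-cumulative component $\rho_{\colorI}(o)$ and a discounted future component; when comparing credible sets that share a common Bellman parent, the past-cumulative component is carried along deterministically by $\tilde{T}$ and cancels, so only the discounted future (bounded by $\gamma^t m(1-\gamma^{\ell-t})/(1-\gamma)$) is genuinely exposed to Hausdorff perturbations. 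I would retrace Lemmas~\ref{lem:stability:filtered:maxima}--\ref{lem:continuity:filtered:sets} with this explicit modulus while verifying that the leader-favourable tie-breaking inside the filter $o \in \arg\max_{o'} v^{\Gamma}_{\colorF}(o')$ contributes at most $L_t \cdot d_H(\tilde{c},\tilde{c}')$ of additional slack.

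Next, let $\hat{v}_{\colorL,t}$ denote the PBVI approximation obtained by backward induction on the sample set $\tilde{C}'$ and extended to arbitrary credible sets by nearest-neighbour evaluation in Hausdorff distance. Combining the Lipschitz bound above with the definition of $\sigma$ yields the error recursion
\[
\|v_{\colorL,t}^* - \hat{v}_{\colorL,t}\|_\infty \leq L_t\, \sigma + \|v_{\colorL,t+1}^* - \hat{v}_{\colorL,t+1}\|_\infty,
\]
initialised by $\|v_{\colorL,\ell}^* - \hat{v}_{\colorL,\ell}\|_\infty = 0$, since the terminal reward $R$ is represented exactly by the credible MDP at stage~$\ell$. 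Unrolling gives $\|v_{\colorL,0}^* - \hat{v}_{\colorL,0}\|_\infty \leq \sigma \sum_{t=0}^{\ell-1} L_t$, and a standard greedy-extraction argument then converts this value-function accuracy into the exploitability bound $\varepsilon \leq 2\sigma \sum_{t=0}^{\ell-1} L_t$ for the induced leader policy.

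Substituting $L_t = m\gamma^t(1-\gamma^{\ell-t})/(1-\gamma)$ and invoking the closed-form identity
\[
\sum_{t=0}^{\ell-1}\gamma^t(1-\gamma^{\ell-t}) = \frac{1 - (\ell+1)\gamma^\ell + \ell\gamma^{\ell+1}}{1-\gamma}
\]
produces $\varepsilon \leq \frac{2m\sigma}{(1-\gamma)^2}\bigl[1 + \ell\gamma^{\ell+1} - (\ell+1)\gamma^\ell\bigr]$, which is exactly the stated bound. The main obstacle is the first step: establishing the tight Lipschitz constant $L_t = m\gamma^t(1-\gamma^{\ell-t})/(1-\gamma)$ rather than the looser constant $m(1-\gamma^\ell)/(1-\gamma)$ obtained by naive inspection of the individual linear pieces -- the looser constant would give an exploitability bound growing linearly in $\ell$ and therefore fail to reproduce the claimed $O(1/(1-\gamma)^2)$ asymptotics. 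The tighter $\gamma^t$ prefactor must be extracted by isolating the past-cumulative component $\rho_{\colorI}(o)$ and showing it cancels in every regret-style comparison generated along the backward recursion, while simultaneously controlling the SSE tie-breaking -- the canonical source of discontinuity in Stackelberg value functions -- inside the filter. Once this quantitative refinement of Theorem~\ref{thm:convexity} is in place, the remaining steps, namely the error recursion and the closed-form evaluation of the resulting geometric-polynomial sum, are routine.
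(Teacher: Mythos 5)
Your final arithmetic is right --- the sum $2m\sigma\sum_{t=0}^{\ell-1}\gamma^t(1-\gamma^{\ell-t})/(1-\gamma)$ is exactly the quantity the paper's proof evaluates in closed form --- but the route you propose has a genuine gap at precisely the point you flag as the ``main obstacle,'' and your sketch does not close it. The Lipschitz modulus $L_t = m\gamma^t(1-\gamma^{\ell-t})/(1-\gamma)$ cannot hold for $v^*_{\colorL,t}$ over arbitrary pairs of credible sets: by definition $v^*_{\colorL,t}(\tilde{c})$ is the terminal reward $R$ of the credible set reached at stage $\ell$, which contains the value-so-far $\rho_{\colorL}(o)$, a linear functional with coefficients as large as $m(1-\gamma^{t})/(1-\gamma)$ and no $\gamma^t$ damping. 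Your cancellation argument applies only to credible sets sharing a common Bellman parent, but the comparison your error recursion actually needs is between $\tilde{c}_t$ and its nearest sampled neighbour $\tilde{c}^*_t \in \tilde{C}'$, which is generated by a different leader decision-rule history and shares no parent with $\tilde{c}_t$; there the $\rho$-difference is fully exposed, contributes on the order of $m(1-\gamma^t)\sigma/(1-\gamma)$ per stage, and forces you back to the looser constant and hence to a bound growing linearly in $\ell$. (The terminal initialisation $\|v^*_{\colorL,\ell}-\hat v_{\colorL,\ell}\|_\infty=0$ fails for the same reason: nearest-neighbour evaluation at stage $\ell$ returns $R(\tilde{c}^*)$, not $R(\tilde{c})$.)

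The paper avoids this entirely by never invoking a Lipschitz constant for $v^*_{\colorL,t}$ over credible sets. It writes the exploitability as a performance-difference telescope along the optimal trajectory, $\sum_{t=0}^{\ell-1}\gamma^t\bigl(q^{\pi'_\colorL,\pi'_\colorF}_{\colorL,t}(o_t,\delta'_{\colorL,t},\delta'_{\colorF,t})-v^{\pi'_\colorL,\pi'_\colorF}_{\colorL,t}(o_t)\bigr)$, where the $q$ and $v$ appearing here are \emph{future-only} value functions of the PBVI policy, linear over occupancy states with coefficients bounded by $m(1-\gamma^{\ell-t})/(1-\gamma)$. Substituting a sampled $o'_t$ with $\|o_t-o'_t\|_1\le\sigma$, using that the greedy rule is exact at $o'_t$, and applying H\"older's inequality yields the per-stage contribution $2m\sigma(1-\gamma^{\ell-t})/(1-\gamma)$; the $\gamma^t$ prefactor comes from the telescoping weights, not from a modulus of continuity. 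If you want to keep your recursion-style architecture you would have to reformulate it in terms of these future-only functionals (equivalently, factor $\rho_{\colorI}$ out before comparing occupancy states), at which point the argument essentially collapses into the paper's.
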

\begin{proof}
    Let \(\pi_\colorL \doteq \delta_{\colorL,0:\ell-1}\) be an optimal leader policy with value \(v^*_{\colorL}(\tilde{c}_0)\). 
    Let \((\tilde{c}_0, \ldots, \tilde{c}_\ell)\) denote the sequence of credible sets induced by \(\pi_\colorL\) and the follower (best-response) policy \(\pi_\colorF\doteq \delta_{\colorF,0:\ell-1}\), 
    for which PBVI yields the worst estimate.
    Let \((\tilde{c}'_0, \ldots, \tilde{c}'_\ell)\) be the closest sequence of credible sets to \((\tilde{c}_0, \ldots, \tilde{c}_\ell)\) in Hausdorff distance \(d_H(\cdot,\cdot)\), induced by the sampled credible sets \(\tilde{C}'_{0:\ell}\).
    As a consequence, the following inequality holds \(d_H(\tilde{c}_t,\tilde{c}'_t) \leq \sigma\) for any stage \(t\).
    Let \(v_\colorL\) be the approximate value function,  \(\pi'_\colorL \doteq \delta'_{\colorL,0:\ell-1}\) the induced leader policy computed by PBVI over \(\tilde{C}'_{0:\ell}\) and the follower (best-response) policy  \(\pi'_\colorF \doteq \delta'_{\colorF,0:\ell-1}\).
    Let \(v_{\colorL,0:\ell}^{\pi_\colorL,\pi_\colorF}\) and \(v_{\colorL,0:\ell}^{\pi'_\colorL,\pi'_\colorF}\) be the value functions induced by pairs of behavioural strategies, each uniformly continuous over credible sets, such that \(v_{\colorL,0}^*(\tilde{c}_0) = v_{\colorL,0}^{\pi_\colorL,\pi_\colorF}(s_0)\) and \(v_{\colorL,0}(\tilde{c}_0) = v_{\colorL,0}^{\pi'_\colorL,\pi'_\colorF}(s_0)\), respectively. 
    Then, 
\begin{align*}
\varepsilon &\textstyle\doteq
v_{\colorL,0}^*(\tilde{c}_0) - v_{\colorL,0}^{\pi'_\colorL, \pi'_\colorF}(s_0).
\end{align*}
Let \(\nu_\colorL(o_t, \delta_{\colorL,t}, \delta_{\colorF,t}) \doteq \mathbb{E}[r(s_t, a_{\colorL,t}, a_{\colorF,t}, s_{t+1})\mid o_t, \delta_{\colorL,t}, \delta_{\colorF,t}]\) denotes the expected immediate leader payoff upon taking decision rules \((\delta_{\colorL,t}, \delta_{\colorF,t})\) in occupancy state \(o_t \doteq o_{\delta_{\colorL,0:t-1}\delta_{\colorF,0:t-1}} \in \tilde{c}_t\).
Given \(v_{\colorL,0}^*(\tilde{c}_0) = v_{\colorL,0}^{\pi_\colorL,\pi_\colorF}(s_0)\) and \(v_{\colorL,0}(\tilde{c}_0) = v_{\colorL,0}^{\pi'_\colorL,\pi'_\colorF}(s_0)\), it follows that:
\begin{align*}
    v_{\colorL,0}^*(\tilde{c}_0) - v_{\colorL,0}^{\pi'_\colorL, \pi'_\colorF}(s_0)
    &= 
    v_{\colorL,0}^{\pi_\colorL,\pi_\colorF}(s_0) - v_{\colorL,0}^{\pi'_\colorL, \pi'_\colorF}(s_0)\\
    &\textstyle= \left(\textcolor{sthlmGreen}{\sum_{t=0}^{\ell-1} \gamma^t \cdot \nu_\colorL(o_t, \delta_{\colorL,t}, \delta_{\colorF,t})} \right) - v_{\colorL,0}^{\pi'_\colorL, \pi'_\colorF}(s_0) \quad \text{(by definition)} \\
    &\textstyle= \left(\sum_{t=0}^{\ell-1} \gamma^t \cdot \nu_\colorL(o_t, \delta_{\colorL,t}, \delta_{\colorF,t})\right) \textcolor{sthlmGreen}{ - \sum_{t=1}^{\ell} \gamma^t (v_{\colorL,t}^{\pi'_\colorL, \pi'_\colorF}(o_t) - v_{\colorL,t}^{\pi'_\colorL, \pi'_\colorF}(o_t))} - v_{\colorL,0}^{\pi'_\colorL, \pi'_\colorF}(s_0) \quad \text{(adding zero)}.
\end{align*}
Using the convention \(v_{\colorL,\ell}^{\pi'_\colorL, \pi'_\colorF}(\cdot) \doteq 0\), we rearrange terms:
\begin{align*}
&\textstyle= \sum_{t=0}^{\ell-1} \gamma^t \cdot \nu_\colorL(o_t, \delta_{\colorL,t}, \delta_{\colorF,t})
+ \left( \gamma^{\ell} \cdot v_{\colorL,\ell}^{\pi'_\colorL, \pi'_\colorF}(o_{\ell}) + \sum_{t=1}^{\ell-1} \gamma^t \cdot v_{\colorL,t}^{\pi'_\colorL, \pi'_\colorF}(o_t) \right)
- \left( \gamma^0 \cdot v_{\colorL,0}^{\pi'_\colorL, \pi'_\colorF}(s_0) + \sum_{t=1}^{\ell-1} \gamma^t \cdot v_{\colorL,t}^{\pi'_\colorL, \pi'_\colorF}(o_t) \right) \\
&\textstyle= \sum_{t=0}^{\ell-1} \gamma^t \cdot \nu_\colorL(o_t, \delta_{\colorL,t}, \delta_{\colorF,t})
+ \sum_{t=0}^{\ell-1} \gamma^{t+1} \cdot v_{\colorL,t}^{\pi'_\colorL, \pi'_\colorF}(o_{t+1}) - \sum_{t=0}^{\ell-1} \gamma^t \cdot v_{\colorL,t}^{\pi'_\colorL, \pi'_\colorF}(o_t) \\
&\textstyle= \sum_{t=0}^{\ell-1} \gamma^t \left(  \nu_\colorL(o_t, \delta_{\colorL,t}, \delta_{\colorF,t}) + \gamma \cdot v_{\colorL,t}^{\pi'_\colorL, \pi'_\colorF}(o_{t+1}) - v_{\colorL,t}^{\pi'_\colorL, \pi'_\colorF}(o_t) \right) \\
&\textstyle= \sum_{t=0}^{\ell-1} \gamma^t \left( q_{\colorL,t}^{\pi'_\colorL, \pi'_\colorF}(o_t, \delta'_{\colorL,t}, \delta'_{\colorF,t}) - v_{\colorL,t}^{\pi'_\colorL, \pi'_\colorF}(o_t) \right).
\end{align*}
Now substitute \(o'_t  \in \tilde{c}'_t\) such that \(\|o_t-o'_t\|_1 \le \sigma\) in place of \(o_t\). Notice that \(o'_t\) must exist since \(d_H(\tilde{c}_t,\tilde{c}'_t) \le \sigma\). Thus,
\begin{align*}
&\textstyle= \sum_{t=0}^{\ell-1} \gamma^t \left( q_{\colorL,t}^{\pi'_\colorL, \pi'_\colorF}(o_t, \delta'_{\colorL,t}, \delta'_{\colorF,t}) \textcolor{sthlmGreen}{ - q_{\colorL,t}^{\pi'_\colorL, \pi'_\colorF}(o'_t, \delta'_{\colorL,t}, \delta'_{\colorF,t}) + q_{\colorL,t}^{\pi'_\colorL, \pi'_\colorF}(o'_t, \delta'_{\colorL,t}, \delta'_{\colorF,t})} - v_{\colorL,t}^{\pi'_\colorL, \pi'_\colorF}(o_t) \right).
\end{align*}
Because the greedy rule for \(q_{\colorL,t}^{\pi'_\colorL, \pi'_\colorF}(o'_t, \cdot, \cdot)\) achieves value \(v_{\colorL,t}^{\pi'_\colorL, \pi'_\colorF}(o'_t)\), we have:
\begin{align*}
&\textstyle\leq \sum_{t=0}^{\ell-1} \gamma^t \left( q_{\colorL,t}^{\pi'_\colorL, \pi'_\colorF}(o_t, \delta'_{\colorL,t}, \delta'_{\colorF,t})  - q_{\colorL,t}^{\pi'_\colorL, \pi'_\colorF}(o'_t, \delta'_{\colorL,t}, \delta'_{\colorF,t}) + v_{\colorL,t}^{\pi'_\colorL, \pi'_\colorF}(o'_t) - v_{\colorL,t}^{\pi'_\colorL, \pi'_\colorF}(o_t) \right)\\
&\textstyle=
\sum_{t=0}^{\ell-1} \gamma^t \left[ (q_{\colorL,t}^{\pi'_\colorL, \pi'_\colorF}(o_t, \delta'_{\colorL,t}, \delta'_{\colorF,t})  - v_{\colorL,t}^{\pi'_\colorL, \pi'_\colorF}(o_t))  - ( q_{\colorL,t}^{\pi'_\colorL, \pi'_\colorF}(o'_t, \delta'_{\colorL,t}, \delta'_{\colorF,t}) - v_{\colorL,t}^{\pi'_\colorL, \pi'_\colorF}(o'_t)) \right]\\
&\textstyle=
\sum_{t=0}^{\ell-1} \gamma^t \left[q_{\colorL,t}^{\pi'_\colorL, \pi'_\colorF}(\cdot, \delta'_{\colorL,t}, \delta'_{\colorF,t})  - v_{\colorL,t}^{\pi'_\colorL, \pi'_\colorF}(\cdot) \right] \cdot [o_t -o'_t],
~\text{(by linearity over occupancy states)}.
\end{align*}
Applying Hölder’s inequality, using the definition of \(\sigma\), and the fact that payoffs are bounded:
\begin{align*}
v_{\colorL,0}^*(\tilde{c}_0) - v_{\colorL,0}^{\pi'_\colorL, \pi'_\colorF}(s_0)
&\textstyle\leq \sum_{t=0}^{\ell-1} \gamma^t \cdot \left\| q_{\colorL,t}^{\pi'_\colorL, \pi'_\colorF}(\cdot, \delta'_{\colorL,t}, \delta'_{\colorF,t})  - v_{\colorL,t}^{\pi'_\colorL, \pi'_\colorF}(\cdot) \right\|_\infty \cdot \|o_t -o'_t\|_1 \\
&\textstyle\leq \sigma \sum_{t=0}^{\ell-1} \gamma^t \cdot \left\| q_{\colorL,t}^{\pi'_\colorL, \pi'_\colorF}(\cdot, \delta'_{\colorL,t}, \delta'_{\colorF,t})  - v_{\colorL,t}^{\pi'_\colorL, \pi'_\colorF}(\cdot) \right\|_\infty \\
&\textstyle\leq 2m\sigma \sum_{t_0=0}^{\ell-1} \gamma^{t_0} \sum_{t_1=t_0}^{\ell-1} \gamma^{t_1 - t_0},~\text{(where \(q_{\colorL,t}^{\pi'_\colorL, \pi'_\colorF}\) and \(v_{\colorL,t}^{\pi'_\colorL, \pi'_\colorF}\) linear over occupancy states)} \\
&\textstyle= 2m\sigma \sum_{t_0=0}^{\ell-1}\sum_{t_1=t_0}^{\ell-1} \gamma^{t_1} \\
&\textstyle= 2m\sigma \sum_{t_0=0}^{\ell-1} \frac{\gamma^{t_0} - \gamma^{\ell}}{1 - \gamma} \\
&\textstyle= \frac{2m\sigma}{1 - \gamma} \sum_{t_0=0}^{\ell-1} (\gamma^{t_0} - \gamma^{\ell}) \\
&\textstyle= \frac{2m\sigma}{(1 - \gamma)^2} \left[ 1 + \ell \gamma^{\ell+1} - (\ell + 1) \gamma^\ell \right].
\end{align*}
Which ends the proof.
\end{proof}
\subsection{Greedy Leader Decision Rules}
\label{appendix:greedy:leader:dr}

\begin{definition}
Let \(o\) be an occupancy state.
A conditional occupancy state \(o_{h_{\colorF}}\) induced by occupancy state \(o\) and follower history \(h_{\colorF}\) is given by: for any state \(s\in S\) and leader history \(h_{\colorL}\in H_{\colorL}\),
\[
o_{h_{\colorF}}(s,h_{\colorL}) \doteq \frac{ o(s,h_{\colorL},h_{\colorF}) }{ \sum_{s,h_{\colorL}} o(s,h_{\colorL},h_{\colorF})}.
\]
\end{definition}
Each occupancy state \( o \) admits a convex decomposition of the form
\[
\textstyle o \;=\; \sum_{h_{\colorF}} \Pr\bigl(h_{\colorF}\mid o\bigr)\,\cdot \bigl( o_{h_{\colorF}} \otimes \pmb{e}_{h_{\colorF}} \bigr),
\]
where \( \pmb{e}_{h_{\colorF}} \) is the one-hot vector over follower histories and \( \otimes \) denotes the Kronecker product.
\begin{lemma}
\label{lem:conditional:occupancy:state}
    Let \(o\) be an occupancy state, \(o_{h_{\colorF}}\) be a conditional occupancy state induced by occupancy state \(o\) and follower history \(h_{\colorF}\). 
    Let \(\delta_{\colorL}\) be a leader decision rule, and \(a_{\colorF}\) be a follower action.
    The next conditional occupancy state \(o_{h_{\colorF}a_{\colorF}s'} \doteq \tau_{s'}(o_{h_{\colorF}},\delta_{\colorL},a_{\colorF})\) upon receiving next state \(s'\) after taking leader-follower decisions \((\delta_{\colorL},a_{\colorF})\) starting in conditional occupancy state \(o_{h_{\colorF}}\) is given by: for any state \(s'\) and leader history \(h'_{\colorL}\doteq h_{\colorL}a_{\colorL}s'\),
    \begin{align*}
        o_{h_{\colorF}a_{\colorF}s'}(s',h'_{\colorL}) &\propto \textstyle\sum_s o_{h_{\colorF}}(s,h_{\colorL}) \cdot \delta_{\colorL}(a_{\colorL}|h_{\colorL}) \cdot p(s'|s,a_{\colorL},a_{\colorF}),
    \end{align*}
    with normalising constant \(\eta_{s'}(o_{h_{\colorF}},\delta_{\colorL},a_{\colorF})\).
\end{lemma}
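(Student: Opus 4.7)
The plan is to derive the update formula directly from the two definitions already in place: the transition operator $\tau$ applied to a full occupancy state, and the definition of a conditional occupancy state as the appropriately normalised restriction of an occupancy state to a fixed follower history. Since we are conditioning on $h_{\colorF}$ and on the extended history $h_{\colorF} a_{\colorF} s'$, it is convenient to treat $a_{\colorF}$ as being taken with probability one by some follower decision rule $\delta_{\colorF}$ satisfying $\delta_{\colorF}(a_{\colorF} \mid h_{\colorF}) = 1$; this choice only fixes $\delta_{\colorF}$ at $h_{\colorF}$ and does not affect the result, since everything downstream is conditioned on the event $\{h_{\colorF} a_{\colorF} s'\}$.

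First I would write the joint successor occupancy state $o' \doteq \tau(o, (\delta_{\colorL},\delta_{\colorF}))$ using the formula recalled in the paper, obtaining for $h'_{\colorL} = h_{\colorL} a_{\colorL} s'$
\[
o'(s', h'_{\colorL}, h_{\colorF} a_{\colorF} s') = \sum_s o(s, h_{\colorL}, h_{\colorF})\,\delta_{\colorL}(a_{\colorL}\mid h_{\colorL})\,\delta_{\colorF}(a_{\colorF}\mid h_{\colorF})\,p(s'\mid s, a_{\colorL}, a_{\colorF}).
\]
Then I would factor $o(s, h_{\colorL}, h_{\colorF}) = \Pr(h_{\colorF}\mid o)\, o_{h_{\colorF}}(s, h_{\colorL})$ using the definition of the conditional occupancy state, which pulls the factor $\Pr(h_{\colorF}\mid o)\,\delta_{\colorF}(a_{\colorF}\mid h_{\colorF})$ out of the sum over $s$.

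Second, I would apply the definition of conditional occupancy state at the successor level:
\[
o_{h_{\colorF} a_{\colorF} s'}(s', h'_{\colorL}) = \frac{o'(s', h'_{\colorL}, h_{\colorF} a_{\colorF} s')}{\sum_{s'', h''_{\colorL}} o'(s'', h''_{\colorL}, h_{\colorF} a_{\colorF} s')}.
\]
Summing the numerator expression over $(h_{\colorL}, a_{\colorL})$ (with $s'$ fixed by the conditioning) gives the denominator, and the common $\Pr(h_{\colorF}\mid o)\,\delta_{\colorF}(a_{\colorF}\mid h_{\colorF})$ factor cancels cleanly between numerator and denominator. What remains is exactly
\[
o_{h_{\colorF} a_{\colorF} s'}(s', h'_{\colorL}) = \frac{\sum_s o_{h_{\colorF}}(s, h_{\colorL})\,\delta_{\colorL}(a_{\colorL}\mid h_{\colorL})\,p(s'\mid s, a_{\colorL}, a_{\colorF})}{\eta_{s'}(o_{h_{\colorF}}, \delta_{\colorL}, a_{\colorF})},
\]
with $\eta_{s'}(o_{h_{\colorF}}, \delta_{\colorL}, a_{\colorF}) \doteq \sum_{s, h_{\colorL}, a_{\colorL}} o_{h_{\colorF}}(s, h_{\colorL})\,\delta_{\colorL}(a_{\colorL}\mid h_{\colorL})\,p(s'\mid s, a_{\colorL}, a_{\colorF})$, as required.

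The main obstacle is purely bookkeeping: ensuring that conditioning on $h_{\colorF}$ followed by the transition and then conditioning on the extended follower history $h_{\colorF} a_{\colorF} s'$ yields an update that depends only on $(o_{h_{\colorF}}, \delta_{\colorL}, a_{\colorF})$, and not on the prior weight $\Pr(h_{\colorF}\mid o)$ or on the rest of $\delta_{\colorF}$. This independence is what justifies defining $\tau_{s'}$ on conditional occupancy states in the first place, and it falls out of the cancellation identified above.
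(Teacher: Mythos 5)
Your derivation is correct. The paper itself states this lemma without proof (it is followed immediately by the MILP table), so there is no authorial argument to compare against; your route — instantiate the joint transition operator $\tau$ with a follower rule putting mass on $a_{\colorF}$ at $h_{\colorF}$, factor $o(s,h_{\colorL},h_{\colorF}) = \Pr(h_{\colorF}\mid o)\,o_{h_{\colorF}}(s,h_{\colorL})$, and cancel the common factor $\Pr(h_{\colorF}\mid o)\,\delta_{\colorF}(a_{\colorF}\mid h_{\colorF})$ in the conditional ratio — is exactly the computation the definitions imply, and it correctly identifies that only terms with $s''=s'$ survive in the normaliser because the successor follower history $h_{\colorF}a_{\colorF}s'$ pins down the state. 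Your observation that the update depends only on $(o_{h_{\colorF}},\delta_{\colorL},a_{\colorF})$ is precisely what licenses the paper's use of $\tau_{s'}$ on conditional occupancy states inside the MILP, so the proposal fills the omitted step faithfully.
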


\begin{table}[!ht]
\begin{align}
    &
    \delta_{\colorL}^{\Gamma,o}
    \in  
    \textstyle 
    \argmax_{\delta_{\colorL}\in \Delta_{\colorL}} 
    q_{\colorL,t}^{\Gamma}(o,\delta_{\colorL}) 
    \label{appendix:milp:constraint:1}\\
    &
    \text{subject to}\colon
    \textstyle 
    q_{\colorF,t}^{\Gamma}(o,\delta_{\colorL}) 
    \ge
    q^{\Gamma}_{\colorF,t}(o\prime,\delta_{\colorL}),
    ~
    \forall o\prime\in \tilde{c}
    \label{appendix:milp:constraint:2}\\
    &q_{\colorL,t}^{\Gamma}(o,\delta_{\colorL}) =
    \textstyle
    \rho_{\colorL}(o)
    +
    \gamma^t
    \sum_{h_{\colorF}} 
    \Pr(h_{\colorF}\mid o)
    \cdot 
    g_{\colorL,t}^{\Gamma}(o_{h_{\colorF}},\delta_{\colorL})
   \label{appendix:milp:constraint:3}\\
    &q_{\colorF,t}^{\Gamma}(o\prime,\delta_{\colorL}) =
    \textstyle
    \rho_{\colorF}(o\prime)
    +
    \gamma^t
    \sum_{h_{\colorF}} 
    \Pr(h_{\colorF}\mid o\prime)
    \cdot 
    g_{\colorF,t}^{\Gamma}(o\prime_{h_{\colorF}},\delta_{\colorL}),
    \forall
     o\prime\in \tilde{c}
    \label{appendix:milp:constraint:4} \\
    &
    g^\Gamma_{\colorF,t}(o\prime_{h_{\colorF}},\delta_\colorL) 
    \ge  
    \sum_{s'\in S} 
    \beta^\Gamma_{\colorF,t}(o\prime_{h_{\colorF}},\delta_\colorL,a_\colorF,s'),
    ~
    \forall a_\colorF \in A_\colorF, \forall o\prime_{h_{\colorF}}\in \mathbb{C}_\colorF(\tilde{c})
    \label{appendix:milp:constraint:5}\\
    &
    g^\Gamma_{\colorF,t}(o\prime_{h_{\colorF}},\delta_\colorL) 
    \le  
    \sum_{s'\in S} 
    \beta^\Gamma_{\colorF,t}(o\prime_{h_{\colorF}},\delta_\colorL,a_\colorF,s') 
    +
    \mathtt{M}
    \cdot  
    (1 - \kappa^{\Gamma,o}_\colorF(o\prime_{h_{\colorF}},a_{\colorF})),
    ~
    \forall a_{\colorF}\in A_{\colorF}, o\prime_{h_{\colorF}}\in \mathbb{C}_\colorF(\tilde{c})
    \label{appendix:milp:constraint:6}\\
    &
    \sum_{a_{\colorF}\in A_{\colorF}}
    \kappa^{\Gamma,o}_\colorF(o\prime_{h_{\colorF}},a_{\colorF})
    = 1,
    ~\forall o\prime_{h_{\colorF}}\in \mathbb{C}_\colorF(\tilde{c})
    \label{appendix:milp:constraint:7}\\
    &
    \beta_{\colorF,t}^{\Gamma}(o\prime_{h_{\colorF}},\delta_{\colorL},a_{\colorF},s') 
    \ge
    \nu_\colorF(o\prime_{h_\colorF}, \delta_\colorL, a_\colorF, s')
    +
    \gamma 
     \alpha_{\colorF}(\tau_{s'}(o\prime_{h_{\colorF}},\delta_{\colorL},a_{\colorF})),
     ~
     \forall 
     s'\in S, a_{\colorF}\in A_{\colorF}, o\prime_{h_{\colorF}}\in  \mathbb{C}_{\colorF}(\tilde{c}), \alpha\in \Gamma
     \label{appendix:milp:constraint:8}\\
    &
    \beta_{\colorF,t}^{\Gamma}(o\prime_{h_{\colorF}},\delta_{\colorL},a_{\colorF},s') 
    \le
    \nu_\colorF(o\prime_{h_\colorF}, \delta_\colorL, a_\colorF,s')
    +
    \gamma 
     \alpha_{\colorF}(\tau_{s'}(o\prime_{h_{\colorF}},\delta_{\colorL},a_{\colorF}))
     +
    \mathtt{M}
    \cdot  
    (1 - w^{\Gamma,o}_\colorF(o\prime_{h_{\colorF}},s',\alpha))
    \label{appendix:milp:constraint:9}\\
    &\qquad
     \forall 
     s'\in S, a_{\colorF}\in A_{\colorF}, o\prime_{h_{\colorF}}\in  \mathbb{C}_{\colorF}(\tilde{c}), \alpha\in \Gamma \nonumber
     \\
    &
    \sum_{\alpha\in \Gamma}
    w^{\Gamma,o}_\colorF(o\prime_{h_{\colorF}},s',\alpha)
    = 1,
    ~\forall s'\in S, o\prime_{h_{\colorF}}\in \mathbb{C}_\colorF(\tilde{c})
    \label{appendix:milp:constraint:11}\\
    &
    g_{\colorL,t}^{\Gamma}(o_{h_{\colorF}},\delta_{\colorL}) 
    \ge
     \sum_{s'}
     \beta_{\colorL,t}(o_{h_{\colorF}},\delta_{\colorL},a_{\colorF},s')
     -
     \mathtt{M}
     \cdot 
     (1 - \kappa^{\Gamma,o}_\colorF(o_{h_{\colorF}},a_{\colorF}))
     ,
     ~
     \forall
     a_{\colorF}\in A_{\colorF},
     o_{h_{\colorF}}\in  \mathbb{C}_{\colorF}(o)
     \label{appendix:milp:constraint:12}\\
    &
    g_{\colorL,t}^{\Gamma}(o_{h_{\colorF}},\delta_{\colorL}) 
    \le
     \sum_{s'}
     \beta_{\colorL,t}(o_{h_{\colorF}},\delta_{\colorL},a_{\colorF},s')
     +
     \mathtt{M}
     \cdot 
     (1 - \kappa^{\Gamma,o}_\colorF(o_{h_{\colorF}},a_{\colorF}))
     ,
     ~
     \forall
     a_{\colorF}\in A_{\colorF},
     o_{h_{\colorF}}\in  \mathbb{C}_{\colorF}(o)
     \label{appendix:milp:constraint:13}\\
     &
     \beta_{\colorL,t}^{\Gamma}(o_{h_{\colorF}},\delta_{\colorL},a_{\colorF},s') 
    \ge
    \nu_\colorL(o_{h_\colorF}, \delta_\colorL, a_\colorF,s')
    +
    \gamma
     \alpha_{\colorL}(\tau_{s'}(o_{h_{\colorF}},\delta_{\colorL},a_{\colorF})) 
     -
     \mathtt{M}
     \cdot 
     (1 - w^{\Gamma,o}_\colorF(o_{h_{\colorF}},s',\alpha)),
     \label{appendix:milp:constraint:14}\\
    &\qquad
     \forall s'\in S, a_{\colorF}\in A_{\colorF},  o_{h_{\colorF}}\in  \mathbb{C}_{\colorF}(o), \alpha\in \Gamma. \nonumber\\
     &
     \beta_{\colorL,t}^{\Gamma}(o_{h_{\colorF}},\delta_{\colorL},a_{\colorF},s') 
    \le
    \nu_\colorL(o_{h_\colorF}, \delta_\colorL, a_\colorF,s')
    +
    \gamma
     \alpha_{\colorL}(\tau_{s'}(o_{h_{\colorF}},\delta_{\colorL},a_{\colorF})) 
     +
     \mathtt{M}
     \cdot 
     (1 - w^{\Gamma,o}_\colorF(o_{h_{\colorF}},s',\alpha)),
     \label{appendix:milp:constraint:15}\\
    &\qquad
     \forall s'\in S, a_{\colorF}\in A_{\colorF},  o_{h_{\colorF}}\in  \mathbb{C}_{\colorF}(o), \alpha\in \Gamma. \nonumber      
\end{align}
\caption{This mixed-integer linear program  computes the greedy leader decision rule 
\( \delta_{\colorL}^{\tilde{c}} \) 
maximising the stage-\(t\) leader action-value as described in Theorem \ref{appendix:proof:thm:pbvi-backup}. 
\(\mathtt{M}\)
is a sufficiently large constant for big-M constraints, 
\(w_{\colorF}^{\Gamma}(o\prime_{h_{\colorF}},s', \alpha) \) and \(\kappa^{\Gamma,o}_\colorF(o\prime_{h_\colorF}, a_\colorF)\)
are binary selection variables, and \(\delta_{\colorL}(\cdot|\cdot)\)
are free variables denoting altogether the leader decision rule, and 
\(q_{\colorI,t}^{\Gamma}(o,\delta_{\colorL})\), \(g_{\colorI,t}^\Gamma(o_{h_\colorF, \delta_\colorL})\), \(\beta^\Gamma_{\colorI,t}(o_{h_\colorF}, \delta_\colorL, a_\colorF, s\prime)\)
are free variables. 
Note that \(\mathbb{C}_{\colorF}(o) \doteq \{ o_{h_{\colorF}} \mid h_{\colorF} \in H_{\colorF}, \sum_{s,h_{\colorL}} o(s,h_{\colorL},h_{\colorF}) > 0 \} \) and \(\mathbb{C}_{\colorF}(\tilde{c}) \doteq \cup_{o'\in \tilde{c}}~ \mathbb{C}_{\colorF}(o') \).}
\label{appendix:eq:milp-backup}
\end{table}

\subsection{Interpretation of the MILP for Greedy Backup}
\label{appendix:interpretation:milp}

This section explains the mixed-integer linear program (MILP) in \Cref{appendix:eq:milp-backup}, which performs a point-based greedy backup for the leader in a dynamic programming framework for solving leader--follower general-sum stochastic games (LF-GSSGs).

\paragraph{Problem setting.} 
Given:
\begin{itemize}
    \item a filtered credible set \( \tilde{c} \subseteq \Delta(S \times H_{\colorL} \times H_{\colorF}) \),
    \item a specific occupancy state \( o \in \tilde{c} \),
    \item and a finite set \( \Gamma \) of \(\alpha\)-vector pairs \( (\alpha_{\colorL}, \alpha_{\colorF}) \),
\end{itemize}
the goal is to compute a leader decision rule \( \delta^{\Gamma,o}_{\colorL} \) that maximises the stage-\(t\) expected return of the leader at \(o\), under the constraint that the follower best-responds in each conditional occupancy state \( o_{h_{\colorF}} \in \mathbb{C}_{\colorF}(\tilde{c}) \), and that the value approximations respect the \(\alpha\)-vector representation.

\paragraph{High-level structure.}
The MILP maximises the leader’s action-value function while enforcing:
\begin{enumerate}
    \item follower best response at state \(o\) (Stackelberg consistency),
    \item consistent value approximation via binary selection of \(\alpha\)-vectors,
    \item alignment of leader and follower values under these selections.
\end{enumerate}
This is encoded using continuous variables (for probabilities and expected values) and binary variables (for action and vector selection).

\paragraph{Decision variables.}
The MILP defines the following key variables:
\begin{itemize}
    \item \textbf{Leader decision rule:} 
    \( \delta_{\colorL}(a_{\colorL} \mid h_{\colorL}) \), defined for all \( a_{\colorL} \in A_{\colorL} \) and \( h_{\colorL} \in \bigcup_{o'\in\tilde{c}} H_{\colorL}(o') \), where
    \[
    H_{\colorL}(o') \doteq \left\{ h_{\colorL} \in H_{\colorL} ~\middle|~ \sum_{s,h_{\colorF}} o'(s,h_{\colorL},h_{\colorF}) > 0 \right\}.
    \]
    
    \item \textbf{Expected values:}
    \begin{itemize}
        \item \( g^{\Gamma}_{\colorL,t}(o_{h_{\colorF}}, \delta_{\colorL}) \): expected leader return at \( o_{h_{\colorF}} \in \mathbb{C}_{\colorF}(o) \),
        \item \( g^{\Gamma}_{\colorF,t}(o'_{h_{\colorF}}, \delta_{\colorL}) \): expected follower return at \( o'_{h_{\colorF}} \in \mathbb{C}_{\colorF}(\tilde{c}) \),
        \item \( \beta^{\Gamma}_{\colorI,t}(o_{h_{\colorF}}, \delta_{\colorL}, a_{\colorF}, s') \): expected return of player \( \colorI \in \{ \colorL, \colorF \} \) when executing \(a_{\colorF}\) and transitioning to state \(s' \in S\).
    \end{itemize}
    
    \item \textbf{Binary selectors:}
    \begin{itemize}
        \item \( \kappa^{\Gamma,o}_\colorF(o_{h_{\colorF}}, a_{\colorF}) \in \{0,1\} \): indicates selected follower action in each \( o_{h_{\colorF}} \in \mathbb{C}_{\colorF}(\tilde{c}) \),
        \item \( w^{\Gamma,o}_\colorF(o_{h_{\colorF}}, s', \alpha) \in \{0,1\} \): selects the \(\alpha\)-vector used to approximate value continuation at state \(s'\).
    \end{itemize}
\end{itemize}

\paragraph{Constraint semantics.}
We now explain how the constraints enforce the structure of the backup:

\begin{itemize}
    \item \textbf{Leader value maximisation (Constraint \ref{appendix:milp:constraint:1}):}  
    \[
    \max_{\delta_{\colorL}} q_{\colorL,t}^{\Gamma}(o, \delta_{\colorL}).
    \]
    This defines the objective of the MILP.

    \item \textbf{Follower best response (Constraint \ref{appendix:milp:constraint:2}):}  
    \[
    q_{\colorF,t}^{\Gamma}(o, \delta_{\colorL}) \ge q_{\colorF,t}^{\Gamma}(o', \delta_{\colorL}) \quad \forall o' \in \tilde{c}.
    \]
    The follower must prefer the policy induced at \(o\), enforcing Stackelberg equilibrium consistency.

    \item \textbf{Follower value representation (Constraints \ref{appendix:milp:constraint:5}--\ref{appendix:milp:constraint:7}):}
    These jointly enforce:
    \[
    g^{\Gamma}_{\colorF,t}(o_{h_{\colorF}}, \delta_{\colorL}) = \max_{a_{\colorF} \in A_{\colorF}} \sum_{s'} \beta^{\Gamma}_{\colorF,t}(o_{h_{\colorF}}, \delta_{\colorL}, a_{\colorF}, s').
    \]
    This is implemented via:
    \begin{itemize}
        \item Constraint \ref{appendix:milp:constraint:6} (lower bound, always active),
        \item Constraint \ref{appendix:milp:constraint:7} (upper bound, only active for selected action),
        \item Constraint \ref{appendix:milp:constraint:8} (exactly one action selected).
    \end{itemize}

    \item \textbf{Follower value via \(\alpha_{\colorF}\)-vectors (Constraints \ref{appendix:milp:constraint:9}--\ref{appendix:milp:constraint:11}):}
    These define:
    \[
    \beta^{\Gamma}_{\colorF,t}(o_{h_{\colorF}}, \delta_{\colorL}, a_{\colorF}, s') = \max_{\alpha \in \Gamma} \left\{ \nu_{\colorF} + \gamma \alpha_{\colorF}(\tau) \right\},
    \]
    where \(\tau = \tau_{s'}(o_{h_{\colorF}}, \delta_{\colorL}, a_{\colorF})\). Constraint \ref{appendix:milp:constraint:11} enforces that exactly one \(\alpha\)-vector is used per state.

    \item \textbf{Leader value under consistent follower response (Constraints \ref{appendix:milp:constraint:12}--\ref{appendix:milp:constraint:13}):}
    These encode:
    \[
    g^{\Gamma}_{\colorL,t}(o_{h_{\colorF}}, \delta_{\colorL}) = \max_{a_{\colorF}} \left\{
    \sum_{s'} \beta^{\Gamma}_{\colorL,t}(o_{h_{\colorF}}, \delta_{\colorL}, a_{\colorF}, s')
    ~\middle|~
    a_{\colorF} \text{ selected for } g^{\Gamma}_{\colorF,t}
    \right\}.
    \]

    \item \textbf{Leader value via matched \(\alpha_{\colorL}\)-vectors (Constraints \ref{appendix:milp:constraint:14}--\ref{appendix:milp:constraint:15}):}
    These enforce:
    \[
    \beta^{\Gamma}_{\colorL,t}(o_{h_{\colorF}}, \delta_{\colorL}, a_{\colorF}, s') =
    \max_{\alpha \in \Gamma} \left\{
    \nu_{\colorL} + \gamma \alpha_{\colorL}(\tau)
    ~\middle|~
    \text{matching } \alpha_{\colorF}(\tau) \text{ in follower value}
    \right\}.
    \]
\end{itemize}

\paragraph{Conclusion.}
This MILP operationalises the greedy dynamic programming step for the leader in an LF-GSSG. It combines leader optimisation, follower rationality, and joint value approximation via \(\alpha\)-vector selection. The binary variables encode deterministic follower responses and consistent value decompositions, enabling scalable and interpretable value backups over filtered credible sets.

\begin{theorem}
\label{appendix:proof:thm:pbvi-backup}
Let \( \Lambda_{t+1} \) be a finite collection of sets of functions \(\Gamma\) approximating the optimal leader value function \( v_{\colorL,t+1}^*\colon \tilde{C}  \to \mathbb{R}\) at stage \( t{+}1 \).  
Then, for any credible set \( \tilde{c} \) at stage \( t \), the greedy leader decision rule \( \delta_{\colorL}^{\tilde{c}} \) maximising the stage-\(t\) leader action-value is the solution of a mixed-integer linear program (see Table \ref{appendix:eq:milp-backup}).
\end{theorem}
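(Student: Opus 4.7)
The plan is to start from the Bellman optimality equation (Theorem \ref{thm:boe}) specialised to stage $t$, namely $q_{\colorL,t}^*(\tilde{c},\delta_{\colorL}) = v_{\colorL,t+1}^*(\tilde{T}(\tilde{c},\delta_{\colorL}))$, and then substitute in the uniformly continuous representation of $v_{\colorL,t+1}^*$ from Theorem \ref{thm:convexity}. Replacing $\Lambda_{t+1}$ by the finite approximating collection yields, for each fixed $\Gamma \in \Lambda_{t+1}$ and each candidate ``witness'' occupancy state $o \in \tilde{c}$, a bilevel program in which the leader chooses $\delta_{\colorL}$ so as to maximise her $\alpha_{\colorL}$-induced return at $o$ subject to the Stackelberg constraint that $o$ attains the highest follower value across the filtered credible set. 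This is precisely what constraints (\ref{appendix:milp:constraint:1})--(\ref{appendix:milp:constraint:2}) encode, so the outer maximum over $\Lambda_{t+1} \times \tilde{c}$ will be handled by enumeration and taking the best MILP optimum.

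Next I would decompose the action-value along follower histories. Using the convex decomposition $o = \sum_{h_{\colorF}} \Pr(h_{\colorF}\mid o)\,(o_{h_{\colorF}} \otimes \pmb{e}_{h_{\colorF}})$ together with Lemma \ref{lem:conditional:occupancy:state}, both players' values split additively over conditional occupancy states $o_{h_{\colorF}} \in \mathbb{C}_{\colorF}(\tilde{c})$. Writing $g_{\colorI,t}^{\Gamma}(o_{h_{\colorF}},\delta_{\colorL})$ for the per-history continuation value and $\beta_{\colorI,t}^{\Gamma}(o_{h_{\colorF}},\delta_{\colorL},a_{\colorF},s')$ for its inner per-action, per-successor component gives equations (\ref{appendix:milp:constraint:3})--(\ref{appendix:milp:constraint:4}), linear in $\delta_{\colorL}$ once the relevant maxima are resolved.

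After that I would linearise the two nested maxima that remain. At each $o_{h_{\colorF}}$ the follower selects the action $a_{\colorF}$ maximising her expected return; I encode this with one-hot binaries $\kappa^{\Gamma,o}_{\colorF}(o_{h_{\colorF}},a_{\colorF})$, a unique-selection constraint (\ref{appendix:milp:constraint:7}), and big-M bounds (\ref{appendix:milp:constraint:5})--(\ref{appendix:milp:constraint:6}) that force $g_{\colorF,t}^{\Gamma}$ to equal the selected row sum and dominate every other row. Similarly, for every successor state $s'$ the follower's continuation value picks the best $\alpha$-vector from $\Gamma$; I encode this through one-hot binaries $w^{\Gamma,o}_{\colorF}(o_{h_{\colorF}},s',\alpha)$ with constraints (\ref{appendix:milp:constraint:11}) and the envelope bounds (\ref{appendix:milp:constraint:8})--(\ref{appendix:milp:constraint:9}). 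Crucially, the leader's analogues (\ref{appendix:milp:constraint:12})--(\ref{appendix:milp:constraint:15}) reuse the \emph{same} binaries $\kappa^{\Gamma,o}_{\colorF}$ and $w^{\Gamma,o}_{\colorF}$; this coupling realises strong-Stackelberg tie-breaking, since the leader evaluates her return along exactly the action and $\alpha$-vector chosen by the follower.

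The main obstacle will be verifying that the big-M encoding is both sound and complete under this coupling. Concretely, I will argue by a standard two-sided argument: (i) any feasible MILP solution has $\kappa$ and $w$ one-hot, so the upper/lower big-M pairs collapse to equalities forcing $g_{\colorF,t}^{\Gamma}$ and each $\beta_{\colorF,t}^{\Gamma}$ to match a specific argmax, and the Stackelberg constraint (\ref{appendix:milp:constraint:2}) guarantees the witness $o$ is indeed follower-optimal; conversely, (ii) any $\delta_{\colorL}$ together with the follower's induced argmaxes yields a consistent assignment of $\kappa$, $w$, $g$, and $\beta$ that is MILP-feasible with the same objective. Enumerating $(\Gamma, o) \in \Lambda_{t+1} \times \tilde{c}$, solving the MILP of Table \ref{appendix:eq:milp-backup}, and returning the best $\delta_{\colorL}$ then recovers the greedy leader decision rule, since both index sets are finite.
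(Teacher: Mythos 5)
Your proposal follows essentially the same route as the paper's proof: start from the greedy/Bellman characterisation of the stage-$t$ action-value, substitute the $\alpha$-vector representation from the uniform-continuity result, decompose over conditional occupancy states $o_{h_{\colorF}}$ into the $g$ and $\beta$ quantities, and then linearise the follower's nested argmaxes with the one-hot binaries $\kappa$ and $w$ plus big-M bounds, reusing the same binaries in the leader's constraints to enforce Stackelberg consistency, before enumerating over $(\Gamma,o)\in\Lambda_{t+1}\times\tilde{c}$. Your explicit two-sided soundness/completeness check of the big-M encoding is a reasonable elaboration of a step the paper leaves implicit, but it does not constitute a different argument.
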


\begin{proof}
The proof starts with the definition of the greedy leader decision-rule \(\delta_{\colorL}^{\tilde{c}}\) selection at credible set \(\tilde{c}\), assuming we have access to collection \(\Lambda_{t+1}\) of sets \(\Gamma\) approximating \(v_{\colorL}^*\) as leader value function \(v_{\colorL}\). That is,
\begin{align*}
    \delta_{\colorL}^{\tilde{c}}
    &\in  
    \argmax_{\delta_{\colorL}\in \Delta_{\colorL}} 
    v_{\colorL,t+1}(\tilde{T}(\tilde{c},\delta_{\colorL}))\\ 
    &\in \argmax_{\delta_{\colorL}\in \Delta_{\colorL}} 
    q_{\colorL,t}(\tilde{c},\delta_{\colorL}).
\end{align*}
By the application of uniform continuity property, see Lemma \ref{appendix:lem:convexity}, the following holds:
\begin{align}
    \delta_{\colorL}^{\tilde{c}}
    &
    \in 
    \argmax
    \{ 
    q_{\colorL,t}^{\Gamma}(o,\delta_{\colorL}) \mid  
    \{
    q_{\colorF,t}^{\Gamma}(o,\delta_{\colorL}) 
    \geq 
    q^{\Gamma}_{\colorF,t}(o\prime,\delta_{\colorL})
    \mid 
    \forall o\prime\in \tilde{c}
    \},
    \forall o\in \tilde{c},
    \delta_{\colorL}\in \Delta_{\colorL},
    \Gamma\in \Lambda_{t+1}
    \} \nonumber
    \\
    &q_{\colorI,t}^{\Gamma}(o,\delta_{\colorL}) 
    \doteq
    \textstyle 
    \rho_{\colorI}(o)
    +
    \gamma^t
    \sum_{h_{\colorF}} 
    \Pr(h_{\colorF}\mid o) 
    \cdot
    g_{\colorI,t}^{\Gamma}(o_{h_{\colorF}},\delta_{\colorL}),
    ~\forall \colorI \in I
    \nonumber\\
    &
    g_{\colorI,t}^{\Gamma}(o_{h_{\colorF}},\delta_{\colorL}) \doteq 
    \max_{a_{\colorF}\in \bar{A}_{\colorF}(o_{h_{\colorF}},\delta_{\colorL})}
     \sum_{s'\in S}
    \beta_{\colorI,t}^{\Gamma}(o_{h_{\colorF}},\delta_{\colorL},a_{\colorF},s'),
    ~\forall \colorI \in I.
    \nonumber\\
    &
    \beta_{\colorI,t}^{\Gamma}(o_{h_{\colorF}},\delta_{\colorL}, a_{\colorF},s') \doteq 
    \mathbb{E}[r_\colorI(s, a_{\colorL},a_{\colorF},s')\mid o_{h_{\colorF}},\delta_{\colorL},a_{\colorF},s']
    +
    \gamma
    \max_{\alpha\in \bar{\Gamma}(\tau_{s'}(o_{h_{\colorF}},\delta_{\colorL},a_{\colorF}))}
    \alpha_{\colorI}(\tau_{s'}(o_{h_{\colorF}},\delta_{\colorL},a_{\colorF})),
    ~\forall \colorI \in I.
    \nonumber     %
\end{align}
where \(\bar{\Gamma}(o_{h_{\colorF}}) \doteq \argmax_{\alpha'\in \Gamma} \alpha'_{\colorF}(o_{h_{\colorF}}) \) and \(\bar{A}_{\colorF}(o_{h_{\colorF}},\delta_{\colorL}) \doteq \argmax_{a'_{\colorF}\in A_{\colorF}} \sum_{s'}\beta_{\colorF,t}^{\Gamma}(o_{h_{\colorF}},\delta_{\colorL}, a'_{\colorF},s')\).
Define the expected immediate payoff for agent \(\colorI\), conditional occupancy state \(o_{h_\colorF}\), leader--follower decisions \((\delta_\colorL, a_\colorF)\) as follows:
\begin{align*}
\nu_\colorI(o_{h_\colorF}, \delta_\colorL, a_\colorF,s')
&\doteq
\mathbb{E}[r_\colorI(s,a_{\colorL},a_{\colorF},s') \mid o_{h_{\colorF}},\delta_{\colorL},a_{\colorF},s']
\nonumber\\
&=
\sum_{s\in S} 
\sum_{h_\colorL \in H_\colorL(o_{h_\colorF})}
o_{h_\colorF}(s,h_\colorL)
\sum_{a_\colorL\in A_\colorL}
\delta_\colorL(a_\colorL \mid h_\colorL) 
\cdot
p(s'\mid s, a_\colorL, a_\colorF) \cdot
r_i(s,a_\colorL,a_\colorF,s').
\nonumber
\end{align*}
Define the expected payoff for agent \(\colorI\), "value function" \(\alpha\), current conditional occupancy \(o_{h_\colorF}\), leader-follower decisions (\(\delta_\colorL, a_\colorF\)), next state \(s'\) as follows:
\begin{align}
    &
    \alpha_i(\tau_{s'}(o_{h_\colorF}, \delta_\colorL, a_\colorF)) \doteq
    \sum_{h_\colorL \in H_\colorL(o_{h_\colorF})}
    \sum_{a_\colorL \in a_\colorL}
    \tau_{s'}(o_{h_\colorF}, \delta_\colorL,a_\colorF)(s',(h_\colorL, a_\colorL, s')) \cdot \alpha_i(s',(h_\colorL, a_\colorL, s')),
    \nonumber\\
    &
    \forall \alpha \in \Gamma, \forall \colorI \in I, \forall s' \in S, \forall \tau_{s'}(o_{h_\colorF}, \delta_\colorL, a_\colorF) \in O.
    \nonumber\\
\end{align}
Where the probability of being in next state \(s'\), leader next history being (\(h_\colorL, a_\colorL, s'\)) coming from any conditional occupancy state \(o_{h_\colorF}\) is the following
\begin{align}
    &
    \tau_{s'}(o_{h_\colorF}, \delta_\colorL, a_\colorF)(s',(h_\colorL, a_\colorL, s'))
    =
    \sum_{s\in S}
    o_{h_\colorF}(s(h_\colorL),h_\colorL)
    \cdot
    \delta_\colorL(a_\colorL \mid h_\colorL)
    \cdot 
    p(s' \mid s(h_\colorL),a_\colorL, a_\colorF)
    \nonumber\\
    &, \forall h_\colorL \in H_\colorL(o_{h_\colorF}), \forall a_\colorL \in A_\colorL, \forall s' \in S.
    \nonumber\\
\end{align}
Re-arranging terms results in the following :
\begin{align}
    &
    \delta_{\colorL}^{\tilde{c}}
    \in  
    \textstyle 
    \arg\max_{\delta_{\colorL}\in \Delta_{\colorL}} 
    \max_{o\in \tilde{c}}
    \max_{\Gamma\in \Lambda_{t+1}}
    q_{\colorL,t}^{\Gamma}(o,\delta_{\colorL}) \nonumber
    \\
    &
    \text{subject to}\colon
    \textstyle 
    q_{\colorF,t}^{\Gamma}(o,\delta_{\colorL}) 
    =
    \max_{o\prime\in \tilde{c}}
    q^{\Gamma}_{\colorF,t}(o\prime,\delta_{\colorL}),
    \nonumber
    \\
    &q_{\colorL,t}^{\Gamma}(o,\delta_{\colorL}) =
    \textstyle
    \rho_{\colorL}(o)
    +
    \gamma^t
    \sum_{h_{\colorF}} 
    \Pr(h_{\colorF}\mid o)
    \cdot 
    g_{\colorL,t}^{\Gamma}(o_{h_{\colorF}},\delta_{\colorL}) 
   \nonumber
    \\
    &q_{\colorF,t}^{\Gamma}(o\prime,\delta_{\colorL}) =
    \textstyle
    \rho_{\colorF}(o\prime)
    +
    \gamma^t
    \sum_{h_{\colorF}}
    \Pr(h_{\colorF}\mid o\prime)
    \cdot 
    g_{\colorF,t}^{\Gamma}(o\prime_{h_{\colorF}},\delta_{\colorL}) 
    \nonumber
    ~
    \forall o\prime \in \tilde{c}
    \nonumber
    \\
    &
    g_{\colorF,t}^{\Gamma}(o\prime_{h_{\colorF}},\delta_{\colorL}) =
    \max_{a_{\colorF}\in A_{\colorF}}
    \{
    \sum_{s'}
     \beta_{\colorF,t}^\Gamma(o\prime_{h_{\colorF}},\delta_{\colorL},a_{\colorF},s')
     \},
     ~
     \forall 
     o\prime_{h_{\colorF}}\in  \mathbb{C}_{\colorF}(\tilde{c})
     \nonumber\\
    &
    \beta_{\colorF,t}^{\Gamma}(o\prime_{h_{\colorF}},\delta_{\colorL},a_{\colorF},s') =
    \max_{\alpha\in \Gamma}
    \{
    \nu_\colorF(o\prime_{h_\colorF}, \delta_\colorL, a_\colorF,s')
    +
    \gamma  
     \alpha_{\colorF}(\tau_{s'}(o\prime_{h_{\colorF}},\delta_{\colorL},a_{\colorF}))
     \}
     ,
     ~
     \forall 
     s'\in S, o\prime_{h_{\colorF}}\in  \mathbb{C}_{\colorF}(\tilde{c})
     \nonumber\\
    &
    g_{\colorL,t}^{\Gamma}(o_{h_{\colorF}},\delta_{\colorL}) =
    \max_{a_{\colorF}\in A_{\colorF}}  
     \left\{ 
     \sum_{s'}
     \beta_{\colorL,t}^\Gamma(o_{h_{\colorF}},\delta_{\colorL},a_{\colorF},s')
     \mid
     \sum_{s'}
     \beta_{\colorF,t}^\Gamma(o_{h_{\colorF}},\delta_{\colorL},a_{\colorF},s')
     =
     g_{\colorF,t}^{\Gamma}(o_{h_{\colorF}},\delta_{\colorL})
     \right\},
     ~
     \forall 
     o_{h_{\colorF}}\in  \mathbb{C}_{\colorF}(o)
     \nonumber\\
    &\beta_{\colorL,t}^{\Gamma}(o_{h_{\colorF}},\delta_{\colorL},a_{\colorF},s') 
    =
    \max_{\alpha\in \Gamma}
    \left\{ 
    \nu_\colorL(o_{h_\colorF}, \delta_\colorL, a_\colorF,s')
    +
    \gamma
     \alpha_{\colorL}(\tau_{s'}(o_{h_{\colorF}},\delta_{\colorL},a_{\colorF})) 
     \mid 
     \beta_{\colorF,t}^{\Gamma}(o_{h_{\colorF}},\delta_{\colorL},a_{\colorF},s')
    \right. 
    \nonumber\\
    &
    \qquad \qquad
    \left.
         =  
         \nu_\colorF(o_{h_\colorF}, \delta_\colorL, a_\colorF,s')
    +
    \gamma
     \alpha_{\colorF}(\tau_{s'}(o_{h_{\colorF}},\delta_{\colorL},a_{\colorF})) 
    \right\},
         ~
     \forall s'\in S, a_{\colorF}\in A_{\colorF},  o_{h_{\colorF}}\in  \mathbb{C}_{\colorF}(o). \nonumber
\end{align}

Fix occupancy state \(o\in \tilde{c}\) and value-vector set \(\Gamma\in \Lambda_{t+1}\), the greedy leader decision rule \(\delta_{\colorL}^{\Gamma,o}\). Let \[\{\kappa^{\Gamma,o}_\colorF(o\prime_{h_{\colorF}},a_{\colorF}) \mid o\prime_{h_{\colorF}} \in \mathbb{C}_\colorF(\tilde{c}), a_{\colorF}\in A_{\colorF}\}\] be a set of binary variables \(\kappa^{\Gamma,o}_\colorF(o\prime_{h_{\colorF}},a_{\colorF})\) representing the follower decision rule for occupancy state \(o\in \tilde{c}\) and value-vector set \(\Gamma\in \Lambda_{t+1}\). Similarly, let \[\{w^{\Gamma,o}_\colorF(o_{h_{\colorF}},s',\alpha) \mid o_{h_{\colorF}} \in \mathbb{C}_\colorF(\tilde{c}), s'\in S, \alpha\in \Gamma\}\] be the set of binary variables \(w^{\Gamma,o}_\colorF(o_{h_{\colorF}},s',\alpha)\) representing the follower policy upon seeing \(s'\in S\) after taking an action in conditional occupancy state \( o_{h_{\colorF}} \in \mathbb{C}_\colorF(\tilde{c})\). These binary variables help reformulating maximization as inequality constraints:
\begin{align}
    &
    \delta_{\colorL}^{\Gamma,o}
    \in  
    \textstyle 
    \argmax_{\delta_{\colorL}\in \Delta_{\colorL}} 
    q_{\colorL,t}^{\Gamma}(o,\delta_{\colorL}) \nonumber
    \\
    &
    \text{subject to}\colon
    \textstyle 
    q_{\colorF,t}^{\Gamma}(o,\delta_{\colorL}) 
    \ge
    q^{\Gamma}_{\colorF,t}(o\prime,\delta_{\colorL}),
    ~
    \forall o\prime\in \tilde{c}
    \nonumber
    \\
    &q_{\colorL,t}^{\Gamma}(o,\delta_{\colorL}) =
    \textstyle
    \rho_{\colorL}(o)
    +
    \gamma^t
    \sum_{h_{\colorF}} 
    \Pr(h_{\colorF}\mid o)
    \cdot 
    g_{\colorL,t}^{\Gamma}(o_{h_{\colorF}},\delta_{\colorL})
   \nonumber\\
    &q_{\colorF,t}^{\Gamma}(o\prime,\delta_{\colorL}) =
    \textstyle
    \rho_{\colorF}(o\prime)
    +
    \gamma^t
    \sum_{h_{\colorF}} 
    \Pr(h_{\colorF}\mid o\prime)
    \cdot 
    g_{\colorF,t}^{\Gamma}(o\prime_{h_{\colorF}},\delta_{\colorL}),
    \forall
     o\prime\in \tilde{c}
     \nonumber\\
    &
    g^\Gamma_{\colorF,t}(o\prime_{h_{\colorF}},\delta_\colorL) 
    \ge  
    \sum_{s'\in S} 
    \beta^\Gamma_{\colorF,t}(o\prime_{h_{\colorF}},\delta_\colorL,a_\colorF,s'),
    ~
    \forall a_\colorF \in A_\colorF, \forall o\prime_{h_{\colorF}}\in \mathbb{C}_\colorF(\tilde{c})
    \nonumber\\
    &
    g^\Gamma_{\colorF,t}(o\prime_{h_{\colorF}},\delta_\colorL) 
    \le  
    \sum_{s'\in S} 
    \beta^\Gamma_{\colorF,t}(o\prime_{h_{\colorF}},\delta_\colorL,a_\colorF,s') 
    +
    \mathtt{M}
    \cdot  
    (1 - \kappa^{\Gamma,o}_\colorF(o\prime_{h_{\colorF}},a_{\colorF})),
    ~
    \forall a_{\colorF}\in A_{\colorF}, o\prime_{h_{\colorF}}\in \mathbb{C}_\colorF(\tilde{c})
    \nonumber\\
    &
    \sum_{a_{\colorF}\in A_{\colorF}}
    \kappa^{\Gamma,o}_\colorF(o\prime_{h_{\colorF}},a_{\colorF}))
    = 1,
    ~\forall o\prime_{h_{\colorF}}\in \mathbb{C}_\colorF(\tilde{c})
    \nonumber\\
    &
    \beta_{\colorF,t}^{\Gamma}(o\prime_{h_{\colorF}},\delta_{\colorL},a_{\colorF},s') 
    \ge
    \nu_\colorF(o\prime_{h_\colorF}, \delta_\colorL, a_\colorF, s')
    +
    \gamma 
     \alpha_{\colorF}(\tau_{s'}(o\prime_{h_{\colorF}},\delta_{\colorL},a_{\colorF})),
     ~
     \forall 
     s'\in S, a_{\colorF}\in A_{\colorF}, o\prime_{h_{\colorF}}\in  \mathbb{C}_{\colorF}(\tilde{c}), \alpha\in \Gamma
     \nonumber\\
    &
    \beta_{\colorF,t}^{\Gamma}(o\prime_{h_{\colorF}},\delta_{\colorL},a_{\colorF},s') 
    \le
    \nu_\colorF(o\prime_{h_\colorF}, \delta_\colorL, a_\colorF,s')
    +
    \gamma 
     \alpha_{\colorF}(\tau_{s'}(o\prime_{h_{\colorF}},\delta_{\colorL},a_{\colorF}))
     +
    \mathtt{M}
    \cdot  
    (1 - w^{\Gamma,o}_\colorF(o\prime_{h_{\colorF}},s',\alpha))
    \nonumber\\
    &\qquad
     \forall 
     s'\in S, a_{\colorF}\in A_{\colorF}, o\prime_{h_{\colorF}}\in  \mathbb{C}_{\colorF}(\tilde{c}), \alpha\in \Gamma
     \nonumber\\
    &
    \sum_{\alpha\in \Gamma}
    w^{\Gamma,o}_\colorF(o\prime_{h_{\colorF}},s',\alpha)
    = 1,
    ~\forall s'\in S, o\prime_{h_{\colorF}}\in \mathbb{C}_\colorF(\tilde{c})
    \nonumber\\
    &
    g_{\colorL,t}^{\Gamma}(o_{h_{\colorF}},\delta_{\colorL}) 
    \ge
     \sum_{s'}
     \beta_{\colorL,t}(o_{h_{\colorF}},\delta_{\colorL},a_{\colorF},s')
     -
     \mathtt{M}
     \cdot 
     (1 - \kappa^{\Gamma,o}_\colorF(o_{h_{\colorF}},a_{\colorF}))
     ,
     ~
     \forall
     a_{\colorF}\in A_{\colorF},
     o_{h_{\colorF}}\in  \mathbb{C}_{\colorF}(o)
     \nonumber\\
    &
    g_{\colorL,t}^{\Gamma}(o_{h_{\colorF}},\delta_{\colorL}) 
    \le
     \sum_{s'}
     \beta_{\colorL,t}(o_{h_{\colorF}},\delta_{\colorL},a_{\colorF},s')
     +
     \mathtt{M}
     \cdot 
     (1 - \kappa^{\Gamma,o}_\colorF(o_{h_{\colorF}},a_{\colorF}))
     ,
     ~
     \forall
     a_{\colorF}\in A_{\colorF},
     o_{h_{\colorF}}\in  \mathbb{C}_{\colorF}(o)
     \nonumber\\
     &
     \beta_{\colorL,t}^{\Gamma}(o_{h_{\colorF}},\delta_{\colorL},a_{\colorF},s') 
    \ge
    \nu_\colorL(o_{h_\colorF}, \delta_\colorL, a_\colorF,s')
    +
    \gamma
     \alpha_{\colorL}(\tau_{s'}(o_{h_{\colorF}},\delta_{\colorL},a_{\colorF})) 
     -
     \mathtt{M}
     \cdot 
     (1 - w^{\Gamma,o}_\colorF(o_{h_{\colorF}},s',\alpha)),
     \nonumber\\
    &\qquad
     \forall s'\in S, a_{\colorF}\in A_{\colorF},  o_{h_{\colorF}}\in  \mathbb{C}_{\colorF}(o), \alpha\in \Gamma. \nonumber\\
     &
     \beta_{\colorL,t}^{\Gamma}(o_{h_{\colorF}},\delta_{\colorL},a_{\colorF},s') 
    \le
    \nu_\colorL(o_{h_\colorF}, \delta_\colorL, a_\colorF,s')
    +
    \gamma
     \alpha_{\colorL}(\tau_{s'}(o_{h_{\colorF}},\delta_{\colorL},a_{\colorF})) 
     +
     \mathtt{M}
     \cdot 
     (1 - w^{\Gamma,o}_\colorF(o_{h_{\colorF}},s',\alpha)),
     \nonumber\\
    &\qquad
     \forall s'\in S, a_{\colorF}\in A_{\colorF},  o_{h_{\colorF}}\in  \mathbb{C}_{\colorF}(o), \alpha\in \Gamma.      \nonumber
\end{align}
Which ends the proof.
\end{proof}

For each filtered credible set \(\tilde{c} \in \tilde{C}'\) from a sample set of filtered credible sets \(\tilde{C}'\subset \tilde{C}\), the greedy leader decision rule is extracted from the solution of the mixed-integer linear program in Theorem \ref{appendix:proof:thm:pbvi-backup} as follows \(\delta_{\colorL}^{\tilde{c}} \in \arg\max_{\delta_{\colorL}\in \Delta_{\colorL}}   \max_{o\in \tilde{c}, \Gamma\in \Lambda_{t+1}} 
q_{\colorL,t}^{\Gamma}(o,\delta_{\colorL})\).
Once  \( \delta_{\colorL}^{\tilde{c}} \) has been computed for each filtered credible set \(\tilde{c}\), we now can updated the collection \(\Lambda_t\) with the set \(\Gamma^{\tilde{c}}\) computing induced by filtered credible set \(\tilde{c}\) and leader decision rule \(\delta_{\colorL}^{\tilde{c}}\). 

\begin{corollary}
Let \(v_{\colorL,t}\) denote the current leader value function at stage \(t\), represented by the collection \(\Lambda_t\). Let \(\tilde{C}' \subset \tilde{C}\) a sample set of filtered credible sets. Let \(\tilde{c}\in \tilde{C}'\) be a filtered credible set, and let \(( \delta_{\colorL}^{\tilde{c}}, \Gamma^{\tilde{c}}, w_{\colorF}^{\Gamma^{\tilde{c}}}, \kappa_{\colorF}^{\Gamma^{\tilde{c}}})\) be the solution to the mixed-integer linear programme in Theorem~\ref{appendix:proof:thm:pbvi-backup} at filtered credible set \(\tilde{c}\). Define the updated leader value function \(v'_{\colorL,t}\) at stage \(t\) by augmenting \(\Lambda_t\) with a new set \(\Gamma\) of value vectors \((\alpha_{\colorL}^{o_{h_{\colorF}}}, \alpha_{\colorF}^{o_{h_{\colorF}}})\), each linear over conditional occupancy states \(o_{h_{\colorF}} \in \mathbb{C}_\colorF(\tilde{c})\), where: \(\forall \colorI\in I, s\in S, h_{\colorL}\in \cup_{o\in \tilde{c}} H_{\colorL}(o)\),
\begin{align*}
    \alpha_{\colorI}^{o_{h_{\colorF}}}(s,h_{\colorL}) &\doteq 
    \sum_{s'\in S}
    \sum_{a_{\colorF} \in A_{\colorF}}
    \kappa_{\colorF}^{\Gamma^{\tilde{c}}}(o_{h_{\colorF}},a_{\colorF})
    \sum_{a_{\colorL}\in A_{\colorL}}
    \delta_{\colorL}^{\tilde{c}}(a_{\colorL} | h_{\colorL})
    \cdot 
    p(s'\mid s, a_{\colorL}, a_{\colorF}) 
    \nonumber\\
    &\qquad 
    \left(
    r_\colorI(s, a_{\colorL}, a_{\colorF},s')
    +
    \gamma 
    \sum_{\alpha \in \Gamma^{\tilde{c}}}
    w^{\Gamma^{\tilde{c}}}_{\colorF}(o_{h_{\colorF}}, s', \alpha) \cdot 
    \alpha_{\colorI}(s', h_{\colorL} a_{\colorL} s')
    \right).
\end{align*}
Then the updated value function satisfies \(v'_{\colorL,t}(\tilde{c}) \geq v_{\colorL,t}(\tilde{c})\) for all filtered credible sets \(\tilde{c}\in \tilde{C}'\), and \(v'_{\colorL,t}(\tilde{c}) > v_{\colorL,t}(\tilde{c})\) for at least one such \(\tilde{c}\) whenever the greedy update yields a strict improvement.
\end{corollary}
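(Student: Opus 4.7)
The plan is to reduce both inequalities to two facts: (i) the representation $\tilde c\mapsto v_{\colorL,t}(\tilde c)$ is a maximum over $\Gamma\in\Lambda_t$ (as given by Lemma~\ref{appendix:lem:convexity}/Theorem~\ref{appendix:thm:convexity}), hence adjoining a new set $\Gamma$ to $\Lambda_t$ can only raise the value; and (ii) the newly adjoined set $\Gamma$, when plugged into the composite value formula of Lemma~\ref{appendix:lem:convexity}, evaluates at $\tilde c$ to exactly the MILP objective $\max_{o\in\tilde c}q_{\colorL,t}^{\Gamma^{\tilde c}}(o,\delta_{\colorL}^{\tilde c})$ returned by Theorem~\ref{appendix:proof:thm:pbvi-backup}. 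Monotonicity is then automatic, and strict improvement follows whenever this MILP objective strictly exceeds $v_{\colorL,t}(\tilde c)$.

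\paragraph{Monotonicity.} By Lemma~\ref{appendix:lem:convexity}, for every credible set $\tilde c$,
\[
v_{\colorL,t}(\tilde c)=\max_{\Gamma\in\Lambda_t}\Bigl\{v_{\colorL,t}^{\Gamma}(o)\ \Bigm|\ o\in\argmax_{o'\in\tilde c}v_{\colorF,t}^{\Gamma}(o')\Bigr\}.
\]
Writing $\Lambda'_t\doteq\Lambda_t\cup\{\Gamma\}$ and invoking the same formula for $v'_{\colorL,t}$, the outer maximum over $\Gamma'\in\Lambda'_t$ dominates the one over $\Lambda_t$, so $v'_{\colorL,t}(\tilde c)\ge v_{\colorL,t}(\tilde c)$ for every $\tilde c\in\tilde C$, and \emph{a fortiori} on the sample set $\tilde C'$.

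\paragraph{Identifying the new vector's value.} The key algebraic step is to unfold the definition of $\alpha_{\colorI}^{o_{h_{\colorF}}}$ and check that the integer multipliers $\kappa_{\colorF}^{\Gamma^{\tilde c}}(\cdot,a_{\colorF})$ and $w_{\colorF}^{\Gamma^{\tilde c}}(\cdot,s',\alpha)$, being indicator vectors by the sum-to-one MILP constraints, realise exactly the inner maxima $\max_{a_{\colorF}}$ over follower actions and $\max_{\alpha}$ over continuation vectors that appear in the $\beta$, $g$ and $q$ expressions derived in the proof of Theorem~\ref{appendix:proof:thm:pbvi-backup}. Using Lemma~\ref{lem:conditional:occupancy:state} to rewrite $\tau_{s'}(o_{h_{\colorF}},\delta_{\colorL}^{\tilde c},a_{\colorF})$ in terms of $p$ and $\delta_{\colorL}^{\tilde c}$, and decomposing $o$ over follower histories via $o=\sum_{h_{\colorF}}\Pr(h_{\colorF}\mid o)\,o_{h_{\colorF}}\otimes\pmb e_{h_{\colorF}}$, one verifies that for every $o\in\tilde c$,
\[
\rho_{\colorI}(o)+\gamma^{t}\!\!\sum_{h_{\colorF}}\!\Pr(h_{\colorF}\mid o)\,\alpha_{\colorI}^{o_{h_{\colorF}}}(o_{h_{\colorF}})=q_{\colorI,t}^{\Gamma^{\tilde c}}(o,\delta_{\colorL}^{\tilde c}),\qquad\colorI\in\{\colorL,\colorF\}.
\]
Consequently, the $o$ attaining the outer follower-max in the formula of Lemma~\ref{appendix:lem:convexity} under the new $\Gamma$ coincides with the follower-preferred occupancy state identified by the Stackelberg constraint~\eqref{appendix:milp:constraint:2}, and the resulting leader value is exactly the MILP objective $\max_{o\in\tilde c}q_{\colorL,t}^{\Gamma^{\tilde c}}(o,\delta_{\colorL}^{\tilde c})$.

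\paragraph{Strict improvement and main obstacle.} Combining the two previous steps, $v'_{\colorL,t}(\tilde c)\ge\max_{o\in\tilde c}q_{\colorL,t}^{\Gamma^{\tilde c}}(o,\delta_{\colorL}^{\tilde c})$. If the greedy MILP strictly beats the current representation at $\tilde c$, i.e.\ this quantity strictly exceeds $v_{\colorL,t}(\tilde c)$, then strict improvement at $\tilde c$ follows. The main obstacle is precisely the algebraic identification in the previous paragraph: one must carefully track that the indicator variables $\kappa_{\colorF}^{\Gamma^{\tilde c}}$ and $w_{\colorF}^{\Gamma^{\tilde c}}$, constrained by the big-$M$ inequalities~\eqref{appendix:milp:constraint:6},\eqref{appendix:milp:constraint:9},\eqref{appendix:milp:constraint:12}--\eqref{appendix:milp:constraint:15}, perform the same selections as the $\max$-operators that appear implicitly in the definition of $q_{\colorI,t}^{\Gamma}$ via $g_{\colorI,t}^{\Gamma}$ and $\beta_{\colorI,t}^{\Gamma}$, and that the Stackelberg tie-breaking~\eqref{appendix:milp:constraint:2} enforces leader-favourable selection among follower-optimal responses so that the follower-argmax in Lemma~\ref{appendix:lem:convexity} lands on the intended occupancy state.
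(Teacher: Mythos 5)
Your argument is correct and coincides with the reasoning the paper implicitly relies on (the corollary is stated without an explicit proof): monotonicity comes from enlarging the collection over which the outer maximum in the representation of Lemma~\ref{appendix:lem:convexity} is taken, and the strict-improvement clause comes from identifying the value of the newly adjoined vector set at \(\tilde{c}\) with the MILP objective of Theorem~\ref{appendix:proof:thm:pbvi-backup}, using that the sum-to-one constraints force \(\kappa_{\colorF}^{\Gamma^{\tilde{c}}}\) and \(w_{\colorF}^{\Gamma^{\tilde{c}}}\) to be indicators realising the inner maxima over follower actions and continuation vectors. The only part you leave as a sketch is exactly the routine algebraic unfolding of \(\alpha_{\colorI}^{o_{h_{\colorF}}}\) against \(o_{h_{\colorF}}\) via Lemma~\ref{lem:conditional:occupancy:state}, which you correctly flag; no gap in the overall structure.
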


\section{Point-Based Value Iteration for the Credible MDP}
\label{appendix:pbvi}

This appendix provides the algorithmic details for computing an approximate strong Stackelberg equilibrium (SSE) in the original leader–follower game \( M \) using point-based value iteration (PBVI) over the reduced credible MDP \( \mathcal{M}' \). This method performs dynamic programming over filtered credible sets, constructing value function approximations using vector pairs that encode leader and follower returns while maintaining Stackelberg-consistent best responses at every step.

\subsection{Overview of the PBVI Algorithm}

\Cref{alg:pbvi-credible-mdp} outlines the PBVI loop. The algorithm maintains:

\begin{itemize}
    \item a sequence of filtered credible sets \( \tilde{C}'_0, \ldots, \tilde{C}'_\ell \), representing reachable belief supports up to horizon \( \ell \),
    \item corresponding value approximations \( \Lambda_0, \ldots, \Lambda_\ell \), where each \( \Lambda_t \) is a collection of vector sets \( \Gamma \subseteq \mathbb{R}^{|\mathbb{C}_{\colorF}(\tilde{c})|} \times \mathbb{R}^{|\mathbb{C}_{\colorF}(\tilde{c})|} \) approximating the leader and follower value functions over conditional occupancy states.
\end{itemize}

\begin{algorithm}[H]
  \caption{Computing an approximation of a SSE in $M$ via Point-Based Value Iteration (PBVI) for \(M'\).}
  \label{alg:pbvi-credible-mdp}
  \begin{algorithmic}[1]
    \STATE {\bf function} \(\mathtt{solveSSE}(M)\)
    \STATE \(\tilde{C}'_0 \gets \{c_0\}\) (initial set of credible sets)
    \STATE \(\tilde{C}'_1,\ldots,\tilde{C}'_\ell \gets \emptyset\) (sample sets of credible sets) 
    \STATE \(\Lambda_0,\ldots,\Lambda_\ell \gets \emptyset\) (collections of sets of vector pairs)
    \WHILE{not converged}
      \FOR{\(t = 0\) to \(\ell-1\)} \STATE \(\tilde{C}'_{t+1} \gets \mathtt{expand}(\tilde{C}'_t, \tilde{C}'_{t+1})\) 
      \ENDFOR
      \FOR{\(t = \ell-1\) to \(0\)} 
        \STATE \(\Lambda_{t} \gets  \mathtt{backup}(\tilde{C}'_t, \Lambda_{t+1})\) 
      \ENDFOR
      \STATE \((\Lambda_{t}, \tilde{C}'_t)\gets \mathtt{pruning}(\Lambda_{t}, \tilde{C}'_t)\)
    \ENDWHILE
    \RETURN \(\mathtt{extractSSE}(\Lambda_0,\ldots,\Lambda_\ell)\).
  \end{algorithmic}
\end{algorithm}

Each PBVI iteration alternates between three phases:
\begin{enumerate}
    \item \textbf{Expansion} (\Cref{alg:pbvi-expand}): simulate reachable belief sets,
    \item \textbf{Backup} (\Cref{alg:pbvi-backup}): update value approximations using MILPs,
    \item \textbf{Pruning} (\Cref{alg:pruning-vectors}, \Cref{alg:pruning-states}): reduce redundant vectors and belief points.
\end{enumerate}

The process terminates once value approximations stabilise and an SSE policy can be extracted.

\subsection{Phase 1: Expansion of Reachable Credible Sets}
\label{appendix:pbvi:expand}

\Cref{alg:pbvi-expand} defines the expansion routine \( \mathtt{expand} \), which constructs the next-stage set \( \tilde{C}'_{t+1} \) by sampling possible transitions from each \( \tilde{c} \in \tilde{C}'_t \).

\begin{algorithm}[H]
  \caption{The \(\mathtt{expand}\) phase for PBVI in \( M' \)}
  \label{alg:pbvi-expand}
  \begin{algorithmic}[1]
    \STATE {\bf function} \(\mathtt{expand}(\tilde{C}'_t, \tilde{C}'_{t+1})\) 
    \FOR{each \(\tilde{c} \in \tilde{C}'_t\)}
        \STATE Filtered credible set \(\tilde{c}' \gets \emptyset\) and set \(\mathbb{C}_\colorF(\tilde{c}') \gets \emptyset\)
        \STATE Sample leader decision rule \(\delta_{\colorL} \in \Delta_{\colorL}\)
        \FOR{each occupancy state \(o \in \tilde{c}\)}
            \STATE Initialise \(\delta_{\colorF}\gets \emptyset\)
            \STATE \(\isInteresting{\delta_{\colorF}} \gets \FALSE \)
            \FOR{each \(h_{\colorF} \in H_{\colorF}\) reachable from \(o\)}
                \STATE Sample  \(\delta_{\colorF}(h_{\colorF}) \sim \mathtt{uniform}(A_{\colorF})\) 
                \FOR{each \(s' \in S\)}
                    \STATE  \(o' \gets \tau_{s'}(o_{h_{\colorF}}, \delta_{\colorL}, \delta_{\colorF}(h_{\colorF}))\)
                    \IF{ \(\mathbb{C}_\colorF(\tilde{c}')=\emptyset\) or \(\max_{\bar{o}'\in \mathbb{C}_\colorF(\tilde{c}')} \| \bar{o}' - o' \|_1 > \varepsilon\)}
                        \STATE \(\isInteresting{\delta_{\colorF}} \gets \TRUE \)
                        \STATE \(\mathbb{C}_\colorF(\tilde{c}') \gets \mathbb{C}_\colorF(\tilde{c}') \cup \{o'\} \)
                    \ENDIF
                \ENDFOR
            \ENDFOR
            \IF{\(\isInteresting{\delta_{\colorF}}  \)}   
                \STATE \(\tilde{c}' \gets \tilde{c}' \cup \{\tau(o, \delta_{\colorL}, \delta_{\colorF})\}\)
            \ENDIF
        \ENDFOR
        \IF{\(\tilde{C}'_{t+1} = \emptyset\) or \( \max_{\tilde{c}''\in \tilde{C}'_{t+1}} d_H(\tilde{c}'', \tilde{c}')>\varepsilon\)}
            \STATE \(\tilde{C}'_{t+1} \gets \tilde{C}'_{t+1} \cup \{\tilde{c}'\} \)
        \ENDIF
    \ENDFOR
    \RETURN \(\tilde{C}'_{t+1}\)
  \end{algorithmic}
\end{algorithm}

For each \( o \in \tilde{c} \):
\begin{itemize}
    \item A leader decision rule \( \delta_{\colorL} \in \Delta_{\colorL} \) is sampled.
    \item Follower policies \( \delta_{\colorF} \) are sampled by selecting actions uniformly at random for each reachable follower-private history \( h_{\colorF} \).
    \item The resulting conditional occupancy states \( o' \in \mathbb{C}_{\colorF}(\tilde{c}') \) are computed using the environment transition \( \tau_{s'}(o_{h_{\colorF}}, \delta_{\colorL}, \delta_{\colorF}(h_{\colorF})) \) for all \( s' \in S \).
    \item Only sufficiently novel conditional states \( o' \) are retained (based on \(\ell_1\)-norm or Hausdorff distance), ensuring diversity of belief support.
\end{itemize}

This phase efficiently expands the reachable belief space while avoiding redundancy.

\subsection{Phase 2: Backing Up the Value Function}
\label{appendix:pbvi:backup}

\Cref{alg:pbvi-backup} performs value backups at each filtered credible set \( \tilde{c} \in \tilde{C}'_t \), using the current approximation \( \Lambda_{t+1} \).

\begin{algorithm}[H]
  \caption{The \(\mathtt{backup}\) phase for PBVI in \( \mathcal{M}' \)}
  \label{alg:pbvi-backup}
  \begin{algorithmic}[1]
    \STATE {\bf function} \(\mathtt{backup}(\tilde{C}'_t,\Lambda_{t+1})\)
    \FOR{ each filtered credible set \( \tilde{c} \in \tilde{C}'_t \)}
        \FOR{each occupancy state \(o \in \tilde{c}\)}
            \FOR{each occupancy state \(\Gamma \in \Lambda_{t+1}\)}
                \STATE \(
                    (\delta_\colorL^{\Gamma,o}, w_\colorF^{\Gamma,o},\kappa_\colorF^{\Gamma,o}, v^{\Gamma,o})
                        \gets 
                    \mathtt{MILP}(\tilde{c}, o,\Gamma) \), (cf. Table \ref{appendix:eq:milp-backup})
                \STATE \(\Phi
                    \gets \Phi  \cup
                    \{  (\delta_\colorL^{\Gamma,o}, w_\colorF^{\Gamma,o},\kappa_\colorF^{\Gamma,o}, v^{\Gamma',o})\} \)
            \ENDFOR
        \ENDFOR
        \STATE \((\delta_\colorL^{\tilde{c}}, w_\colorF^{\tilde{c}},\kappa_\colorF^{\tilde{c}} ,v^{\tilde{c}}) \gets \argmax_{(\delta_\colorL^{\Gamma,o}, w_\colorF^{\Gamma,o}, \kappa_\colorF^{\Gamma,o}, v^{\Gamma,o}) \in \Phi}  v^{\Gamma,o}\)
        \STATE \(\Gamma^{\tilde{c}} \gets \mathtt{improve}(\tilde{c}, \delta_\colorL^{\tilde{c}}, w_\colorF^{\tilde{c}},\kappa_\colorF^{\tilde{c}} )\)
      \STATE \(\Lambda_t \gets \Lambda_t \cup \{\Gamma^{\tilde{c}}\} \)
    \ENDFOR
    \STATE pruning of \(\Lambda_t \) and  \(\tilde{C}'_t \)
    \RETURN \(\Lambda_t \)
  \end{algorithmic}
\end{algorithm}

For each \( o \in \tilde{c} \) and \( \Gamma \in \Lambda_{t+1} \):
\begin{itemize}
    \item The MILP in \Cref{appendix:eq:milp-backup} is solved to obtain:
    \begin{itemize}
        \item a greedy leader decision rule \( \delta_{\colorL}^{\Gamma,o} \),
        \item a consistent follower best response \( w_{\colorF}^{\Gamma,o} \),
        \item and the resulting leader value \( v^{\Gamma,o} \).
    \end{itemize}
    \item The best such triple is selected and used to construct a new vector set \( \Gamma^{\tilde{c}} \) via \( \mathtt{improve} \), which is then added to \( \Lambda_t \).
\end{itemize}

This step updates the value function while preserving SSE rationality through explicit best-response modelling.

\subsection{Phase 3: Pruning Redundant Representations}
\label{appendix:pbvi:pruning}

To ensure computational efficiency, two pruning steps are applied:

\paragraph{(a) Value pruning.}  
\Cref{alg:pruning-vectors} retains only vector sets \( \Gamma \in \Lambda \) that are optimal for some filtered credible set \( \tilde{c} \in \tilde{C}' \). This ensures that all retained \(\Gamma\) contribute to the current upper envelope of the value function.

\begin{algorithm}[H]
  \caption{Bounded Set Pruning}
  \label{alg:pruning-vectors}
  \begin{algorithmic}[1]
    \STATE {\bf function} $\mathtt{BoundedPruning}(\Lambda, \tilde{C}')$
    \FOR{each $\Gamma \in \Lambda$}
      \STATE $\isKeptInPruning\Gamma \gets \FALSE$
    \ENDFOR
    \FOR{each $\tilde{c} \in \tilde{C}'$}
      
      \STATE $\Gamma_x \gets  \argmax_{\Gamma \in \Lambda}  v_{\colorL}^{\Gamma}(\tilde{c})
      $
      \STATE $\isKeptInPruning{\Gamma_{\tilde{c}}} \gets \TRUE$
    \ENDFOR
    \STATE \textbf{return} $\{\Gamma \in \Lambda \mid \isKeptInPruning\Gamma\}$
  \end{algorithmic}
\end{algorithm}

\paragraph{(b) Belief support pruning.}  
\Cref{alg:pruning-states} removes filtered credible sets that are well approximated by existing ones. Specifically, a set \( \tilde{c} \) is retained only if the leader value induced by its optimal \(\Gamma_{\tilde{c}}\) differs from all others by more than a threshold \( \epsilon \).

\begin{algorithm}[H]
  \caption{Pruning Uncredible Sets}
  \label{alg:pruning-states}
  \begin{algorithmic}[1]
    \STATE {\bf function} $\mathtt{PruneUncredibleSets}(\Lambda, \tilde{C}', \epsilon)$
    \STATE Initialise $\tilde{C}^{\circ} \gets \emptyset$
    \FOR{each $\tilde{c} \in \tilde{C}'$}
      \STATE $\Gamma_{\tilde{c}} \gets \argmax_{\Gamma \in \Lambda}  v_{\colorL}^{\Gamma}(\tilde{c})$
    \ENDFOR
    \FOR{each $\tilde{c} \in \tilde{C}'$}
      \IF{{for all $\tilde{c}' \in \tilde{C}^{\circ}$, $\left| v_{\colorL}^{\Gamma_{\tilde{c}}}(\tilde{c}) -  v_{\colorL}^{\Gamma_{\tilde{c}'}}(\tilde{c}) \right| > \epsilon$}}
        \STATE $\tilde{C}^{\circ} \gets \tilde{C}^{\circ} \cup \{\tilde{c}\}$
      \ENDIF
    \ENDFOR
    \STATE \textbf{return} $\tilde{C}^{\circ}$
  \end{algorithmic}
\end{algorithm}

Together, these operations keep the sampled support compact without compromising coverage.

\subsection{Extracting a Strong Stackelberg Policy}
\label{appendix:pbvi:extract}

Once PBVI converges, the sequence \( \Lambda_0, \ldots, \Lambda_\ell \) contains all the information required to reconstruct an approximate SSE policy.

\paragraph{Recursive representation.}  
Each vector pair \( (\alpha_{\colorL}, \alpha_{\colorF}) \in \Gamma \) stores:
\begin{itemize}
    \item the leader decision rule \( \delta_{\colorL} \) applied at its parent filtered credible set,
    \item the follower action(s) selected in each conditional occupancy state,
    \item pointers to the next-stage vector pairs \( (\alpha'_{\colorL}, \alpha'_{\colorF}) \in \Gamma' \in \Lambda_{t+1} \), for each possible environment transition \( s' \in S \).
\end{itemize}

\paragraph{Unrolling the policy.}  
Policy extraction proceeds as follows:
\begin{enumerate}
    \item Identify the initial filtered credible set \( \tilde{c}_0 \) and the vector pair \( (\alpha^*_{\colorL}, \alpha^*_{\colorF}) \in \Gamma^* \in \Lambda_0 \) maximising the leader value.
    \item At each time step \( t \), record the stored \( \delta_{\colorL} \) as the leader policy, and the corresponding best-response mapping from conditional occupancy states to \( a_{\colorF} \).
    \item Follow the transition pointers to the next-stage vector pairs \( (\alpha'_{\colorL}, \alpha'_{\colorF}) \) and repeat until horizon \(\ell\).
\end{enumerate}

\paragraph{Result.}  
This process reconstructs:
\begin{itemize}
    \item a full sequence of stage-wise leader decision rules \( \delta_{\colorL,0:\ell-1} \),
    \item best-response mappings for the follower consistent with SSE semantics,
    \item policies grounded in value approximations stored during PBVI.
\end{itemize}

The resulting leader–follower policy profile is thus consistent with the approximate value function and satisfies the SSE assumptions by construction. This avoids re-solving any decision problem at execution time.

\section{Baseline Algorithms}
\label{appendix:algorithms}
This appendix provides algorithmic details for the baseline methods used in our experiments.

\subsection{PBVI Baselines}
\label{appendix:baselines:pbvi}

We consider two variations of the Point-Based Value Iteration (PBVI) method introduced in Appendix~\ref{appendix:pbvi}:

\paragraph{PBVI-H.} This variant computes policy decision rules that map private decision-rule histories to actions. It allows policies to be fully history-dependent.

\paragraph{PBVI-S.} This variant restricts the leader's policies to be Markovian, i.e., dependent only on the current state. This constraint enables state-based filtering of credible sets.

\subsection{Backward Induction Baselines}
\label{appendix:baselines:backward}

We evaluate two versions of backward induction (BI), differing in the criterion used to compute follower responses.

\paragraph{BI.} This method corresponds to the formulation in Proposition~\ref{prop:wrong:pi}, where the follower response maximises the complete expected cumulative reward.

\paragraph{MY.} In this variant, the follower is assumed to be \emph{myopic}, selecting actions solely to maximise immediate reward. This results in a myopic best response that disregards long-term value.

\subsection{Normal-Form Game (NFG) Baselines}
\label{appendix:baselines:nfg}

These baselines reformulate the sequential game as a one-shot normal-form game (NFG), where each row and column of the payoff matrix corresponds to a deterministic policy of the leader and follower, respectively.

Although normal-form representations typically model memoryless (Markovian) interactions, we adapt this formulation to support history-dependent policies. Let \( \Pi^{\mathrm{det}}_\colorL \) and \( \Pi^{\mathrm{det}}_\colorF \equiv \Pi_\colorF \) denote the sets of deterministic policies for the leader and follower, respectively.

\paragraph{LP Baseline.} We use the OpenSpiel implementation of the LP-based SSE solver introduced by \citet{conitzer2006computing}. For each fixed deterministic follower policy \( \pi^{\mathrm{det}}_\colorF \in \Pi_\colorF \), the following linear program identifies a stochastic leader policy \( \pi_\colorL \in \Pi_\colorL \) that maximises the leader’s expected value under the constraint that \( \pi^{\mathrm{det}}_\colorF \) remains a best response:

\begin{align*}
\max_{\pi_\colorL\in \Pi_\colorL} \quad &
\sum_{\pi^{\mathrm{det}}_\colorL} \pi_\colorL(\pi^{\mathrm{det}}_\colorL) \cdot 
v^{\pi^{\mathrm{det}}_\colorL, \pi^{\mathrm{det}}_\colorF}_{\colorL, 0}(s_0, s_0, s_0) \\
 \text{Subject to}\colon \quad &
 \sum_{\pi^{\mathrm{det}}_\colorL} \pi_\colorL(\pi^{\mathrm{det}}_\colorL) \cdot 
 v^{\pi^{\mathrm{det}}_\colorL, \pi^{\mathrm{det}}_\colorF}_{\colorF, 0}(s_0, s_0, s_0)
 \geq
 \sum_{\pi^{\mathrm{det}}_\colorL} \pi_\colorL(\pi^{\mathrm{det}}_\colorL) \cdot 
 v^{\pi^{\mathrm{det}}_\colorL, \bar{\pi}^{\mathrm{det}}_\colorF}_{\colorF, 0}(s_0, s_0, s_0), 
 \quad \forall \bar{\pi}^{\mathrm{det}}_\colorF \in \Pi_\colorF.
\end{align*}

The optimal policy pair \((\pi_\colorL, \pi^{\mathrm{det}}_\colorF)\) is selected as the SSE.

\paragraph{MILP Baseline.} We also consider the MILP formulation by \citet{3038794.3038831}, which encodes follower tie-breaking constraints explicitly. Let \( \pi : \Pi^{\mathrm{det}}_\colorL \times \Pi_\colorF \to [0,1] \) be a joint distribution over deterministic policy pairs. The MILP is defined as:

\begin{align*}
\max_{\pi} &
\sum_{\pi^{\mathrm{det}}_\colorL} \sum_{\pi^{\mathrm{det}}_\colorF} \pi(\pi^{\mathrm{det}}_\colorL, \pi^{\mathrm{det}}_\colorF) \cdot 
v^{\pi^{\mathrm{det}}_\colorL, \pi^{\mathrm{det}}_\colorF}_{\colorL, 0}(s_0, s_0, s_0) \\
\text{s.t.} \quad &
\sum_{\pi^{\mathrm{det}}_\colorF} \pi(\pi^{\mathrm{det}}_\colorL, \pi^{\mathrm{det}}_\colorF) = \pi_\colorL(\pi^{\mathrm{det}}_\colorL), 
\quad \forall \pi^{\mathrm{det}}_\colorL \in \Pi^{\mathrm{det}}_\colorL, \\
& \sum_{\pi^{\mathrm{det}}_\colorL} \pi(\pi^{\mathrm{det}}_\colorL, \pi^{\mathrm{det}}_\colorF) = \pi_\colorF(\pi^{\mathrm{det}}_\colorF), 
\quad \forall \pi^{\mathrm{det}}_\colorF \in \Pi_\colorF, \\
& \sum_{\pi^{\mathrm{det}}_\colorL} \sum_{\pi^{\mathrm{det}}_\colorF} \pi(\pi^{\mathrm{det}}_\colorL, \pi^{\mathrm{det}}_\colorF) \cdot 
v^{\pi^{\mathrm{det}}_\colorL, \pi^{\mathrm{det}}_\colorF}_{\colorF, 0}(s_0, s_0, s_0) 
\geq 
\sum_{\pi^{\mathrm{det}}_\colorL} \pi_\colorL(\pi^{\mathrm{det}}_\colorL) \cdot 
v^{\pi^{\mathrm{det}}_\colorL, \bar{\pi}^{\mathrm{det}}_\colorF}_{\colorF, 0}(s_0, s_0, s_0), \quad \forall \bar{\pi}^{\mathrm{det}}_\colorF \in \Pi_\colorF\\
&
\sum_{\pi^{\mathrm{det}}_\colorL} \sum_{\pi^{\mathrm{det}}_\colorF}
\pi(\pi^{\mathrm{det}}_\colorL, \pi^{\mathrm{det}}_\colorF)  = 1\\
& \pi(\pi^{\mathrm{det}}_\colorL, \pi^{\mathrm{det}}_\colorF) \geq 0, \pi_\colorL(\pi^{\mathrm{det}}_\colorL) \geq 0,
\quad \forall \pi^{\mathrm{det}}_\colorL\in \Pi^{\mathrm{det}}_\colorL, \pi^{\mathrm{det}}_\colorF\in \Pi_\colorF\\
&
\pi_\colorF(\pi^{\mathrm{det}}_\colorF) \in \{0,1\},
\quad \forall  \pi^{\mathrm{det}}_\colorF\in \Pi_\colorF.
\end{align*}

\section{Benchmark Descriptions}
\label{appendix:benchmarks}
This appendix describes the benchmark environments used in our experiments. Each benchmark is defined by a finite state space \(S\), leader and follower action spaces \( A_{\colorL} \) and \( A_{\colorF} \), transition function \(T\), and reward functions \( r_{\colorL} \) and \( r_{\colorF} \). All benchmark files, including state definitions, action sets, transitions, and reward functions, are provided in the supplementary code.

\paragraph{Centipede \cite{rosenthal1981games}.}
This benchmark adapts the classical Centipede game. Players alternate moves over a finite horizon \(\ell\): the leader acts on odd stages, the follower on even ones. At each stage, the current player may either \emph{stop}, ending the game and claiming a disproportionate share of the accumulated rewards, or \emph{continue}, increasing the pool for future stages. The benchmark tests commitment and strategic foresight in a setting where early defection yields short-term gain but limits cumulative reward.

\paragraph{Match \cite{10.5555/2900929.2900938}.}
The Match benchmark is derived from a counterexample demonstrating the suboptimality of Markov policies in general-sum stochastic games. The follower first chooses between actions that may penalise the leader (e.g., yielding a reward \(\varepsilon\) to the follower but cost \(-\mathbf{M}\) to the leader). The game then transitions deterministically to states encoding the follower’s past action. The leader’s subsequent decision can reset the game or impose penalties. This benchmark illustrates how history-dependent leader policies can constrain follower behaviour more effectively than Markov ones.

\paragraph{Patrolling \cite{10.5555/2900929.2900938}.}
The Patrolling benchmark is a zero-sum security game. The leader (defender) and the follower (attacker) move across a graph of targets. At each timestep, the attacker selects a target to attack, while the defender attempts to intercept. The game is adversarial: the attacker gains if the target is undefended; otherwise, the defender succeeds. Transitions are deterministic based on agent movement. This benchmark tests planning under spatial constraints, imperfect information, and adversarial reasoning.

\paragraph{Dec-Tiger \cite{nair2003taming}.}
This benchmark adapts the fully observable version of the classic Dec-Tiger problem. At each stage, both agents observe the location of a tiger behind one of two doors. They independently choose which door to open. Opening the door without the tiger yields a positive reward; opening the tiger door yields a penalty. The tiger’s location is uniformly resampled at each stage. The game tests simultaneous decision-making and partial action coordination under shared state information.

\paragraph{MABC \cite{hansen2004dynamic}.}
The Multi-Agent Broadcast Channel (MABC) benchmark is a cooperative communication problem. Multiple agents share a single broadcast channel and must coordinate their transmissions to avoid collisions. At each timestep, agents independently choose whether to transmit. If exactly one agent transmits, all receive a reward of 1; otherwise, the reward is 0. The current state encodes whether the previous transmission succeeded or failed. This benchmark tests decentralised coordination through implicit signalling under communication constraints.

\subsection{Scalability Limitations}
\label{appendix:baselines:scalability}

The number of deterministic history-dependent policies for agent \( \colorI \) grows exponentially with the action set \( A_i \), state space \( S \), and time horizon \( \ell \). The total number of possible deterministic policies is:

\[
|\Pi^{\mathrm{det}}_{\colorI}| = |A_{\colorI}|^{\sum_{t=0}^{\ell-1} |A_{\colorI} \times S|^t}.
\]

\begin{table}[ht]
\centering
\begin{tabular}{lcccc}
\toprule
\textbf{Benchmark} & \textbf{Horizon 1} & \textbf{Horizon 2} & \textbf{Horizon 3} & \textbf{Horizon 4} \\
\midrule
Centipede   & \( 2^1 \)   & \( 2^5 \)     & \( 2^{73} \)    & \( 2^{1111} \) \\
Match       & \( 2^1 \)   & \( 2^9 \)     & \( 2^{73} \)    & \( 2^{585} \)  \\
Tiger       & \( 2^1 \)   & \( 2^5 \)     & \( 2^{21} \)    & \( 2^{85} \)   \\
MABC        & \( 2^1 \)   & \( 2^9 \)     & \( 2^{73} \)    & \( 2^{585} \)  \\
Patrolling  & \( 5^1 \)   & \( 5^{31} \)  & \( 5^{931} \)   & \( 5^{27931} \) \\
\bottomrule
\end{tabular}
\caption{Number of deterministic policies per agent per planning horizon, illustrating exponential blow-up.}
\label{table:nb_hist_det_policies}
\end{table}

As illustrated in Table~\ref{table:nb_hist_det_policies}, these baselines are exact but become computationally intractable beyond small horizons due to the combinatorial explosion in policy space.

\section{Detailed Experimental Results}
\label{appendix:results}
This appendix provides additional experimental results on the benchmark problems using baseline algorithms over extended horizons, in order to analyse the scalability of the proposed method. We also report the size of the value function for the PBVI baselines, included as an additional metric in the comprehensive result table (Table~\ref{tab:complete-results}).

\begin{table}[H]
\centering
\setlength{\tabcolsep}{3pt}
\renewcommand{\arraystretch}{1.1}
\scalebox{1}{
\begin{tabular}{c
  ccc ccc ccc ccc ccc ccc}
\toprule
\multicolumn{1}{c}{} 
& \multicolumn{3}{c}{\textbf{H}} 
& \multicolumn{3}{c}{\textbf{S}} 
& \multicolumn{2}{c}{\textbf{BI}} 
& \multicolumn{2}{c}{\textbf{MY}} 
& \multicolumn{2}{c}{\textbf{LP}} 
& \multicolumn{2}{c}{\textbf{MILP}} \\[-1mm]
\cmidrule(lr){2-4} \cmidrule(lr){5-7} \cmidrule(lr){8-9} \cmidrule(lr){10-11} \cmidrule(lr){12-13} \cmidrule(lr){14-15}
$\ell$ 
& V & VF size & T 
& V & VF size & T 
& V & T 
& V & T 
& V & T 
& V & T \\
\midrule
\multicolumn{13}{c}{\textsc{Centipede}  \cite{rosenthal1981games}} \\[-0.5mm]
\hline
1 & \pl1{\bf1} & 2 & 0.01 & \pl1{\bf1} & 2 & 0.01 & \pl1{\bf1} & 0.01 & \pl1{\bf1} & 0.01 & \pl1{\bf1} & 0.25 & \pl1{\bf1} & 0.14 \\
\myrowcolour
2 & \pl1{\bf1} & 4& 0.01 & \pl1{\bf1} & 3 & 0.01 & \pl1{\bf1} & 0.01 & \pl1{\bf1} & 0.01 & \pl1{\bf1} & 20.21 & \pl1{\bf1} & 7.29 \\
3 & \pl1{\bf2} & 4 & 0.01 & \pl1{\bf2} & 4 & 0.01 & 1 & 0.01 & 1 & 0.01 & --- & --- & --- & --- \\
\myrowcolour
4 & \pl1{\bf2.67} & 7 & 1.56 & \pl1{\bf2.67} & 5 & 0.02 & 1 & 0.05 & 1 & 0.06 & --- & --- & --- & --- \\
5 & \pl1{\bf4} & 9 & 2.69 & \pl1{\bf4} & 6 & 0.04 & 1 & 0.10 & 1 & 0.10 & --- & --- & --- & --- \\
\myrowcolour
6 & \pl1{\bf4.67} & 9 & 3.68 & \pl1{\bf4.67} & 7 & 0.11 & 1 & 0.17 & 1 & 0.13 & --- & --- & --- & --- \\
7 & \pl1{\bf6} & 10 & 5.34 & \pl1{\bf6} & 8 & 0.20 & 1 & 0.18 & 1 & 0.15 & --- & --- & --- & --- \\
\myrowcolour
8 & \pl1{\bf7} & 14 & 8.18 & \pl1{\bf6.67} & 9 & 0.46 & 1 & 0.20 & 1 & 0.49 & --- & --- & --- & --- \\
9 & \pl1{\bf8} & 13 & 11.53 & \pl1{\bf8} & 10 & 0.71 & 1 & 0.21 & 1 & 0.25 & --- & --- & --- & --- \\
\myrowcolour
10 & \pl1{\bf9} & 15 & 13.00 & \pl1{\bf8.67} & 11 & 1.32 & 1 & 0.3 & 1 & 0.27 & --- & --- & --- & --- \\
\midrule
\multicolumn{13}{c}{\textsc{Match} \cite{10.5555/2900929.2900938}} \\
\hline
1 & \pl1{\bf-1000} & 2 & 0.01 & \pl1{\bf-1000} & 2 & 0.01 & \pl1{\bf-1000} & 0.01 & \pl1{\bf-1000} & 0.01 & \pl1{\bf-1000} & 0.2 & \pl1{\bf-1000} & 0.3 \\
\myrowcolour
2 & \pl1{\bf-1000} & 3 & 0.01 & \pl1{\bf-1000} & 3 & 0.01 & \pl1{\bf-1000} & 0.01 & \pl1{\bf-1000} & 0.01 & --- & --- & --- & --- \\
3 & \pl1{\bf0}& 4 & 0.01 & -1000 & 4 & 0.01 & -1000 & 0.04 & -1000 & 0.02 & --- & --- & --- & --- \\
\myrowcolour
4 & \pl1{\bf0}& 5 & 0.02 & -2000 & 5 & 0.02 & -2000 & 0.05 & -2000 & 0.05 & --- & --- & --- & --- \\
5 & \pl1{\bf0}& 6 & 0.03 & -2000 & 6 & 0.02 & -2000 & 0.05 & -2000 & 0.06 & --- & --- & --- & --- \\
\myrowcolour
6 & \pl1{\bf0} & 7 & 0.05 & -2000 & 7 & 0.11 & -2000 & 0.07 & -2000 & 0.08 & --- & --- & --- & --- \\
7 & \pl1{\bf0}& 8 & 0.07 & -3000 & 8 & 0.15 & -3000 & 0.08 & -3000 & 0.10 & --- & --- & --- & --- \\
\myrowcolour
8 & \pl1{\bf0}& 9 & 0.11 & -3000 & 9 & 0.04 & -3000 & 0.10 & -3000 & 0.10 & --- & --- & --- & --- \\
9 & \pl1{\bf0}& 10 & 0.19 & -3000 & 10 & 0.10 & -3000 & 0.11 & -3000 & 0.12 & --- & --- & --- & --- \\
\myrowcolour
10 & \pl1{\bf0}& 11 & 0.40 & -4000 & 11 & 0.22 & -4000 & 0.11 & -4000  & 0.11 & --- & --- & --- & --- \\
11 & \pl1{\bf0}& 12 & 0.96 & -4000 & 12 & 0.04 & -4000 & 0.12 & -4000 & 0.15 & --- & --- & --- & --- \\
\myrowcolour
12 & \pl1{\bf0}& 13 & 1.99 & -4000 & 13 & 0.26 & -4000 & 0.13 & -4000  & 0.15 & --- & --- & --- & --- \\
13 & \pl1{\bf0}& 14 & 5.68 & -5000 & 14 & 0.23 & -5000 & 0.14 & -5000 & 0.16 & --- & --- & --- & --- \\
\myrowcolour
14 & \pl1{\bf0}& 15 & 16.5 & -5000 & 15 & 0.53 & -5000 & 0.17 & -5000 & 0.15 & --- & --- & --- & --- \\
15 & \pl1{\bf0}& 16 & 63.52 & -5000 & 16 & 0.42 & -5000 & 0.19 & -5000 & 0.17 & --- & --- & --- & --- \\
\midrule
\multicolumn{13}{c}{\textsc{Dec-Tiger} \cite{nair2003taming}} \\
\hline
1 & \pl1{\bf20} & 2 & 0.01 & \pl1{\bf20} & 2 & 0.01 & \pl1{\bf20} & 0.01 & \pl1{\bf20} & 0.01 & \pl1{\bf20} & 0.19 & \pl1{\bf20} & 0.22 \\
\myrowcolour
2 & \pl1{\bf40} & 4 & 0.01 & \pl1{\bf40} & 3 & 0.01 & \pl1{\bf40} & 0.01 & \pl1{\bf40} & 0.01 & \pl1{\bf40} & 1.07 & \pl1{\bf40} & 0.23 \\
3 & \pl1{\bf60} & 6 & 0.04 & \pl1{\bf60} & 4 & 0.03 & \pl1{\bf60} & 0.01 & \pl1{\bf60} & 0.01 & --- & --- & --- & --- \\
\myrowcolour
4 & \pl1{\bf80} & 8 & 0.06 & \pl1{\bf80} & 5 & 0.04 & \pl1{\bf80} & 0.05 & \pl1{\bf80} & 0.09 & --- & --- & --- & --- \\
5 & \pl1{\bf100} & 10 & 0.10 & \pl1{\bf100} & 6 & 0.12 & \pl1{\bf100} & 0.14 & \pl1{\bf100} & 0.11 & --- & --- & --- & --- \\
\myrowcolour
6 & \pl1{\bf120} & 12 & 0.45 & \pl1{\bf120} & 7 & 0.30 & \pl1{\bf120} & 0.20 & \pl1{\bf120} & 0.10 & --- & --- & --- & --- \\
7 & \pl1{\bf120} & 14 & 3.94 & \pl1{\bf120} & 8 & 0.96 & \pl1{\bf120} & 0.22 & \pl1{\bf120} & 0.14 & --- & --- & --- & --- \\
\myrowcolour
8 & \pl1{\bf160} & 16 & 83.42 & \pl1{\bf160} & 9 & 3.30 & \pl1{\bf160} & 0.22 & \pl1{\bf160} & 0.11 & --- & --- & --- & --- \\
\midrule
\multicolumn{13}{c}{\textsc{MABC} \cite{hansen2004dynamic}} \\
\hline
1 & \pl1{\bf1} & 2 & 0.01 & \pl1{\bf1} & 2 & 0.01 & \pl1{\bf1} & 0.06 & \pl1{\bf1} & 0.03 & \pl1{\bf1} & 0.20 & \pl1{\bf1} & 0.19 \\
\myrowcolour
2 & \pl1{\bf2} & 4 & 0.11 & \pl1{\bf2} & 3 & 0.05 & \pl1{\bf2} & 0.09 & \pl1{\bf2} & 0.08 & --- & --- & --- & --- \\
3 & \pl1{\bf2.99} & 7 & 0.73 & \pl1{\bf2.99} & 4 & 0.49 & \pl1{\bf2.99} & 0.05 & \pl1{\bf2.99} & 0.06 & --- & --- & --- & --- \\
\myrowcolour
4 & \pl1{\bf3.97} & 9 & 0.96 & \pl1{\bf3.97} & 5 & 0.53 & \pl1{\bf3.97} & 0.05 & \pl1{\bf3.97} & 0.06 & --- & --- & --- & --- \\
5 & \pl1{\bf4.95} & 28 & 3.60 & \pl1{\bf4.95} & 6 & 0.57 & \pl1{\bf4.95} & 0.05 & \pl1{\bf4.95} & 0.06 & --- & --- & --- & --- \\
\myrowcolour
6 & \pl1{\bf5.93} & 74 & 208.6 & \pl1{\bf5.93} & 7 & 0.59 & \pl1{\bf5.93} & 0.09 & \pl1{\bf5.93} & 0.08 & --- & --- & --- & --- \\
\midrule
\multicolumn{13}{c}{\textsc{Patrolling} \cite{10.5555/2900929.2900938}} \\
\hline
1 & \pl1{\bf0} & 2 & 0.01 & \pl1{\bf0} & 2 & 0.01 & \pl1{\bf0} & 0.01 & \pl1{\bf0} & 0.01 & \pl1{\bf0} & 0.1 & \pl1{\bf0} & 0.09 \\
\myrowcolour
2 & \pl1{\bf0} & 6 & 0.04 & \pl1{\bf0} & 3 & 0.02 & \pl1{\bf0} & 0.02 & \pl1{\bf0} & 0.02 & --- & --- & --- & --- \\
3 & \pl1{\bf0} & 9 & 3.41 & \pl1{\bf0} & 4 & 1.1 & \pl1{\bf0} & 0.10 & \pl1{\bf0} & 0.07 & --- & --- & --- & --- \\
\myrowcolour
4 & \pl1{\bf0} & 18 & 5.07 & \pl1{\bf0} & 5 & 1.58 & \pl1{\bf0} & 0.01 & \pl1{\bf0} & 0.01 & --- & --- & --- & ---  \\
5 & \pl1{\bf0} & 26 & 17.26 & \pl1{\bf0} & 6 & 3.19 & \pl1{\bf0} & 0.01 & \pl1{\bf0} & 0.01 & --- & --- & --- & ---  \\
\myrowcolour
6 & \pl1{\bf0} & 39 & 66.23 & \pl1{\bf0} & 7 & 4.0 & \pl1{\bf0} & 0.14 & \pl1{\bf0} & 0.26 & --- & --- & --- & --- \\
\bottomrule
\end{tabular}
}
\caption{SSE leader value (V), runtime (T in seconds), value function size (VF size in number of hyperplane pairs). 
\pl1{\bf Bold leader values} indicate the highest return across planning variants for each configuration.
``---'' indicates timeout. 
}
\label{tab:complete-results}
\end{table}

\section{Sensitivity analysis}
\label{appendix:sensitivity}
This appendix presents a sensitivity analysis of the PBVI baselines. The experimental results for both \textbf{H} and \textbf{S} are influenced by the configuration of PBVI parameters. The primary source of variability lies in the sampling subprocess, where several interdependent parameters can be tuned. Our analysis focuses on two key parameters: the number of credible sets and the maximum number of occupancy states allowed per credible set.

\subsection{Sensitivity to the Number of Occupancy States}

We begin by examining the sensitivity of methods \textbf{H} and \textbf{S} to the number of occupancy states. The plots below report the average results over five runs, with the number of credible sets fixed to 1 and the planning horizon set to 4. The number of occupancy states per credible set varies from 1 to 5. The leader’s expansion policy is uniform.

\begin{figure}[h]
    \centering
    \includegraphics[width=\linewidth]{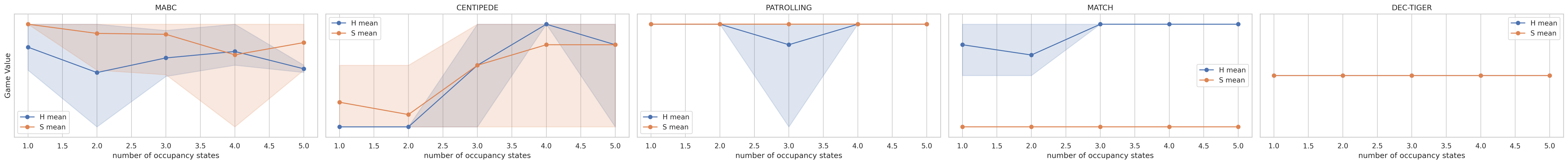}
    \caption{Value sensitivity to the number of occupancy states.}
    \label{fig:occupancy-sensitivity}
\end{figure}

Increasing the number of occupancy states generally improves the value estimates, particularly in environments such as \textsc{Centipede} and \textsc{Match}. However, the magnitude of improvement is not strictly monotonic; for example, in \textsc{Centipede}, performance plateaus despite additional occupancy states. In \textsc{Dec-Tiger} and \textsc{Patrolling}, optimal values are often achieved with a single occupancy state. Moreover, in some cases (e.g., \textsc{Patrolling} with method \textbf{H}), adding redundant occupancy states introduces noise, slightly degrading value estimates.

\subsection{Sensitivity to the Number of Credible Sets}

We now analyse the effect of varying the number of credible sets, with the number of occupancy states per set fixed to 5. As before, we fix the planning horizon to 4 and use a uniform leader expansion policy. Results are averaged across five independent runs.

\begin{figure}[h]
    \centering
    \includegraphics[width=\linewidth]{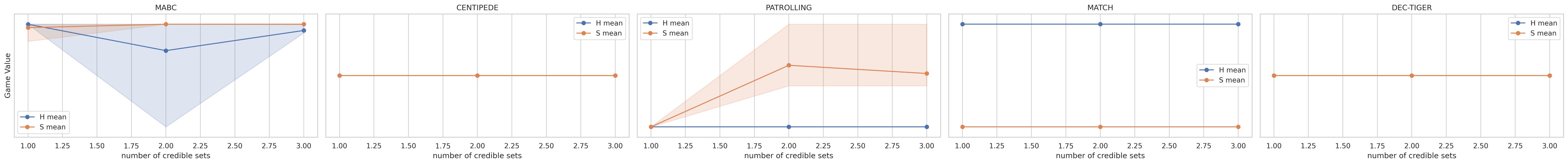}
    \caption{Value sensitivity to the number of credible sets.}
    \label{fig:credible-sensitivity}
\end{figure}

In \textsc{Match}, \textsc{Centipede}, and \textsc{Dec-Tiger}, optimal values are typically reached with the first expanded credible set, and additional sets do not yield further improvements. In contrast, in \textsc{MABC}, adding a second credible set causes value variation for method \textbf{S}. For \textsc{Patrolling}, expanding the set of beliefs allows method \textbf{H} to exploit artefacts in the follower's best responses, leading to overestimated leader policies.

\end{document}